%% LyX 2.1.2 created this file.  For more info, see http://www.lyx.org/.
%% Do not edit unless you really know what you are doing.
\documentclass[11pt,english]{article}

\usepackage[latin9]{inputenc}
\usepackage[active]{srcltx}
\usepackage{verbatim}
\usepackage{calc}
\usepackage{amsthm}
\usepackage{amsmath}
\usepackage{amssymb}
\usepackage{esint}
\usepackage{float}
\usepackage{subfigure}
%\usepackage{citesort} % I have problem with this
%\usepackage{hyperref}	
%\hypersetup{
%   colorlinks=true,
%   citecolor=black,%
%   filecolor=black,%
%   linkcolor=red,%
%   urlcolor=blue
%}

% 1 inch margins when printed from ps
\hoffset=0in \voffset=0in \evensidemargin=0in \oddsidemargin=0in
\textwidth=6.5in \topmargin=0in \headheight=0.0in \headsep=0.0in
\textheight=9in

\providecommand{\keywords}[1]{\noindent \textbf{\textit{Keywords---}} #1}

\providecommand{\MSC}[1]{\noindent \textbf{\textit{AMS Subject Classifications---}} #1 \newline}

\makeatletter
%%%%%%%%%%%%%%%%%%%%%%%%%%%%%% Textclass specific LaTeX commands.
\theoremstyle{plain}
\newtheorem{thm}{\protect\theoremname}
  \theoremstyle{plain}
  \newtheorem{criterion}[thm]{\protect\criterionname}
  \theoremstyle{remark}
  \newtheorem{rem}[thm]{\protect\remarkname}
  \theoremstyle{plain}
  \newtheorem{prop}[thm]{\protect\propositionname}
  \theoremstyle{definition}
  \newtheorem{defn}[thm]{\protect\definitionname}
  \theoremstyle{plain}
  \newtheorem{lem}[thm]{\protect\lemmaname}

%%%%%%%%%%%%%%%%%%%%%%%%%%%%%% User specified LaTeX commands.
\usepackage{constants}

\def\imag{\operatorname{i}}

\def\I{\mathbb{I}}

\def\F{\mathbb{F}}
\def\N{\mathbb{N}}
\def\R{\mathbb{R}}
\def\i{\mbox{i}}
\def\Kt{\widetilde{K}}

\usepackage{mathtools}

\makeatother

\usepackage{babel}
  \providecommand{\criterionname}{Criterion}
  \providecommand{\definitionname}{Definition}
  \providecommand{\lemmaname}{Lemma}
  \providecommand{\propositionname}{Proposition}
  \providecommand{\remarkname}{Remark}
\providecommand{\theoremname}{Theorem}

\begin{document}

\title{Greed is Super: A Fast Algorithm for Super-Resolution}
\author{Armin Eftekhari and Michael B.\ Wakin\footnote{Both authors are with the Department of Electrical Engineering and Computer Science at the Colorado School of Mines. The authors' email addresses are armin.eftekhari@gmail.com and mwakin@mines.edu. AE is the corresponding author. This work was partially supported by NSF CAREER Grant CCF-1149225 and NSF Grant CCF-1409261.
}
}
\maketitle

\begin{abstract}
We present a fast two-phase algorithm for super-resolution with strong theoretical guarantees. Given the low-frequency part of the spectrum of a sequence of impulses, Phase I consists of a greedy algorithm that roughly estimates the impulse positions. These estimates are then refined by local optimization in Phase II.

In contrast to the convex relaxation proposed by Cand\`{e}s et al., our approach has a low computational complexity but requires the impulses to be separated by an additional logarithmic factor to succeed. The backbone of our work is the fundamental work of Slepian et al.\ involving discrete prolate spheroidal wave functions and their unique properties. 
%\note{I think if we say anything about the log worse min separation then we must say something about lower computational cost and then people will ask us to do more serious simulations to prove that. Currently we compare with the SDP implementation of TV min. And I don't know much about the grid implementation of TV min and its possible limitations or benefits.}
\end{abstract}

\keywords{Super-resolution, Parameter estimation, Greedy algorithms, Local optimization, Discrete prolate spheroidal wave functions, Slepian functions}

\MSC{94A12, 94A15, 42A99}

%\note{TO-do: 
%- try different parameters in the code possibly
%}

\section{Introduction}\label{sec:intro}
Many sensing mechanisms have finite resolution or bandwidth.
Provided with the low-frequency content of the signal,
\emph{super-resolution} is then the problem of (partially or completely)
recovering the high-frequency content of the signal. More concretely, here we restrict ourselves to the  problem set up next.

Consider the time interval $\I = [0, 1)$.  	
For integer $K$, $\tau\in\R^K$, and $\alpha\in\R^K$---all unknown---consider the signal $x_{\tau,\alpha}(t)=\sum_{i=1}^K \alpha[i]\cdot \delta(t\ominus \tau[i])$ where $\delta(\cdot)$ is the Dirac delta function  and $\ominus$ denotes subtraction with
wraparound on $\I$.\footnote{Later on, we will slightly modify the notation in the interest of mathematical rigor.} The signal $x_{\tau,\alpha}(\cdot)$  can be  characterized
by its Fourier series coefficients $\{\widehat{x}_{\tau,\alpha}[l]\}_l$, 
%$l\in\mathbb{Z}$,  
where 
\begin{equation*}
\widehat{x}_{\tau,\alpha}[l] = \left\langle x_{\tau,\alpha}(t),e^{\i 2\pi lt} \right\rangle_{\I},\qquad l\in\mathbb{Z}.
\end{equation*}

For a cut-off frequency $f_C \in \mathbb{N}$, we wish to recover $K$,  $\tau$, and $\alpha$ from the low-frequency content of $x_{\tau,\alpha}(\cdot)$, namely the coefficients $\{\widehat{x}_{\tau,\alpha}[l]\}$, $|l|\le f_C$. Equivalently, through an ideal low-pass filter with
cut-off frequency $f_C$, we observe $y(t) :=
\sum_{i=1}^K \alpha[i]\cdot  D_{f_C} (t\ominus \tau[i])$  and wish to recover the unknowns. Here, $D_{f_C} (\cdot)$ is the
Dirichlet kernel\footnote{The Dirichlet kernel is sometimes referred to as the ``digital'' sinc.} of width approximately $1/f_C$ in time. 

\subsection{Our Approach}
We focus on estimating the positions $\tau$, since an estimate of the amplitudes $\alpha$ can subsequently be obtained using least-squares. When $K=1$, the matched
filter (e.g.,  \cite{Eftekhari2013}) % removing cite to save space
provides the optimal solution to the problem. Our approach is to generalize the matched filter as follows.

We propose to iteratively find the largest peak of the measured signal $y(\cdot)$ and, in order to avoid falsely detecting nearby points in subsequent iterations, erase the neighborhood of each peak. Unfortunately, because of the heavy tail and slow decay of the Dirichlet kernel, this approach is only effective when the impulses are widely separated, the noise is negligible, and the dynamic range $\max_i\left|\alpha[i]\right|/\min_i\left|\alpha[i]\right|$ is small.

To overcome this setback, we first \emph{filter}\footnote{Filtering a signal $a(\cdot)$ with another signal $b(\cdot)$ (both in $L_2(\I)$) produces their  \emph{circular convolution} $[a\circledast b](\cdot)=\int_{\I} a(t)\cdot b(\cdot\ominus t)\, dt \in L_{1}(\I)$. } the measurement signal $y(\cdot)$ with a \emph{kernel} $g_{\sigma,f_C}(\cdot)$ that is band-limited to $[-f_C,f_C]$ in frequency and decays rapidly outside of the (typically small) interval $[-\sigma,\sigma]$ in time. More specifically, set  $N=2f_C+1$ for short. After setting $\sigma=\frac{c}{N}$ for a factor $c$, our approach is to first filter $y(\cdot)$ with $g_{\sigma,N}(\cdot)$ and then iteratively select the peaks of the output of the filter, while removing the neighborhood of each peak to avoid false detections (as outlined in the previous paragraph).

The obtained estimate of the position vector $\tau$ can then be refined by posing super-resolution as a non-convex program, which we solve (using the projected Newton's method) with the output of the greedy search above as the initial point.  

For the choice of kernel $g_{\sigma,N}(\cdot)$, we recommend the top \emph{discrete prolate spheroidal wave function} (DPSWF)~\cite{Slepian1978}.\footnote{DPSWFs are also known as the ``Slepian functions'' in honor of David S.\ Slepian.} Given $\sigma\in (0,\frac{1}{2})$ and integer $N=2f_C+1$, the top DPSWF $\psi_{0,\sigma,N}(\cdot)$ is optimal in that, among all signals supported on $\I$ in time and $[-f_{C},f_{C}]$ in frequency, $\psi_{0,\sigma,N  }(\cdot)$ is maximally concentrated (in $L_2$ sense) on the small interval $[0,\sigma]\cup [1-\sigma,1)$ in time (see Figure \ref{fig:fig1a}). 
% In fact, the energy of $\psi_{0,\sigma,f_C}(\cdot)$ is exponentially small outside of this interval.
%

The resulting ``two-phase'' algorithm is very fast, in part because fast and convenient  means for generating DPSWFs exist \cite{Osipov2013}. Moreover,  in the absence of noise, this algorithm exactly recovers the impulse positions. As the noise level increases, the quality of the output gradually deteriorates. We will  thoroughly verify these claims in later sections. 

% This is summarized in Algorithm 1, where $1_{I}$ denotes the indicator function on the interval $I$.
% Our estimate of $\tau$ can then be refined by posing super-resolution as a non-convex program, which we then solve (using Newton's method or a simple gradient descent method) with the output of the greedy search above as the initial point. % Although this second step is slower, it appears that our initial estimate is typically close enough to $\tau$ that the output of the descent algorithm approaches $\tau$ without getting trapped in wrong local or global minimizers.
As an example, let the cut-off frequency $f_{C}=50$, and set 
$$ \tau=[0.2995 ~ 0.3663 ~ 0.4332 ~ 0.5000 ~ 0.5668 ~ 0.6337 ~ 0.7005]^T,\qquad \mbox{(positions)}$$
$$\alpha=[10 ~ -1 ~ 1 ~ -3 ~ 2 ~ -5 ~ 2]^T.\qquad \mbox{(amplitudes)}$$ 
The measured (low-frequency) signal $y(\cdot)$ is depicted in Figure  \ref{fig:fig1b}. Note that the impulses are separated by roughly only $3/f_C$. We then set $\sigma=\frac{3/2}{2f_C+1}=0.0149$ for the top DPSWF $\psi_{0,\sigma,N}(\cdot)$. In this case, the greedy step produces an estimate $\widehat{\tau}$ which satisfies $\|\widehat{\tau}-\tau\|_{\infty}\le 0.001$. This estimate is then refined via Newton's method to recover $\tau$ perfectly up to machine precision.
\begin{figure}[H]
\begin{center}

\subfigure[]{
\includegraphics[width=3in]{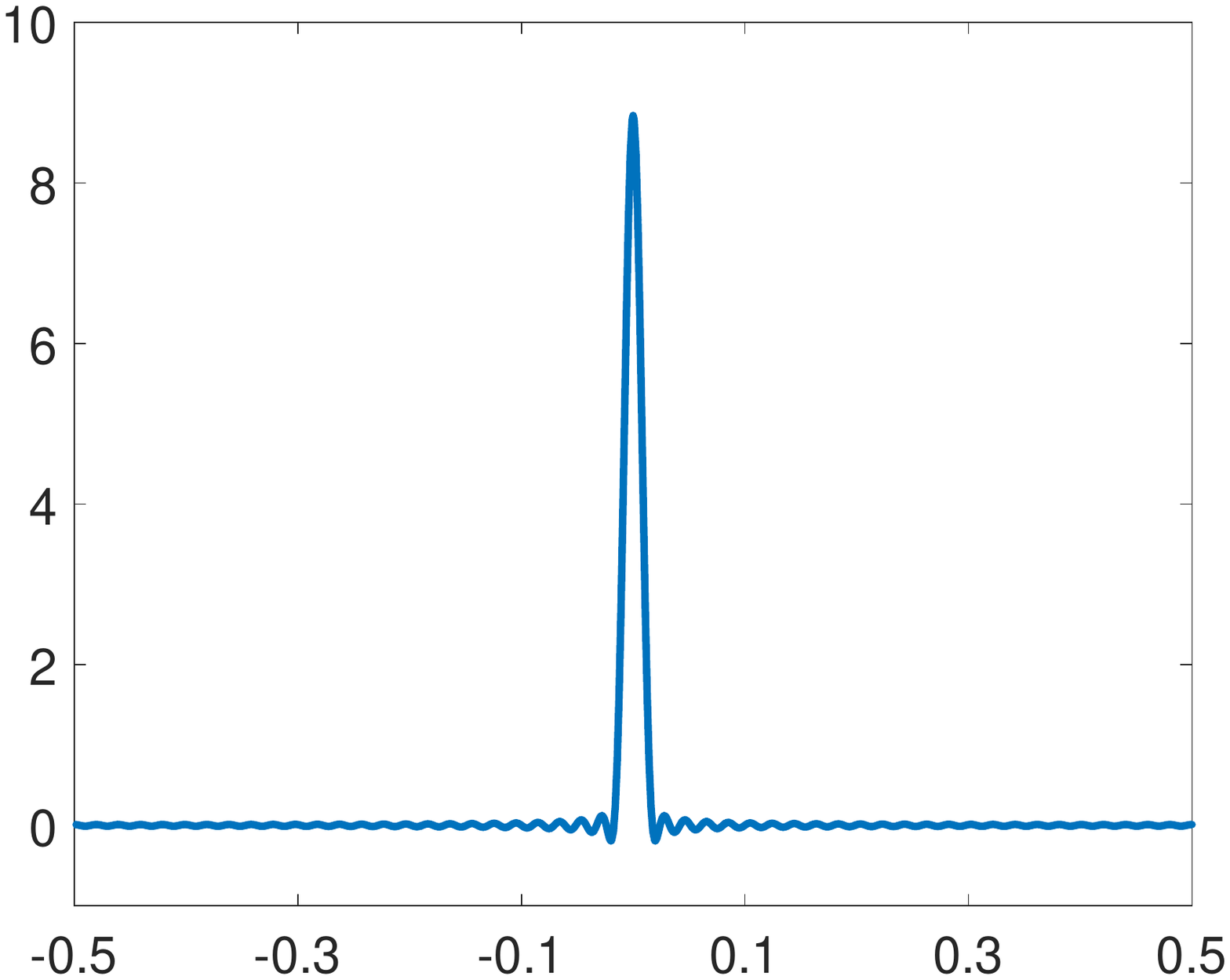}
\label{fig:fig1a}
}

\subfigure[]{
\includegraphics[width=3in]{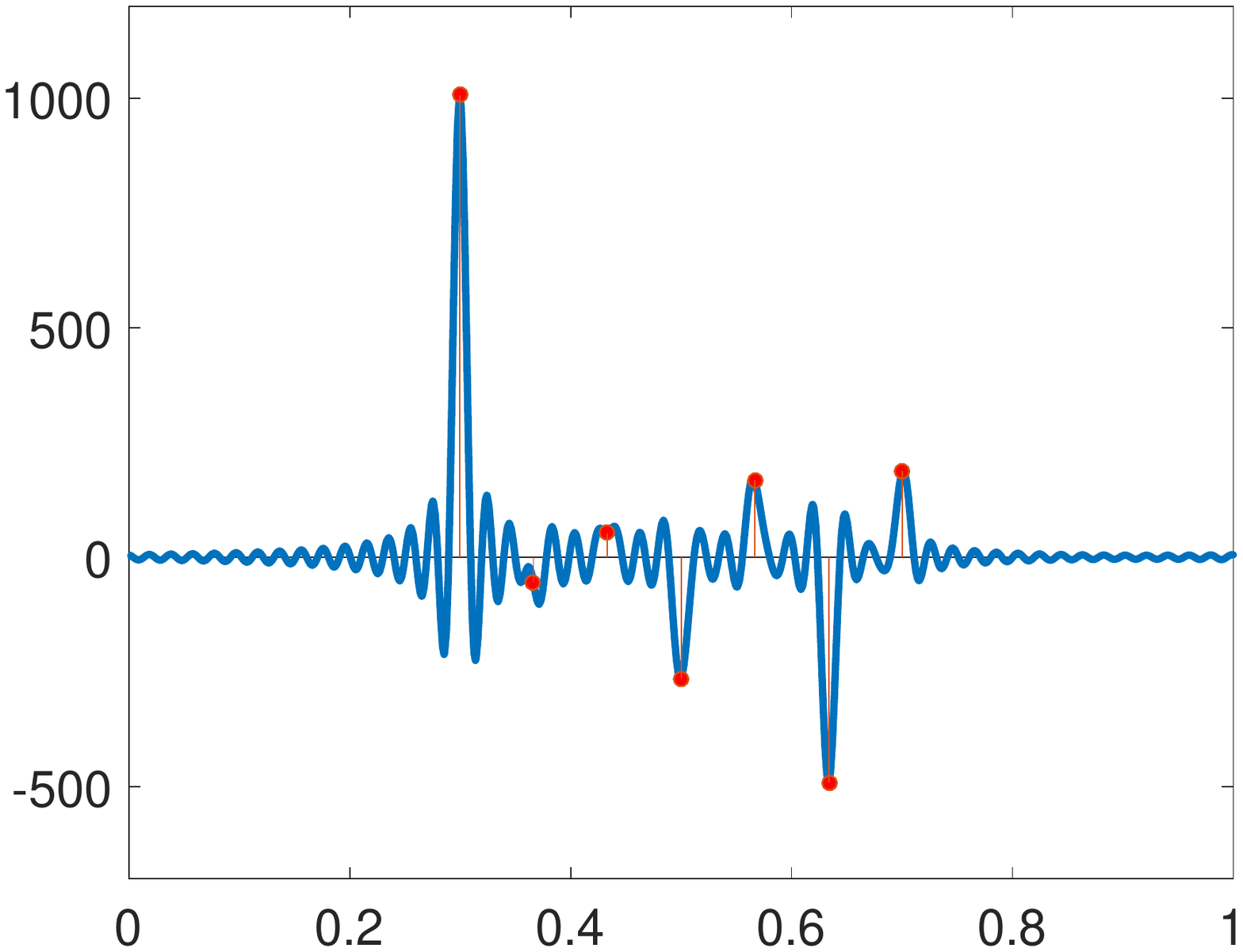}
\label{fig:fig1b}
}
\subfigure[]{
\includegraphics[width=3in]{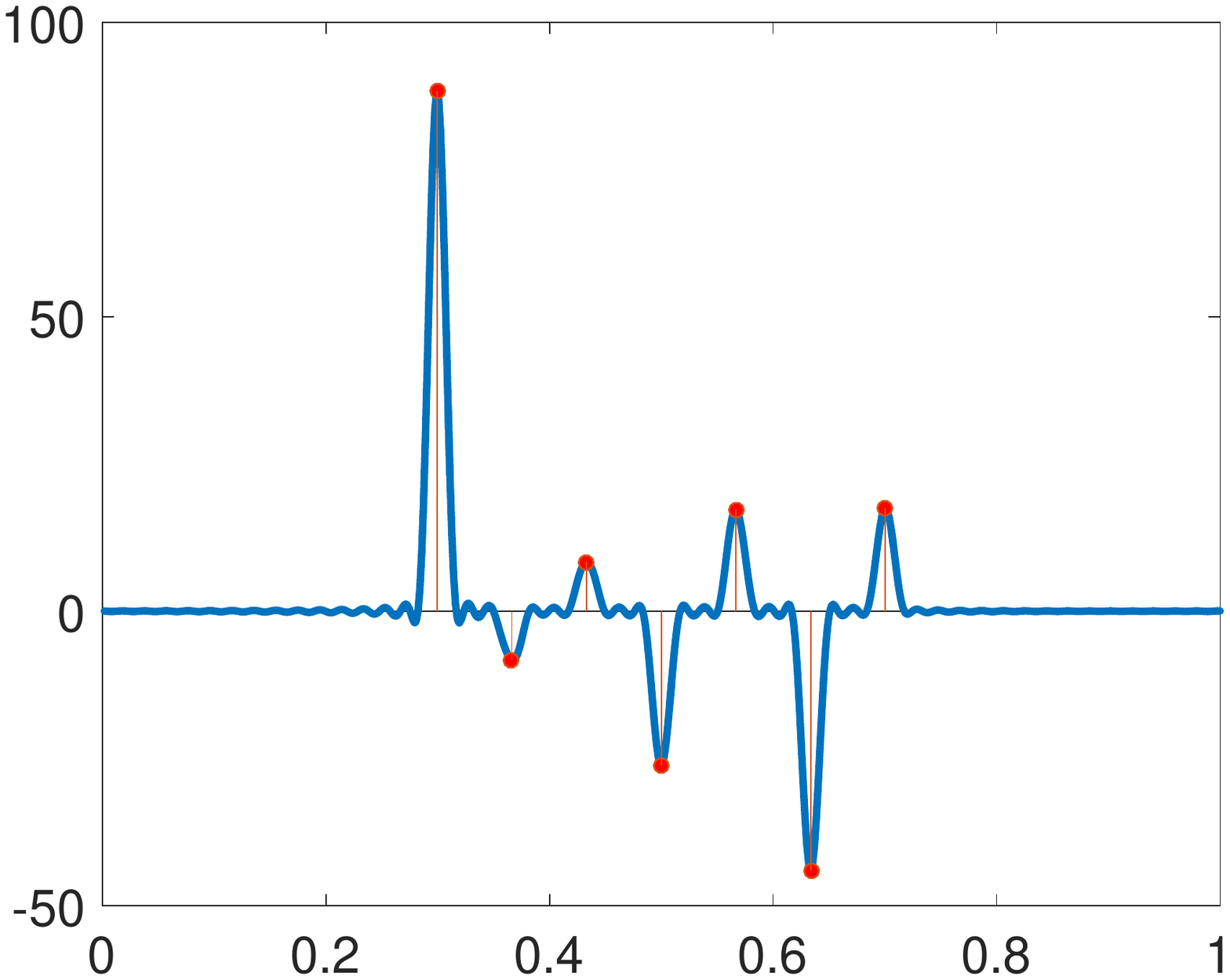}
\label{fig:fig1c}
}

%\vspace{-0.2in}
\caption{(a) Graph of the top DPSWF $\psi_{0,\sigma,N}(\cdot)$ versus time for bandwidth $f_C=50$ and \emph{effective} duration of approximately $2\sigma = \frac{3}{2f_C+1}$ in time. Note the sharp decay away from the origin.  (For clarity, the domain here is $[-1/2,1/2)$ instead of $[0,1)$ in the text.)  (b) An example of a low-resolution signal (blue) and original impulse positions (red). (c) Signal in part (b) filtered by  $\psi_{0,\sigma,f_C}(\cdot)$. Note that the peaks provide a good estimate of the unknown impulse positions. The two-phase algorithm in this paper builds on this insight to return the precise location of impulses.     The horizontal axis in all graphs show the time-domain.}
\label{fig:fig1}
%\vspace{-0.2in}
\end{center}
\end{figure}

There are different ways in which this super-resolution algorithm may be generalized.  An extension to higher dimensions is of interest in, say, image processing, and replacing the Dirac delta function with a general template establishes a connection with the broad existing literature on \emph{deconvolution} \cite{Meister2009}.

\subsection{Organization}
This paper is organized as follows. Section \ref{sec:Setup 1} gives a formal statement of the problem and collects the notation. The two-phase algorithm for super-resolution is developed in Sections \ref{sec:A-greedy-algorithm signal domain} and \ref{sec:Phase-II:-Local}. Phase I consists of a greedy algorithm that initializes the local optimization in Phase II. The final product is presented in Algorithms I and II (on pages \pageref{fig:alg I} and \pageref{fig:alg II}, respectively) and is accessible even without reading the rest of the paper. The MATLAB code for the two-phase algorithm is also available online.\footnote{http://inside.mines.edu/$\sim$mwakin/publications.html$\#$software}

Following the publication of \cite{Candes2014}, a steady stream of  good research has gradually enriched our knowledge of this topic. Among others, \cite{Candes2014} was followed by  \cite{Tang2013,Duval2015,Azais2015,Fyhn2013,
Demanet2013,Liao2014,Fannjiang2012}. A brief survey and comparison is presented in Section \ref{sec:prior art SUPER}, which is by no means exhaustive. We remark that an excerpt of this work previously appeared in \cite{Eftekhari2013a}.

The theoretical guarantees for our algorithm  consist of Proposition \ref{thm:algorithm 1} for Phase I and Theorem \ref{thm:grad des algo} for Phase II (and are proved in Sections \ref{sec:proof of phase I} and \ref{sec:Proof-of-Theorem grad descent}, respectively). Our results are asymptotic and hold as $f_C\rightarrow\infty$. Naturally, these results rely heavily on certain asymptotic ($N=2f_C+1\rightarrow\infty$) properties of the kernel $g_{\sigma,N}(\cdot)$ (that we identify and collect in Criteria \ref{cri:(Sharp-decay-of}, \ref{lem:corr decay away}, and \ref{fact: slow decay in bound-1}). 
%Moreover, our results are asymptotic and hold in the limit of $f_C\rightarrow\infty$. 
The decision to opt for asymptotic guarantees was  driven by the asymptotic nature of the existing machinery to study the properties of DPSWFs (the recommended kernel here). Indeed, based on empirical observations and preliminary analysis, we conjecture that DPSWFs satisfy these  asymptotic criteria; formally proving this conjecture remains a topic of ongoing work.\footnote{We may add that the asymptotic properties of DPSWFs are not fully understood in the particular  setting studied here (where $\sigma \propto \frac{1}{f_C}$ as opposed to constant $\sigma$ as in \cite{Slepian1978,Slepian1965}). We also remark that, if not DPSWFs, it remains conceivable that some other functions may satisfy these or similar criteria.} Finally, despite the asymptotic nature of our results,  the proposed two-phase algorithm is successful in simulations with cut-off frequency $f_C$ as low as $50$ (as we saw in the example earlier).

Lastly, to keep this paper short, we deferred the elementary calculations to the accompanying document \cite{support}.

\subsection{Contributions}\label{sec:contributions}
In this work, we develop and analyze a two-phase algorithm to resolve impulses from low-pass frequency information. Greedy algorithms for super-resolution have already appeared in the literature  \cite{Fannjiang2012}. However, we are convinced that the present method offers certain advantages (particularly in terms of  computational complexity) that are absent from the existing literature (see Section \ref{sec:prior art SUPER}). 

We tend to  hold a similar conviction regarding  the theoretical contribution of this work. In this aspect, precedents for a two-phase approach based on a good initialization  followed by local optimization have appeared in other contexts \cite{Keshavan2009,Candes2015}. Nevertheless, to the best of our knowledge, this is the first work that develops theoretical guarantees for using a  second-order optimization algorithm in the second phase. Rather unfortunately, that laborious task is partially responsible for the large volume of this work.

Another aspect of this work is its use of prolate functions \cite{Hogan2012,Osipov2013,Slepian1978}, and particularly the top DPSWF $\psi_{0,\sigma,N}(\cdot)$. Originally published in a series of landmark papers in 1960s and 1970s, prolate functions---designed as a highly localized basis for band-limited functions---largely influenced  harmonic analysis    for years that followed. As shown here and in \cite{Davenport2012}, prolate functions have the potential to play an important role in a variety of problems in modern signal processing as well.  Perhaps another contribution of our work is then to ignite the interest of readers in these functions and their uses. 

\section{Problem Setup\label{sec:Setup 1}}

Consider the interval $\I=[0,1)$ in time. 
%As usual, $+$ and $-$
%stand for addition and subtraction on the real line. However, 
 We 
let $\oplus$ and $\ominus$ denote the addition and subtraction operators
modulo one. For example, for $\rho_{1},\rho_{2}\in\I$,
\begin{equation}
d(\rho_{1},\rho_{2}):=\min\left(\rho_{1}\ominus\rho_{2},\rho_{2}\ominus\rho_{1}\right),\label{eq:wrip around}
\end{equation}
 is the \emph{wraparound distance} between $\rho_{1}$ and $\rho_{2}$ 
(see e.g., \cite{Candes2014}).  We study (atomic) measures on $\I$ of the form
\begin{equation}
x_{\tau,\alpha}=\sum_{i=1}^{K}\alpha[i]\cdot \delta_{\tau[i]},\label{eq:xtaualpha}
\end{equation}
for integer $K$, vector of locations $\tau\in\I^{K}$ (with distinct entries), and vector
of amplitudes $\alpha\in\mathbb{R}^{K}$. Here, $\delta_{\tau[i]}$
is the Dirac  measure  translated by $\tau[i]\in\I$. Upon existence, $x_{\tau,\alpha}$ can be completely characterized
by its Fourier series  $\widehat{x}_{\tau,\alpha}$, that
is
\[
x_{\tau,\alpha}=\sum_{l=-\infty}^{\infty}\widehat{x}_{\tau,\alpha}[l]\cdot E_{l}(\cdot),
\]
with
\begin{equation}
\widehat{x}_{\tau,\alpha}[l]=\left\langle x_{\tau,\alpha},E_{l}(\cdot)\right\rangle =\sum_{i=1}^{K}\alpha[i]\cdot e^{-\imag2\pi l\tau[i]},\qquad l\in\mathbb{Z}.\label{eq:F coeffs of train}
\end{equation}
\begin{equation*}
E_l(t):=e^{\i 2\pi l t},\qquad t\in\mathbb{R}.
\end{equation*}
%Here, $E_{l}(t)=e^{\imag2\pi lt}$ (for all $t$) is a complex exponential. 
%and the inner product
%above is computed on $\mathbb{I}$. \note{Does the Fourier series belong to some space? Is there an inverse transform from coefficients to the measure given that the series might not converge? We don't use the answer but seems like something we should know.}
{Inner products everywhere are computed on $\I$ in this work.}

Given only the (possibly noisy) low-frequency content of $x_{\tau,\alpha}$,
we wish to infer the number of impulses $K$, positions $\tau\in \I^K$, and amplitudes $\alpha\in\R^K$. More
specifically, for cut-off frequency $f_{C}\in\N$, if 
\begin{equation}\label{eq:main}
y(\cdot):=\left[\mathcal{Q}_{\F}\left(x_{\tau,\alpha}+n\right)\right](\cdot)\in L_2(\I)
\end{equation}
denotes the (low-frequency) measurement signal, we wish to recover the unknowns: $K$, $\tau$, and $\alpha$.  Here, $\mathcal{Q}_{\F}(\cdot):L_2(\I)\rightarrow L_2(\I)$
is the ideal low-pass filter that restricts the frequency content
of its input signal to $\F:=[-f_{C}:f_{C}]=\{-f_C,-f_C+1,\cdots,f_C\}$.
%\footnote{As usual, we will use
%the convention that $[a:b]:=\{a,a+1,\cdots,b\}$ for integers $a\le b$.} 
  Also, $n(\cdot)\in L_{\infty}(\I)$ is the
(real-valued) noise signal. 
%Alternatively, we may use the frequency to write that 
% can consider the Fourier
%series representation of the signals by writing that

The low-pass measurement signal $y(\cdot)$ and noise $n(\cdot)$ may be written as
\begin{equation}
y(\cdot)=\sum_{l=-f_{C}}^{f_{C}}\widehat{y}[l]\cdot E_{l}(\cdot),\label{eq:meas}
\end{equation}
\begin{equation}
n(\cdot)=\sum_{l=-\infty}^{\infty}\widehat{n}[l]\cdot E_{l}(\cdot)
%\in L_{\infty}(\I)
,\label{eq:noise}
\end{equation}
where $\widehat{y}=\{\widehat{y}[l]\}_l$ and $\widehat{n}=\{\widehat{n}[l]\}_l$ are the corresponding Fourier series. 
%of the measurement and noise signals. 
Then, \eqref{eq:main} may equivalently be written as 
\begin{equation}
\widehat{y}[l]=\widehat{x}_{\tau,\alpha}[l]+\widehat{n}[l],\qquad l\in\F=[-f_{C}:f_{C}].\label{eq:meas freq}
\end{equation}
%represents the noisy low-frequency measurements of $x_{\tau,\alpha}%$.
To reiterate, given $y(\cdot)$ or its  nonzero Fourier series coefficients $\{\widehat{y}[l]\}_{l\in\F}$,
we wish to recover $K$, $\tau$, and $\alpha$.

\subsection{Notation}
\label{sec:notation SUPER}
Before going any further, let us collect the notation used throughout this paper. Absolute constants are denoted by $C_{1},C_{2},\cdots$. In addition, $C$ denotes  a constant that  might change in each appearance. We will occasionally use the convention that $[a:b]=\{a,a+1,\cdots,b\}$ for integers $a\le b$. 
%The standard asymptotic notation is freely used in this work  and is reviewed 
%next for the reader's convenience.
%\begin{itemize}
%\item For functions $a,b:\mathbb{C}\rightarrow\mathbb{C}$, $a(\theta)=O(b(\theta))$
%asymptotically as $\theta\rightarrow\infty$ if there exists positive
%constants $\Cl{Odef}$ and $\Cl{Odef2}$ such that
%\[
%\left|a(\theta)\right|\le\Cr{Odef}\cdot \left|b(\theta)\right|,\qquad|\theta|>\Cr{Odef2}.
%\]

The standard asymptotic notation is freely used in this work  and is reviewed 
next for the reader's convenience.
\begin{itemize}
\item For functions $a,b:\mathbb{C}\rightarrow\mathbb{C}$, $a(\theta)=O(b(\theta))$
asymptotically as $\theta\rightarrow\infty$ if there exists positive
constants $\Cl{Odef}$ and $\Cl{Odef2}$ such that
\[
\left|a(\theta)\right|\le\Cr{Odef}\cdot \left|b(\theta)\right|,\qquad|\theta|>\Cr{Odef2}.
\]
% In words, we say that $a(\cdot)$ is asymptotically ``big-Oh'' of $b(\cdot)$ at infinity.
\item We use the conventions \cite{Knuth1976} that
\[
a(\theta)=\Omega\left(b(\theta)\right)\Longleftrightarrow b(\theta)=O\left(a(\theta)\right),
\]
\[
a(\theta) =\Theta\left(b(\theta)\right) \Longleftrightarrow 
a(\theta) = O\left(b(\theta)\right) \mbox{ and } a(\theta) = \Omega\left(b(\theta)\right),
\]
asymptotically as $\theta\rightarrow\infty$.
% In words, we say that $b(\cdot)$ is asymptotically ``big-Omega'' of $a(\cdot)$ at infinity.
\item Lastly, $a(\theta)=o(b(\theta))$ asymptotically as $\theta\rightarrow\infty$
if, for every $\epsilon>0$, there exists $\delta=\delta(\epsilon)$
such that
\[
|a(\theta)|\le\epsilon\cdot |b(\theta)|,\qquad |\theta|>\delta(\epsilon).
\]
% In words, we say that $a(\cdot)$ is asymptotically ``little-Oh'' of $b(\cdot)$ at infinity.
In particular, as long as $\lim_{\theta\rightarrow \infty} b(\theta)\ne 0$, we have that
\[
a(\theta)=o\left(b(\theta)\right)\Longleftrightarrow\lim_{\theta\rightarrow\infty}\frac{a(\theta)}{b(\theta)}=0.
\]
\end{itemize}

Recall that $\I=[0,1)$. The natural norms on $L_{2}(\I)$,
$L_{1}(\I)$, and $L_{\infty}(\mathbb{I})$ are denoted by the shorthands  $\|\cdot\|_{L_{2}}$,
$\|\cdot\|_{L_{1}}$, and $\|\cdot\|_{L_{\infty}}$, respectively.
For the wraparound metric $d(\cdot,\cdot)$ defined in (\ref{eq:wrip around}), the Hausdorff distance between
sets  $\mathbb{A}$ and $\mathbb{B}$ (both subset of $\mathbb{I}$) is defined as
\begin{equation}\label{eq:Hausdoff dist}
d(\mathbb{A},\mathbb{B}):=\max\left\{ \sup_{a\in \mathbb{A}}\inf_{b\in \mathbb{B}}d(a,b)\,,\,\sup_{b\in \mathbb{B}}\inf_{a\in \mathbb{A}}d(a,b)\right\} .
\end{equation}
Effectively, $d(\mathbb{A},\mathbb{B})$ controls the distance from any point on $\mathbb{A}$ to $\mathbb{B}$ and vice verse.  
With some abuse of notation, we define the Hausdoff distance of two
vectors in the natural way (as the distance between the finite sets formed
by their entries).

Throughout, $\circledast$ stands for circular convolution, which corresponds to point-wise multiplication in the Fourier series domain. 
%We will also use the convention that $[a:b]:=\{a,a+1,\cdots,b\}$ for integers $a\le b$. Cardinality of a finite set $\mathbb{A}$ is denoted with $|\mathbb{A}|$. 
%The rows are columns of a matrix $A$ are denoted as  $A[i,:]$ and $A[:,j]$, respectively.  $\mbox{diag}(a)$ is the diagonal matrix formed from vector $a$. 
%Here and elsewhere, $A\succ B$ means
%that $A-B$ is positive definite (for symmetric and conformal matrices
%$A$ and $B$). Also, $A\succcurlyeq B$ is defined similarly.  
 Lastly, to unburden the notation, we occasionally suppress the dependence on different quantities if there is no ambiguity. 

%%\paragraph{Organization: }
%%
%%This write-up is organized as follows.
%
%
\section{Phase I: Initialization \label{sec:A-greedy-algorithm signal domain}}

Rather than recovering the amplitudes $\alpha$, we focus
on estimating the positions of impulses $\tau$. Indeed, given
an estimate of the positions, an estimate for the amplitudes
 readily follows from a simple least-squares calculation.

In this section, we present a simple iterative algorithm that, given the noisy
Fourier coefficients of $x_{\tau,\alpha}$ on the interval
$\F=[-f_{C}:f_{C}]$, approximately recovers the position vector $\tau$ under a certain
separability condition (that we will specify shortly). 

This algorithm
 requires a band-limited \emph{kernel}. %that is band-limited to $\F%$.
More specifically, let $N:=2f_{C}+1=|\F|$
for short. Then, for $\sigma\in (0,\frac{1}{2})$, we assume that the kernel
$g_{\sigma,N}(\cdot)=g(\cdot;\sigma,N)\,:\,\I\rightarrow\mathbb{R}$
is band-limited to $\mathbb{F}$ and decays sharply away from the
origin so that $|g_{\sigma,N}(t)|$ is small when $t\in [\sigma,1-\sigma]$. The next statement formally lists the
requirements on the kernel.
\begin{criterion}
\label{cri:(Sharp-decay-of}For integer $f_{C}$, set $N=2f_{C}+1$
for short. For $c=c(N)>0$, let $\sigma=\frac{c}{N}<\frac{1}{2}$.
The kernel $g_{\sigma,N}(\cdot)\,:\,\I\rightarrow\mathbb{R}$
satisfies the following requirements.
\begin{itemize}
\item First, $g_{\sigma,N}(\cdot)$ has unit-energy, $\|g_{\sigma,N}(\cdot)\|_{L_{2}}=1$,
and is band-limited to $\F=[-f_{C}:f_{C}]$ (so that the Fourier coefficients $\{\widehat{g}_{\sigma,N}[l]\}$ vanish when $l\notin\F$). 

\item Second, $g_{\sigma,N}(\cdot)$ is symmetric about $\frac{1}{2}$ (so that 
$g_{\sigma,N}(t)=g_{\sigma,N}(1-t)$ for every $t\in\I$).

\item Lastly, as $c,N\rightarrow\infty$ with $c=c(N)=O(\log N)$, the decay of $g_{\sigma,N}(\cdot)$ away from the origin is asymptotically quantified as
%
%\footnote{To be clear, this asymptotic criterion means that, for any $\epsilon>0$
%and any $A>0$, there exist factors $B$ and $C$ and large enough
%$N_{0}$---all might depend on $\epsilon$ and $A$---such that the
%following holds for any $N>N_{0}$:
%\[
%A\le c=c(N)<\epsilon N,
%\]
%\[
%\left|g_{\sigma,N}(t)\right|\le B\frac{e^{-\Cl{decay}c}}{\sqrt{N}\sin\left(\pi t\right)},\qquad\frac{c}{N}\le t\le\frac{1}{2},
%\]
%\[
%g_{\sigma,N}(0)\ge C\sqrt{\frac{N}{c}}.
%\]
% %
%}
\[
\left|g_{\sigma,N}(t)\right|=\frac{O(e^{-\Cl{decay}c})}{\sqrt{N}\sin\left(\pi t\right)},\qquad\sigma\le t\le\frac{1}{2},
\]
\[
g_{\sigma,N}(0)=\Omega(1/\sqrt{\sigma})=\Omega\left(\sqrt{\frac{N}{c}}\right),
\]for some constant $\Cr{decay}>0$. 
\end{itemize}

\begin{comment}
Here, $h(\sigma,N)$ depends only on $\sigma$ and $N$.
\end{comment}
\end{criterion}
%\begin{rem}
%In Criterion \ref{cri:(Sharp-decay-of}, $c$ might be set to $c=\log N$
%for instance: $\lim_{N\rightarrow\infty}\frac{c}{N}=\lim_{N\rightarrow\infty}\frac{\log N}{N}=0$.
%\end{rem}
%\begin{rem}
%The top DPSWF $\psi_{0,\sigma,f_C}(\cdot)$ with $\sigma= \frac{c}{2f_C+1}$ (see Section \ref{sec:intro}) can be shown empirically to satisfy
%the requirements in Criterion \ref{cri:(Sharp-decay-of}. \note{Re showing the properties empirically, I did some very limited simulations with MATLAB which supports our claim. It's  very messy because of the freedom over the ranges of the parameters. It's as though different trends are clear for different ranges of c and N. I think we do show them theoretically in the other work with Zhihui... Another possibility is to loosen our statement here to something like we strongly suspect that these properties hold and that proving them is the subject of ongoing work. But that's OK because the final product works so well in simulation... I suppose what we're really doing here is skip the big step of showing these properties in simulation/theory. So  not really sure what we should do. We can also wait see if reviewers find that to be a problem. } The systematic study of the properties of $\psi_{0,\sigma,f_C}(\cdot)$ in this setting is the subject of an ongoing project.   
%\end{rem}
 As mentioned earlier, the success of the Phase I algorithm, summarized in Figure \ref{fig:alg I}, hinges on Criterion
\ref{cri:(Sharp-decay-of}. Throughout this section, we assume
the existence of a kernel $g_{\sigma_{1},N}(\cdot)$ for which Criterion
 \ref{cri:(Sharp-decay-of} holds with $c=c_{1}=c_{1}(N)$ and $\sigma_{1}=\frac{c_{1}}{N}$. 

%\vspace*{\bigskipamount}
\begin{center}
\begin{figure}[h]
\begin{center}

\fbox{\begin{minipage}[t]{.9\columnwidth}%
%\vspace{\bigskipamount}
\textbf{Algorithm I (initialization)}

\vspace*{\bigskipamount}

\textbf{Input:}
\begin{itemize}
\item A cut-off frequency $f_{C}\in\mathbb{N}$ and a measurement signal
$y(\cdot)$ that is band-limited to $\F=[-f_{C}:f_{C}]$ (see (\ref{eq:meas freq})).
\item With $N=2f_{C}+1$ and $0<\sigma_{1}<\frac{1}{2}$, a kernel $g_{\sigma_{1},N}(\cdot)$ (see Criterion \ref{cri:(Sharp-decay-of}).
\item A threshold $\eta>0$.
\end{itemize}
\textbf{Output: }
\begin{itemize}
\item An estimate of $K$ and $\tau$, denoted here by
$\widetilde{K}$ and $\tau^{0}\in\mathbb{I}^{\widetilde{K}}$, respectively.
\end{itemize}
\vspace{2 mm}
\begin{enumerate}
\item Compute $z_{\sigma_{1}}^{1}(\cdot):=(g_{\sigma_{1,N}}\circledast y)(\cdot)$. Here, $\circledast$ stands for circular convolution. 
\item Set $j=1$. As long as $\|z_{\sigma_{1}}^{j}(\cdot)\|_{L_{\infty}}>\eta$,
repeat the following (where $d(\cdot,\cdot)$ is the wraparound metric defined in \eqref{eq:wrip around}):

\begin{enumerate}
\item $\tau^{0}[j]=\arg\max_{t\in\mathbb{I}}\left|z_{\sigma_{1}}^{j}(t)\right|$.
\item $z_{\sigma_{1}}^{j+1}(t)=\begin{cases}
z_{\sigma_{1}}^{j}(t) & d\left(t,\tau^{0}[j]\right)>2\sigma_{1},\\
0 & d\left(t,\tau^{0}[j]\right)\le2\sigma_{1}.
\end{cases}$
\item $j\leftarrow j+1$.
\end{enumerate}
\item Set $\widetilde{K}=j-1$ to be the estimate of number of impulses
$K$. Also, return $\tau^{0}\in\mathbb{I}^{\widetilde{K}}$ as the
estimate of their locations $\tau$. \end{enumerate}

\end{minipage}}
\end{center}

\caption{Algorithm I (initialization)}
\label{fig:alg I}
\end{figure}
%\vspace*{\bigskipamount}
\end{center}

%
%\paragraph{Discussion of Algorithm I:}

In a nutshell, Algorithm I iteratively finds the largest peaks of
$z_{\sigma_{1}}^{1}(\cdot)=[g_{\sigma_{1,N}}\circledast y](\cdot)$, and in order to avoid
(falsely) detecting the nearby points in the next iteration, erases
the neighborhood (of radius $2\sigma_{1}$) of each peak. In fact,
Algorithm I may loosely be considered as the extension of \emph{orthogonal
matching pursuit }to a continuous domain \cite{Mallat1993}.

As we describe next, under Criterion \ref{cri:(Sharp-decay-of}, Algorithm
I returns a reliable estimate of $K$ and $\tau$ as long as the impulse locations
$\{\tau[i]\}_i$ are well-separated and the dynamic range of $x_{\tau,\alpha}$
is not too large. To be concrete, the\emph{ separation }of $\tau$
is defined as follows \cite{Candes2014}: \begin{equation}\label{eq:sep}
\mbox{sep}(\tau):=\min_{\substack{i,j\in[1:K]\\ i\ne j}}d\left(\tau[i],\tau[j]\right).
\end{equation}In addition, we define the dynamic range of $x_{\tau,\alpha}$ as
follows:
\begin{equation}
\mbox{dyn}(x_{\tau,\alpha}):=\frac{\max_{i\in[1:K]}\left|\alpha[i]\right|}{\min_{i\in[1:K]}\left|\alpha[i]\right|}.\label{eq:dyn range def}
\end{equation}
\-The performance guarantee for Algorithm 1 is summarized below and
proved in Section \ref{sec:proof of phase I}.
\begin{prop}
\label{thm:algorithm 1}\textbf{\emph{[Performance of Algorithm I]}} Fix a measure $x_{\tau,\alpha}$ with number
of impulses $K$, vector of positions $\tau\in\I^{K}$ with distinct entries, and vector
of amplitudes $\alpha\in\mathbb{R}^{K}$ defined as in (\ref{eq:xtaualpha}). With the cut-off frequency $f_{C}\in\N$ and $\F=[-f_{C}:f_{C}]$,
let $y(\cdot)$ be the (possibly noisy) measurement signal band-limited to $\F$. The
nonzero Fourier coefficients of $y(\cdot)$ are
\[
\widehat{y}[l]=\widehat{x}_{\tau,\alpha}[l]+\widehat{n}[l],\qquad l\in\F,
\]
where $\widehat{x}_{\tau,\alpha}$ and $\widehat{n}$ are the Fourier
series  of $x_{\tau,\alpha}$ and the noise $n(\cdot)$,
respectively (see (\ref{eq:meas freq})).

For $N=2f_{C}+1$ and $c_{1}=c_{1}(N)>0$, set $\sigma_{1}=\frac{c_{1}}{N}<\frac{1}{2}$.
In what follows, $c_{1},N\rightarrow\infty$, and $c_{1}=\Theta(\log N)$ with a sufficiently large lower bound.%
\footnote{That is, for  large enough factors $\alpha \le \beta$ specified in the proof
and for sufficiently large $N$, we assume that $\alpha \log N \le c_{1}\le \beta \log N$.%
} Suppose that the kernel $g_{\sigma_{1,N}}(\cdot)$ satisfies Criterion
\ref{cri:(Sharp-decay-of}, and that the threshold $\eta$ in Algorithm
I is specified as
\[
\eta=2\|n(\cdot)\|_{L_{\infty}}.
\]
\begin{comment}
Above, the dynamic range $\mbox{dyn}(x_{\tau,\alpha})$ was defined
in (\ref{eq:dyn range def}). Then, $\tau$ is asymptotically well-separated,
$\mbox{sep}(\tau)\ge4\sigma_{1}$,%
\footnote{That is, for large enough $N$, $\mbox{sep}(\tau)\ge4\sigma_{1}=\frac{4c_{1}}{N}$.
\textbf{Note: Perhaps this is} \textbf{more of a requirement on $N$
rather than $\tau$ like you say. Also, I changed the setup on page
2 so that noise is a fixed signal and we see more of it gradually
as $N$ increases. So the max norm of noise (and $\eta$) doesn't
change with $N$ anymore. }%
} and  the noise-to-signal ratio
\begin{equation}
\frac{\|n(\cdot)\|_{L_{\infty}}}{\|\alpha\|_{\infty}}=O\left(\frac{\sqrt{N}}{\mbox{dyn}(x_{\tau,\alpha})}\right),\label{eq:noise break}
\end{equation}
asymptotically.
\end{comment}
Then, the output of Algorithm I assymptotically (i.e., for large enough $N$)\footnote{In particular, $N$ must be large enough so that $\mbox{sep}(\tau)\ge 4\sigma_{1}$ and $\mbox{dyn}(x_{\tau,\alpha})= O(\mbox{NSR}/\sqrt{\sigma_1})$, where $\mbox{NSR}$ stands for the noise-to-signal ratio (as specified in \eqref{eq:pre SNR}).}    satisfies the following: 
%\note{Should we change $\mbox{sep}(\tau)\ge\sigma_{1}$ to $\mbox{sep}(\tau)\ge4\sigma_{1}$? That is what is mentioned in the remark below.}
\begin{itemize} \item $\widetilde{K}=K$, i.e., Algorithm I correctly estimates the number of impulses, and
\item $d(\tau^0,\tau) \le \sigma_1$, i.e., the Hausdorff distance between the vector of true positions $\tau$ and the estimates returned by Algorithm I is small.
\end{itemize}
\end{prop}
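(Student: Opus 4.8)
The plan is to show that, for $N$ large, the filtered signal $z_{\sigma_1}^{1}=g_{\sigma_1,N}\circledast y$ looks like a sum of $K$ sharp, well-separated spikes sitting on a negligible background, so that the greedy peak-picking in Algorithm I locates the spikes one at a time. First I would write $z_{\sigma_1}^{1}=s+e$, where, using that $g_{\sigma_1,N}$ is band-limited to $\F$ (so the low-pass filter $\mathcal{Q}_{\F}$ is transparent to it), $s(t)=\sum_{i=1}^{K}\alpha[i]\,g_{\sigma_1,N}(t\ominus\tau[i])$ is the clean filtered impulse train and $e=g_{\sigma_1,N}\circledast\mathcal{Q}_{\F}n$ is the filtered noise. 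By Cauchy--Schwarz and the unit energy of the kernel, $\|e\|_{L_{\infty}}\le\|\mathcal{Q}_{\F}n\|_{L_{2}}\le\|n\|_{L_{2}}\le\|n\|_{L_{\infty}}=\eta/2$, so $e$ never exceeds half the threshold.

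Next I would extract from Criterion \ref{cri:(Sharp-decay-of} two pointwise estimates on $s$, valid once $\mbox{sep}(\tau)\ge 4\sigma_1$. (a) Near-peak bound: $|s(\tau[i])|\ge|\alpha[i]|\,g_{\sigma_1,N}(0)-\sum_{j\ne i}|\alpha[j]|\,|g_{\sigma_1,N}(\tau[i]\ominus\tau[j])|$; since the other impulses lie at wraparound distance $\ge4\sigma_1>\sigma_1$, the decay clause of Criterion \ref{cri:(Sharp-decay-of} bounds the subtracted sum by $\|\alpha\|_{\infty}\cdot O\!\big(e^{-\Cr{decay}c_1}\big)\cdot\frac{1}{\sqrt N}\sum_{j}\frac{1}{\sin(\pi d(\tau[i],\tau[j]))}$, and the $4\sigma_1$-separation makes that last sum $O\!\big(\sigma_1^{-1}\log(1/\sigma_1)\big)$. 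With $\sigma_1=c_1/N$ and $c_1=\Theta(\log N)$ this error is $N^{-\Theta(1)}\cdot g_{\sigma_1,N}(0)$, so $|s(\tau[i])|=\Omega\!\big(|\alpha[i]|/\sqrt{\sigma_1}\big)\ge\Omega\!\big(\|\alpha\|_{\infty}/(\mbox{dyn}(x_{\tau,\alpha})\sqrt{\sigma_1})\big)$. (b) Off-peak bound: if $d(t,\tau[i])\ge\sigma_1$ for every $i$, then every term of $s(t)$ is in the decay regime, and the same summation gives $|s(t)|\le\|\alpha\|_{\infty}\cdot O\!\big(e^{-\Cr{decay}c_1}\sigma_1^{-1}\log(1/\sigma_1)\big)$, which --- provided the constant in $c_1=\Theta(\log N)$ is large enough --- is both $o\big(\|\alpha\|_{\infty}/(\mbox{dyn}(x_{\tau,\alpha})\sqrt{\sigma_1})\big)$ and, for $N$ large, strictly below $\|n\|_{L_{\infty}}$.

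With these two facts the combinatorial core is an induction on the iteration index $j$. I would carry the invariant: after $j-1$ passes, the estimates $\tau^{0}[1],\dots,\tau^{0}[j-1]$ lie within $\sigma_1$ of $j-1$ distinct true locations, and --- because $\mbox{sep}(\tau)\ge4\sigma_1$ and a $2\sigma_1$-ball about a point $\sigma_1$-close to $\tau[i]$ is sandwiched between the $\sigma_1$- and $3\sigma_1$-balls about $\tau[i]$ --- the erased set is a union of pairwise-disjoint neighborhoods, each containing the $\sigma_1$-ball about one already-located impulse and meeting the $\sigma_1$-ball about no other impulse. For the step, assume $j\le K$; then some impulse $\tau[i_0]$ is still unerased, so by (a) and the bound on $e$, $\|z_{\sigma_1}^{j}\|_{L_{\infty}}\ge|z_{\sigma_1}^{1}(\tau[i_0])|\ge|s(\tau[i_0])|-\|e\|_{L_{\infty}}>\eta$ and the loop continues. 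The maximizer $\tau^{0}[j]$ cannot lie in an already-erased region (there $z_{\sigma_1}^{j}\equiv0$, contradicting $\|z_{\sigma_1}^{j}\|_{L_{\infty}}>0$), and it cannot be $\sigma_1$-far from every impulse (there $|z_{\sigma_1}^{j}|=|s+e|\le|s|+\|e\|_{L_{\infty}}$ is strictly below $|z_{\sigma_1}^{1}(\tau[i_0])|$ by (b) versus (a), contradicting maximality); hence it lies within $\sigma_1$ of a not-yet-located impulse, restoring the invariant.

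Finally I would show the loop stops after exactly $K$ passes: once all $K$ impulses are located, the erased set contains the $\sigma_1$-ball about every impulse, so on its complement every point is $\sigma_1$-far from all impulses and (b) together with $\|e\|_{L_{\infty}}\le\|n\|_{L_{\infty}}$ gives $|z_{\sigma_1}^{K+1}|<2\|n\|_{L_{\infty}}=\eta$, while on the erased set $z_{\sigma_1}^{K+1}\equiv0$; thus $\|z_{\sigma_1}^{K+1}\|_{L_{\infty}}<\eta$ and Algorithm I halts with $\widetilde{K}=K$. Since $\tau^{0}$ then has one entry within $\sigma_1$ of each $\tau[i]$ and conversely, $d(\tau^{0},\tau)\le\sigma_1$ by the definition of the Hausdorff distance. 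All ``asymptotic'' requirements are collected at the end: $\mbox{sep}(\tau)\ge4\sigma_1$ holds for large $N$ since $\sigma_1\to0$; the constant in $c_1=\Theta(\log N)$ must be large enough that the $e^{-\Cr{decay}c_1}$ leakage beats the relevant polynomial factors; and the peaks clear the noise floor $\eta$ precisely when $\mbox{dyn}(x_{\tau,\alpha})=O(\mbox{NSR}/\sqrt{\sigma_1})$, matching the footnote hypotheses. The main obstacle I anticipate is estimate (b): controlling the accumulated Dirichlet-type tail $\sum_{i}|g_{\sigma_1,N}(t\ominus\tau[i])|$ uniformly in $t$ and verifying that the $e^{-\Cr{decay}c_1}$ factor, with $c_1$ only logarithmic in $N$, is small enough to render this background negligible against both $g_{\sigma_1,N}(0)=\Omega(1/\sqrt{\sigma_1})$ and the noise threshold --- keeping every one of these competing $N$- and $c_1$-dependent quantities on the correct side of each inequality is the bookkeeping-heavy heart of the argument.
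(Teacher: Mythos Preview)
Your proposal is correct and follows essentially the same route as the paper's proof: decompose $z_{\sigma_1}^{1}$ into signal plus filtered noise, use Criterion~\ref{cri:(Sharp-decay-of} to show the signal is large at each $\tau[i]$ and uniformly small outside the $\sigma_1$-neighborhoods, bound the filtered noise by $\|n\|_{L_\infty}$, and then run the greedy induction with the $2\sigma_1$-erasure. The only cosmetic differences are that you bound the noise via Cauchy--Schwarz ($\|g\|_{L_2}\|n\|_{L_2}$) where the paper uses H\"older ($\|g\|_{L_1}\|n\|_{L_\infty}$), and your tail-sum estimate $O(\sigma_1^{-1}\log(1/\sigma_1))$ is in fact sharper than the paper's $O(\sigma_1^{-3/2})$; neither change affects the argument.
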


\begin{rem}
Algorithm I returns an initial estimate of $\tau\in\mathbb{I}^{K}$,
namely $\tau^{0}\in\I^{K}$. In the second part of this work,
we refine this initial estimate by solving a local optimization program.
In particular, asymptotically, we will be able to recover $\tau$
exactly from noise-free low-frequency measurements.% if $\tau$ is well-separated.
\end{rem}
%\begin{rem}
% Remark that the proof requires a separation of $4\sigma_{1}$ which
%is asymptotically met. In the intro, perhaps, give a qualitative version
%for fixed but large $N$ that involves min separation.
%\end{rem}
%\begin{rem}
%\note{To edit...}
%With additional requirements on the kernel, one may bound $\|\alpha^{0}-\alpha\|_{2}$
%as well, where $\alpha^{0}$ is the least-squares estimate of $\alpha$
%given the estimate of positions $\tau^{0}$. More specifically, for
%$c_{2}$, set $\sigma_{2}=\frac{c_{2}}{N}$ and suppose that $2\sigma_{1}\le h(\sigma_{2},N)\le\sigma_{2}$. \note{Say what $h$ is?}
%Suppose also that $g(\cdot;\sigma_{2},N)$ satisfies Criteria \ref{lem:corr decay away}
%and \ref{fact: slow decay in bound-1} with $c=c_{2}$. Lastly, assume
%that $\mbox{sep}(\tau)\ge4\sigma_{2}>4\sigma_{1}$. Then it is not
%difficult to verify that $\|\alpha^{0}-\alpha\|_{2}=O(\sigma_{1})\|\alpha\|_{2}$. \note{Come back and check this?}
%\end{rem}
%

\section{Phase II: Local Optimization\label{sec:Phase-II:-Local}}

This section presents a method to refine the estimate of $\tau$ produced
by Algorithm I (namely,  $\tau^0\in\mathbb{I}^{\widetilde{K}}$).  In particular, asymptotically and in the absence of noise, we will be
able to exactly recover $\tau$. 
%(if Proposition \ref{thm:algorithm 1} is in force). 

\begin{comment}
If $\mbox{sep}(\tau)\ge4\sigma_{2}>4\sigma_{1}$ and everything else
in
\end{comment}
%Throughout this section, assume that  Proposition \ref{thm:algorithm 1} is in force. Then, Algorithm I is asymptotically
%guaranteed to produce an estimate $\tau^{0}\in\I^{\widetilde{K}}$ %
%with $\widetilde{K}=K$ of the true vector of positions $\tau\in\I^{K}$ that satisfies $d(\tau^{0},\tau)\le\sigma_{1}$. (Depending on the context, therefore, we might use 
% In this section,   the estimate $\tau^{0}$ is improved using local optimization.

With $c_{2}=c_{2}(N)>0$ to be specified later, set $\sigma_{2}=\frac{c_{2}}{N}$%
\begin{comment}
and assume that $\sigma_{2}>\sigma_{1}$
\end{comment}
. For cut-off frequency $f_{C}$, recall that $\F=[-f_{C}:f_{C}]$
and that $N=|\F|=2f_{C}+1$. Throughout this section, we consider
the unit-energy kernel $g_{\sigma_{2},N}(\cdot)$ which is band-limited
to $\F$ by design. We first filter the low-frequency measurement signal
$y(\cdot)$ with the kernel $g_{\sigma_{2},N}(\cdot)$. More precisely, 
for $t\in\I$, we set
\begin{align}
z_{\sigma_{2}}(t) & :=\left(g_{\sigma_{2},N}\circledast y\right)(t)\nonumber \\
 & =\left(g_{\sigma_{2},N}\circledast x_{\tau,\alpha}\right)(t)+\left(g_{\sigma_{2},N}\circledast n\right)(t)\qquad\mbox{(see (\ref{eq:meas freq}) and the text below)}\nonumber \\
 & =\sum_{i=1}^{K}\alpha[i]\cdot g_{\sigma_{2},N}(t\ominus\tau[i])+\left(g_{\sigma_{2},N}\circledast n\right)(t)
\qquad \mbox{(see \eqref{eq:xtaualpha})}
 \nonumber \\
 & =:\sum_{i=1}^{K}\alpha[i]\cdot g_{\sigma_{2},N}(t\ominus\tau[i])+n_{\sigma_{2}}(t),\qquad\mbox{(}n_{\sigma_{2}}(\cdot)\mbox{ is the filtered noise)}\label{eq:filt meas}
\end{align}
%where the dependence on $N$ is occasionally suppressed for convenience.  
where the second line above uses the assumption that $g_{\sigma_{2},N}(\cdot)$
is also band-limited to $\F$. Let $\widehat{g}_{\sigma_{2},N}$ and $\widehat{n}_{\sigma_{2}}$
be the corresponding Fourier  series. 
%of the kernel $g_{\sigma_{2},N}(\cdot)$
%and filtered noise $n_{\sigma_{2}}(\cdot)$, respectively. 
Note that
$\widehat{g}_{\sigma_{2},N}$ is supported only on the interval $\F$
and so is $\widehat{n}_{\sigma_{2}}$. In light of  (\ref{eq:filt meas}),
the Fourier coefficients of $z_{\sigma_{2}}(\cdot)$ can then be written
as
\begin{align}
\widehat{z}_{\sigma_{2}}[l] & =\widehat{g}_{\sigma_{2},N}[l]\cdot\widehat{y}[l]\nonumber \\
 & =\widehat{g}_{\sigma_{2},N}[l]\sum_{i=1}^{K}\alpha[i]e^{-\imag2\pi l\tau[i]}+\widehat{n}_{\sigma_{2}}[l],\qquad l\in\mathbb{F},\label{eq:phase II meas}
\end{align}
where the second line follows from a direct calculation.
For a more compact representation, we abuse the notation by letting $\widehat{z}_{\sigma_{2}},\widehat{n}_{\sigma_{2}}\in\mathbb{C}^{N}$ also 
denote the vectors formed by the Fourier series coefficients
of $z_{\sigma_{2}}(\cdot)$ and $n_{\sigma_{2}}(\cdot)$ on $\F$, respectively.
Then, the vector form of \eqref{eq:phase II meas} is simply
\begin{equation}
\widehat{z}_{\sigma_{2}}=G_{\tau}\cdot\alpha+\widehat{n}_{\sigma_{2}},\label{eq:def of z hat}
\end{equation}
where $G_{\tau}\in\mathbb{C}^{N\times K}$ is constructed out of the (modulated) 
Fourier coefficients of the kernel:
\begin{equation}\label{eq:def of G_tau}
G_{\tau}[l,i]=\widehat{g}_{\sigma_{2},N}[l]\cdot e^{-\imag2\pi l\cdot\tau[i]},\qquad l\in\F,\: i\in[1:K].
\end{equation}
Alternatively, $G_{\tau}\alpha\in\mathbb{C}^{N}$ is the vector
formed by the Fourier series coefficients of the filtered measure $(g_{\sigma_{2},N}\circledast x_{\tau,\alpha})(\cdot)$ on $\mathbb{F}$. 

In this section, we will rely on $g_{\sigma_{2},N}(\cdot)$ satisfying
a number of properties. The first criterion, among other things,  specifies
how small the correlation between the kernel and its shifted copy should be (when separated properly). The second criterion concerns the behavior of kernel near the origin (and that it must be ``flat'' in a very small interval near the origin). 

\begin{criterion}
\label{lem:corr decay away} For integer $f_{C}$, set $N=2f_{C}+1$ for short. For $c=c(N)>0$, let $\sigma=\frac{c}{N}<\frac{1}{2}$. The kernel
$g_{\sigma,N}(\cdot)$ has unit energy $\|g_{\sigma,N}(\cdot)\|_{L_{2}}=1$,
is band-limited to $\F$, and symmetric about $\frac{1}{2}$ (as in Criterion \ref{cri:(Sharp-decay-of}). Moreover,
$g_{\sigma,N}(\cdot)$ satisfies the following.

For $\rho_{1},\rho_{2}\in\mathbb{I}$ with $d(\rho_{1},\rho_{2})\ge2\sigma$,
it holds asymptotically that
\begin{equation}
\left|\left\langle g_{\sigma,N}(t\ominus\rho_{1}),g_{\sigma,N}(t\ominus\rho_{2})\right\rangle \right|=\frac{O(e^{-\Cr{decay}c})}{N\cdot \sin\left(\pi\cdot d(\rho_{1},\rho_{2})\right)},\label{eq:far corr of signal}
\end{equation}
\begin{equation}
\left|\left\langle g_{\sigma,N}(t\ominus\rho_{1}),g_{\sigma,N}'(t\ominus\rho_{2})\right\rangle \right|=\frac{O(e^{-\Cr{decay}c})}{\sin\left(\pi\cdot d(\rho_{1},\rho_{2})\right)},\label{eq:far cor of der}
\end{equation}
\begin{equation}
\left|\left\langle g'_{\sigma,N}(t\ominus\rho_{1}),g_{\sigma,N}'(t\ominus\rho_{2})\right\rangle \right|=\frac{N\cdot O(e^{-\Cr{decay}c})}{\sin\left(\pi\cdot d(\rho_{1},\rho_{2})\right)},\label{eq:far cor of der der}
\end{equation}
\begin{comment}
\[
\left|\left\langle \psi_{c,N}'(t\ominus\tau_{1}),\psi'_{c,N}(t\ominus\tau_{2})\right\rangle \right|\lesssim^{a}\frac{Nc^{\frac{3}{2}}e^{-\pi c/2}}{\left|\sin\left(\pi\left(\tau_{1}\ominus\tau_{2}\right)\right)\right|},
\]
\end{comment}
when $c,N\rightarrow\infty$ and $c=O(\log N)$. Here, $d(\rho_{1},\rho_{2})$
is the wraparound distance between $\rho_{1}$ and $\rho_{2}$ (see (\ref{eq:wrip around})),  and $g_{\sigma,N}'(\cdot)$
denotes the derivative of $g_{\sigma,N}(\cdot)$ with respect to its
argument. 
\end{criterion}
\begin{comment}
\textbf{Since the 2 criteria only appear together I cut the into part
in the 2nd criterion. }
\end{comment}

\begin{criterion}
\label{fact: slow decay in bound-1}For integer $f_{C}$, set $N=2f_{C}+1$ for short.
For $c=c(N)>0$, let $\sigma=\frac{c}{N}<\frac{1}{2}$. 
 There exists $h(\sigma,N)\le \sigma$ (depending only on $\sigma$ and $N$), for which the kernel  $g_{\sigma,N}(\cdot)$
\begin{comment}
has unit energy $\|g_{\sigma,N}(\cdot)\|_{L_{2}}=1$, is band-limited
to $\F$, and symmetric about $\frac{1}{2}$. Moreover, 
\end{comment}
%{}
satisfies the following. 

For $\rho_{1},\rho_{2}\in\mathbb{I}$, suppose that $d(\rho_{1},\rho_{2})\le h(\sigma,N)\le\sigma$.
Then, it holds asymptotically that $\|g_{\sigma,N}'(\cdot)\|_{L_{2}}^{2}=\Omega(N^{2})$, 
and furthermore  
\begin{equation}
\left\langle g_{\sigma,N}(t\ominus\rho_{1}),g_{\sigma,N}(t\ominus\rho_{2})\right\rangle =1-O(1)\cdot d^2(\rho_{1},\rho_{2}),\label{eq:near auto corr}
\end{equation}
\begin{equation}
\left|\left\langle g_{\sigma,N}(t\ominus\rho_{1}),g_{\sigma,N}'(t\ominus\rho_{2})\right\rangle \right|=\Omega\left(N^{2}\right)\cdot d(\rho_{1},\rho_{2}),\label{eq:near der corr}
\end{equation}
\begin{equation}
\mbox{sign}\left(\left\langle g_{\sigma,N}(t\ominus\rho_{1}),g_{\sigma,N}'(t\ominus\rho_{2})\right\rangle \right)=\mbox{sign}\left(\rho_{1}\ominus\rho_{2}-\frac{1}{2}\right),\label{eq:sign match}
\end{equation}
as $c,N\rightarrow\infty$ with $c=O(\log N)$. Above, $\mbox{sign}(\cdot)$ returns the sign, of course.  
%Here, $h(\sigma,N)$ (which appeared first in Criterion \ref{cri:(Sharp-decay-of}) depends only
%on $\sigma$ and $N$. \note{Can we change ``(which appeared first in Criterion \ref{cri:(Sharp-decay-of})'' to some appropriate introduction of $h$? I think this is the first time we define $h$.}
\end{criterion}
\begin{comment}
\textbf{Note: I think it should be $\rho_{2}\ominus\rho_{1}$ or something. }
\end{comment}
Throughout this section, we assume that the kernel $g_{\sigma_{2},N}(\cdot)$
satisfies both Criteria \ref{lem:corr decay away} and \ref{fact: slow decay in bound-1}
with $\sigma=\sigma_{2}=\frac{c_{2}}{N}$. We will soon specify $c_{2}=c_{2}(N)$
in relation to $c_{1}$ (from Phase I).

 Define $G_{\rho}\in\mathbb{C}^{N\times \widetilde{K}}$ similar to \eqref{eq:def of G_tau} (but with $\widetilde{K}$ instead of $K$), and consider the objective function
\begin{equation}\label{eq:def of f func}
f(\rho,\beta):=\left\Vert G_{\rho}\beta-\widehat{z}_{\sigma_{2}}\right\Vert _{2}^{2},\qquad\rho\in\I^{\widetilde{K}},\:\beta\in\mathbb{R}^{\widetilde{K}},
\end{equation}
with $\rho$ and $\beta$ being vectors of positions and amplitudes, 
respectively. For a fixed $\rho\in\I^{\widetilde{K}}$, minimizing $f(\rho,\cdot)$
is a simple least-squares problem:
\begin{align}
%\min_{\substack{
%  \rho\in\I^{\widetilde{K}}\\\beta\in\mathbb{R}^{\widetilde{K}}}}
  \min_{\rho\in\I^{\widetilde{K}}}
  \min_{\beta\in\mathbb{R}^{\widetilde{K}}}
  f(\rho,\beta) & =\min_{\rho\in\I^{\widetilde{K}}}
  \min_{\beta\in\mathbb{R}^{\widetilde{K}}}\left\Vert G_{\rho}\beta-\widehat{z}_{\sigma_{2}}\right\Vert _{2}^{2}
  \nonumber \\
& =\min_{\rho\in\I^{\widetilde{K}}}f\left(\rho,\beta_{\rho}\right)\qquad\left(\beta_{\rho}:=G_{\rho}^{\dagger}\cdot\widehat{z}_{\sigma_{2}}\right)\nonumber \\
& =\min_{\rho\in\I^{\widetilde{K}}}
\left\Vert \left(I_{N}-\mathcal{P}_{\rho}\right)\widehat{z}_{\sigma_{2}}\right\Vert _{2}^{2}\qquad\left(\mathcal{P}_{\rho}=G_{\rho}G_{\rho}^{\dagger}\in \mathbb{C}^{N\times N}\right)\nonumber \\
 & =:\min_{\rho\in\I^{\widetilde{K}}}F(\rho).\label{eq:main program}
\end{align}
Above, $G_{\rho}^{\dagger}$ is the Moore-Penrose pseudo-inverse of
$G_{\rho}\in\mathbb{C}^{N\times \widetilde{K}}$. Also, $\mathcal{P}_{\rho}=G_{\rho}G_{\rho}^{\dagger}$
is the orthogonal projection onto $\mbox{span}(G_{\rho})$.

Suppose that Proposition \ref{thm:algorithm 1} is in force so that, in particular, $\widetilde{K}=K$. 
Now,  \eqref{eq:def of z hat} suggests that minimizing $F(\cdot)$ in \eqref{eq:main program}  might reliably estimate the true vector of
positions $\tau\in\I^K$. In fact, in the absence of noise,
$\tau$ is indeed a solution to Program \eqref{eq:main program} (with $\Kt=K$).\footnote{In general, given
an estimate $\widetilde{\tau}\in\I^{\Kt}$ of $\tau\in\I^K$, an estimate of amplitudes
$\alpha\in\mathbb{R}^{K}$ is simply $\beta_{\widetilde{\tau}}=G_{\widetilde{\tau}}^{\dagger}\cdot\widehat{z}_{\sigma_{2}}\in\R^{\Kt}$.}

However, even in the absence of noise, 
\begin{comment}
unlike the continuous-time super-resolution problem
\end{comment}
the super-resolution problem (Program \eqref{eq:main program}) might
have multiple local minima in which an optimization algorithm might
get trapped. 
\begin{comment}
In other words, $F(\cdot)$ might have local minimums located far
from $\tau$ in which an optimization algorithm could get trapped.
\end{comment}
The key insight that  resolves this issue is that, under Proposition \ref{thm:algorithm 1},   the outcome of Algorithm
I (namely, $\tau^{0}\in\mathbb{I}^{\Kt}$ with $\widetilde{K}=K$) is close enough to $\tau$
so that a local optimization algorithm (initialized at $\tau^{0}$)
converges to $\tau$ (or its small vicinity).

To formalize matters, we cast the local optimization step    as follows. Under Proposition \ref{thm:algorithm 1}, recall that   $d(\tau^{0},\tau)\le\sigma_{1}$.
To incorporate this prior knowledge, we add this constraint to Program
\eqref{eq:main program} to obtain the box-constrained program
\begin{equation}
\min_{\rho\in\mathbb{B}(\tau^{0},\sigma_{1})}F(\rho),\label{eq:regularized problem}
\end{equation}
where
\begin{equation}
\mathbb{B}(\tau^{0},\sigma_{1}):=\left\{ \rho\in\I^{\widetilde{K}}\,:\, d(\rho,\tau^{0})\le\sigma_{1}\right\} \subset\I^{\widetilde{K}},\label{eq:ball}
\end{equation}
is a ball of radius $\sigma_{1}$ centered at the initial estimate
$\tau^{0}$ from Algorithm I.\footnote{Note that we used $\widetilde{K}$ to define the ball (instead of $K$), so as to develop Phase II independent of Proposition \ref{thm:algorithm 1}. The theoretical guarantees for Phase II, however, do indeed depend on the success of Phase I. Specifically, when it comes to the theory of Phase II, we will assume that Proposition \ref{thm:algorithm 1} is in force: $\tau^0\in \I^{\Kt} $ with $\widetilde{K}=K$,  and $d(\tau^0,\tau)\le \sigma_1$. } Given $\tau^{0}$, one might use any
constrained optimization algorithm to solve Program \eqref{eq:regularized problem}.

Before discussing two such algorithms, let us shed  light on the
geometry of the ball $\mathbb{B}(\tau^{0},\sigma_{1})$, when the entries of $\tau^0$ are distinct,\footnote{For example,  under Proposition
\ref{thm:algorithm 1},  the   entries of $\tau^{0}$ are distinct
asymptotically, i.e., for sufficiently large $N$.} whereby 
\[
\mathbb{B}(\tau^{0},\sigma_{1})=\left\{ \rho\in\I^{\Kt}\,:\, d(\rho[i],\tau^{0}[i])\le\sigma_{1},\quad i\in[1:\Kt]\right\}.
\]
when $N$ is sufficiently large. 
If the entries of $\tau^{0}$ are distinct \emph{and} away from
the origin, then we have the simpler expression 
\begin{equation}
\mathbb{B}(\tau^{0},\sigma_{1})=\Big[\tau^{0}[1]-\sigma_{1},\tau^{0}[1]+\sigma_{1}\Big]\times\cdots\times\Big[\tau^{0}[\Kt]-\sigma_{1},\tau^{0}[\Kt]+\sigma_{1}\Big].\label{eq:simple box}
\end{equation}
%as $c,N\rightarrow\infty$ and $c=O(\log N)$. 
Furthermore, the set of \emph{active coordinates }for $\rho\in\mathbb{B}(\tau^{0},\sigma_{1})$ consists of the coordinates on the boundary of $\mathbb{B}(\tau^0,\sigma_1)$, that is
\begin{equation}
\mathbb{A}(\rho):=\left\{ i\,:\, d\left(\rho[i],\tau^{0}[i]\right)=\sigma_{1}\right\} \subseteq [1:\widetilde{K}],\qquad\rho\in\mathbb{B}(\tau^{0},\sigma_{1}).\label{eq:active constraints}
\end{equation}
When (\ref{eq:simple box}) holds, for instance, $i\in \mathbb{A}   (\rho)$
if simply $\rho[i]=\tau^{0}[i]\pm\sigma_{1}$. Naturally, $\mathbb{A}^C(\rho)$ (namely, the complement of $\mathbb{A}(\rho)$)  consists of \emph{inactive coordinates} of $\rho$.

We now turn to the details of solving Program \eqref{eq:regularized problem}.    The \emph{gradient projection algorithm} is an obvious candidate for a first-order
method here.%to solve Program \eqref{eq:regularized problem}.%
\footnote{Alternatively, one may use the \emph{conditional gradient method} instead
of the gradient projection algorithm \cite{Kelley1999}.%
} At iteration $j\ge1$, one sets
\begin{equation}
\tau^{j}=\mathcal{P}_{\mathbb{B}(\tau^{0},\sigma_{1})}\left(\tau^{j-1}-\delta^{j}\cdot\frac{\partial F}{\partial\rho}(\tau^{j-1})\right)\in \mathbb{I}^{\widetilde{K}}
,\label{eq:the grad desc alg}
\end{equation}
with step size $\delta^{j}>0$ at the $j$th iteration\emph{. }Above, $\mathcal{P}_{\mathbb{B}(\tau^{0},\sigma_{1})}(\cdot)$
 (namely, the projection operator 
onto the ball $\mathbb{B}(\tau^{0},\sigma_{1})$) ensures
that $\tau^{j}$ remains a feasible point of Program \eqref{eq:regularized problem}
at the $j$th iteration. We in fact find an explicit expression for
the gradient of $F(\cdot)$ in the supporting document \cite{support}:
\begin{equation}
\frac{\partial F}{\partial\rho}(\rho)=-2\cdot\mbox{diag}(\beta_{\rho})\cdot G_{\rho}^{*}L\left(I_{N}-\mathcal{P}_{\rho}\right)\widehat{z}_{\sigma_{2}}\in \mathbb{R}^{\Kt}
 ,\qquad\rho\in\I^{\widetilde{K}}.\label{eq:grad of F}
\end{equation}
Here, $L\in\mathbb{C}^{N\times N}$ is a diagonal matrix with $L[l,l]=\imag2\pi l$,
$l\in\F$. Also, $\beta_{\rho}=G_{\rho}^{\dagger}\cdot\widehat{z}_{\sigma_{2}}$ and $\mathcal{P}_{\rho}=G_{\rho}G_{\rho}^{\dagger}$. Moreover, $\mbox{diag}(\beta_\rho)$ is the diagonal matrix formed by the vector $\beta_\rho$. Without a formal proof we remark that 
 the gradient projection algorithm converges to $\tau$ (or its
neighborhood when there is noise)%
\begin{comment}
under mild conditions
\end{comment}
.

A valuable fact here is that once the gradient projection algorithm
identifies an active coordinate, that coordinate remains unchanged
in future iterations. More specifically, if $d(\tau^j[i],\tau^0[i]) = \sigma_1$, 
%and the step size $\delta^j$ is sufficiently small
 then $d(\tau^{j'}[i],\tau^0[i]) = \sigma_1$ for all future iterations  $j'\ge j$ \cite{Kelley1999}.\footnote{A similar phenomenon is true of any convex feasible set (and not just box constraints).}

From a practical standpoint, however, deploying a first-order method
(such as the gradient projection algorithm above) is somewhat unwise
since the initial estimate $\tau^{0}$ is generally too close to $\tau$
and, as a result, $\frac{\partial F}{\partial\rho}(\tau^{0})\approx0$.%
\begin{comment}
the gradient of the objective function $F(\cdot)$ nearly vanishes
at $\tau^{0}$.
\end{comment}
{} This in turn results in a slow---linear to be precise---convergence
rate.

Actually, the local nature of this problem encourages second-order
methods as a viable alternative here. 
%Indeed, under Proposition \ref{thm:algorithm 1}, the Hessian
%of $F(\cdot)$ later proves to be positive definite when $\rho$ is close to the true positions $\tau\in\I^K$, that is
%\[
%\frac{\partial^{2}F}{\partial\rho^{2}}(\rho)=\left[\frac{\partial^{2}F}{\partial\rho[i]\partial\rho[j]}(\rho)\right]_{i,j}\in\mathbb{R}^{K\times K}
%\]
%later proves to be positive definite when $\rho$ is close to the true positions
%$\tau\in\I^{K}$%
%\begin{comment}
%any local minimizer of $F(\cdot)$
%\end{comment}
%\begin{comment}
% (under very mild conditions)
%\end{comment}
%. Therefore, if the problem was unconstrained, the \emph{Newton's method} 
%would have converged quadratically to $\tau$ starting from a nearby
%initial estimate.
%
%The situation in the box-constrained Program \eqref{eq:regularized problem}
%is somewhat more delicate. 
To proceed, let 
\[
\frac{\partial^{2}F}{\partial\rho^{2}}(\rho)=\left[\frac{\partial^{2}F}{\partial\rho[i]\partial\rho[j]}(\rho)\right]_{i,j}\in\mathbb{R}^{\Kt\times \Kt}
\]
denote the Hessian of $F(\cdot)$ at $\rho\in\I^{\Kt}$, and 
define the \emph{reduced Hessian} 
at $\rho$ to be
\begin{equation}
\mathbb{R}^{\Kt\times \Kt}\ni\mathcal{R}\left(\frac{\partial^{2}F}{\partial\rho^{2}}(\rho)\right)=\begin{cases}
\delta_{i,j} & i\in \mathbb{A}(\rho)\,\mbox{or}\, j\in \mathbb{A}(\rho),\\
\frac{\partial^{2}F}{\partial\rho[i]\partial\rho[j]}\left(\rho\right) & \mbox{otherwise}.
\end{cases}\label{eq:red Hess}
\end{equation}
Here, $\delta_{i,j}$ is the Kronecker delta function, $\delta_{i,j}=1$
if $i=j$ and $\delta_{i,j}=0$ if $i\ne j$. 
%As we shall see later, for moderate noise levels, 
%the reduced Hessian of $F(\cdot)$ is positive definite within a small
%neighborhood of $\tau$, i.e.,\note{Replace this with exact neighborhood finally...}
Because $F(\cdot)$ is a smooth function, its reduced  Hessian is positive semi-definite near a solution $\widetilde{\tau}$ of Program \eqref{eq:regularized problem} \cite{Kelley1999}, i.e., 
\[
\mathcal{R}\left(\frac{\partial^{2}F}{\partial\rho^{2}}(\rho)\right)\succcurlyeq 0 ,\qquad\mbox{when }d\left(\rho,\widetilde{\tau}\right)\mbox{ is small}.
\]
\begin{comment}
In fact, it turns out that the Hessian at the initial estimate too
is positive definite asymptotically, i.e., $\frac{\partial^{2}F}{\partial\rho^{2}}(\tau^{0})\succ0$
(roughly because both $\widetilde{\tau}$ and $\tau^{0}$ belong to
the small neighborhood $\mathbb{B}(\tau^{0},\sigma_{1})$).
\end{comment}
Therefore, hypothetically, if $\mathbb{A}(\widetilde{\tau})$ (namely, the active coordinates of $\widetilde{\tau}$) 
were known  and $d(\tau^0,\widetilde{\tau})$ was small, we could have calculated the rest of coordinates of $\widetilde{\tau}$ 
by applying the basic unconstrained Newton's method (using  the reduced Hessian in \eqref{eq:red Hess} and assuming its invertibility). 

Of course, we will not know the active constraints until the problem
is solved. Instead, we must use the \emph{projected Newton's
method}. In words, at each iteration, the projected Newton's method
carefully underestimates the active coordinates. Then, the (estimated)
inactive coordinates are updated using  an (unconstrained) Newton's
step, and the (estimated) active coordinates are in turn updated using
the gradient projection step. Loosely speaking, the fact that
active coordinates remain unchanged under the gradient projection
algorithm is the key to  the success of projected Newton's method.

To formally write down the iterations of the projected Newton's method \cite[Algorithm 5.5.2]{Kelley1999}, we record  a couple 
more definitions. For $\varepsilon>0$, the $\epsilon$-active coordinates
of $\rho$ are collected in the set
\begin{equation}
\mathbb{A}_{\epsilon}(\rho):=\left\{ i\,:\,\sigma_{1}-\epsilon\le d\left(\rho[i],\tau^{0}[i]\right)\le\sigma_{1}\right\} \subseteq[1:\Kt],\qquad \rho\in\mathbb{B}(\tau^{0},\sigma_{1}).
\label{eq:e active const}
\end{equation}
In particular, $\mathbb{A}_0(\rho) = \mathbb{A}(\rho)$ (see \ref{eq:active constraints}).  The $\epsilon$-reduced Hessian is defined similar to \eqref{eq:red Hess} as
\begin{equation}
\mathbb{R}^{\Kt\times \Kt}\ni\mathcal{R}_{\epsilon}\left(\frac{\partial^{2}F}{\partial\rho^{2}}(\rho)\right)=\begin{cases}
\delta_{i,j} & i\in \mathbb{A}_\epsilon(\rho)\mbox{ or } j\in \mathbb{A}_\epsilon(\rho),\\
\frac{\partial^{2}F}{\partial\rho[i]\partial\rho[j]}\left(\rho\right) & \mbox{otherwise}.
\end{cases}\label{eq:e red Hess}
\end{equation}
Also, let us give an
explicit (if not elegant) expression for the Hessian (which is verified
in the accompanying document \cite{support}):    
%\ref{sec:Computing-the-Hessian})
\begin{align}
\mathbb{R}^{\Kt\times \Kt}\ni \frac{\partial^{2}F}{\partial\rho^{2}}(\rho) & =-2\cdot\mbox{diag}\left(\beta_{\rho}\right)\cdot G_{\rho}^{*}L^{2}G_{\rho}\cdot\mbox{diag}\left(\beta_{\rho}\right)\nonumber \\
 & \qquad-2\cdot\mbox{diag}\left(\beta_{\rho}\right)\cdot\mbox{diag}\left(G_{\rho}^{*}L^{2}\left(I_{N}-\mathcal{P}_{\rho}\right)\widehat{z}_{\sigma_{2}}\right)\nonumber \\
 & \qquad-2\left[\mbox{diag}\left(\beta_{\rho}\right)G_{\rho}^{*}LG_{\rho}-\mbox{diag}\left(G_{\rho}^{*}L\left(I_{N}-\mathcal{P}_{\rho}\right)\widehat{z}_{\sigma_{2}}\right)\right]\nonumber \\
 & \qquad\cdot\left(G_{\rho}^{*}G_{\rho}\right)^{-1}\cdot\left[G_{\rho}^{*}L^{*}G_{\rho}\cdot\mbox{diag}(\beta_{\rho})-\mbox{diag}\left(G_{\rho}^{*}L\left(I_{N}-\mathcal{P}_{\rho}\right)\widehat{z}_{\sigma_{2}}\right)\right].\label{eq:Hess exp}
\end{align}
The quantities involved ($\beta_{\rho}$, $G_{\rho}$, $L$, $\mathcal{P}_{\rho}$,
and $\widehat{z}_{\sigma_{2}}$) were defined earlier. Algorithm II (in Figure \ref{fig:alg II}) 
  describes how to refine the initial estimate $\tau^{0}\in\I^{\Kt}$
using the projected Newton's algorithm.%
%\begin{comment}
%Finally, we let $\mathcal{P}_{\mathbb{B}(\tau^{0},\sigma_{1})}(\cdot)$
%be the projection onto the neighborhood $\mathbb{B}(\tau^{0},\sigma_{1})$.
%\end{comment}
%{} \vspace*{\bigskipamount}
%{} \vspace*{\bigskipamount}
%{} \vspace*{\bigskipamount}
\begin{center}
\begin{figure}[H]
\begin{center}

\fbox{\begin{minipage}[t]{.9\columnwidth}%
\textbf{Algorithm II (local optimization)}

\vspace*{\bigskipamount}
%\begin{footnotesize}

\textbf{Input:}
\begin{itemize}
\item Cut-off frequency $f_{C}$ and measurement signal $y(\cdot)$, band-limited
to $\F=[-f_{C}:f_{C}]$, with the corresponding Fourier coefficients collected in $\widehat{y}\in\mathbb{C}^{N}$ (with $N=2f_{C}+1$). 
(See (\ref{eq:meas}) and (\ref{eq:meas freq}).)
\item A kernel $g_{\sigma_{2},N}(\cdot)$, band-limited to $\F$, with the corresponding Fourier coefficients collected  in $\widehat{g}_{\sigma_{2},N}\in\mathbb{C}^{N}$. (See
Criteria \ref{lem:corr decay away} and \ref{fact: slow decay in bound-1}.)
\item From Algorithm I, an initial estimate of the vector of locations $\tau^{0}\in\mathbb{I}^{\widetilde{K}}$,  and $\sigma_1\in(0,\frac{1}{2})$. 
\item A margin $0<\epsilon^0<\sigma_{1}$, and a termination threshold $\eta>0$.
\end{itemize}
\textbf{Output: }
\begin{itemize}
\item An estimate $\widetilde{\tau}\in\mathbb{I}^{\widetilde{K}}$
of the true vector of locations $\tau$.
\end{itemize}

\begin{enumerate}
\item Compute $\widehat{z}_{\sigma_{2}}=\widehat{g}_{\sigma_{2},N}\odot\widehat{y}$.
Here, $\odot$ stands for entry-wise (Hadamard) product.
\item Set $j=0$ and repeat:

\begin{enumerate}
\item Compute the gradient of $F(\cdot)$ at $\tau^{j}\in\I^{\widetilde{K}}$ %If $\|\frac{\partial F}{\partial\rho}(\tau^{j})\|_{2}\le\eta,$ exit.
(see (\ref{eq:grad of F})).
\item Calculate the reduced Hessian of $F(\cdot)$ at $\tau^{j}$, i.e., $\mathcal{R}_{\epsilon^j}(\frac{\partial^{2}F}{\partial\rho^{2}}(\tau^{j}))$. (See (\ref{eq:e active const}-\ref{eq:Hess exp}).)
%%\item Compute the reduced Hessian $\mathcal{R}_{\epsilon}(\frac{\partial^{2}F}{\partial\rho^{2}}(\tau^{j}))$.
%%(See (\ref{eq:red Hess}).) \note{I don't think you've used the notation $\mathcal{R}_{\epsilon}$ yet. Is this similar to $\mathcal{R}$ in (\ref{eq:red Hess}) except that it equals $\delta_{i,j}$ if $i \in A_\epsilon(\rho)$ or $j \in A_\epsilon(\rho)$?} 
 If the reduced Hessian is not a positive
definite matrix, exit with a failure message.
\item Calculate the descent direction $
\mathbb{R}^{\Kt}\ni 
 v^{j}:=\left(\mathcal{R}_{\epsilon^j}\left(\frac{\partial^{2}F}{\partial\rho^{2}}(\tau^{j})\right)\right)^{-1}\cdot\frac{\partial F}{\partial\rho}(\tau^{j})
$. 

\item For $\lambda\ge 0$, set $\tau^{j}(\lambda)=\mathcal{P}_{\mathbb{B}(\tau^{0},\sigma_{1})}\left(\tau^{j}\ominus \lambda\cdot v^{j}\right)$, where $\mathcal{P}_{\mathbb{B}(\tau^{0},\sigma_{1})}(\cdot)$
is the projection onto the ball $\mathbb{B}(\tau^{0},\sigma_{1})$ (see (\ref{eq:ball})). 

\item If $\|\tau^{j}(1)-\tau^j\|_2 \le \eta$, exit. Otherwise, pick $\epsilon^{j+1}=\min[\|\tau^{j}(1)-\tau^j\|_2,\sigma_1]$.   
\item \emph{Line search: }Find the least integer $m$ such that $
F\left(\tau^{j}(\lambda)\right)-F\left(\tau^{j}\right)\le\frac{-10^{-4}}{\lambda}\left\Vert \tau^{j}(\lambda)\ominus \tau^{j}\right\Vert _{2}^{2}
$, holds for $\lambda=2^{-m}$.
\item Set $\tau^{j+1}=\tau^{j}(2^{-m})$.
\item $j\leftarrow j+1$.
\end{enumerate}
\item Output $\widetilde{\tau}=\tau^{j}\in\mathbb{I}^{\widetilde{K}}$ as
the estimate of the true location vector $\tau$.

\end{enumerate}

%\end{footnotesize}%
\end{minipage}}
\end{center}
\caption{Algorithm II (local optimization)}
\label{fig:alg II}
\end{figure}
\end{center}

%
%\vspace*{\bigskipamount}
%{} \vspace*{\bigskipamount}
%{} \vspace*{\bigskipamount}

Under Criteria \ref{lem:corr decay away} and \ref{fact: slow decay in bound-1},
and when Proposition \ref{thm:algorithm 1} is in force, Algorithm II successfully
refines our estimate of the true position vector $\tau$. Convergence
of the projected Newton's algorithm 
to $\tau$ (or its small vicinity) is guaranteed by the next result, which
is proved in Section \ref{sec:Proof-of-Theorem grad descent}. We
remark that, while not the focus of this work, similar guarantees
hold for the gradient projection algorithm outlined in \eqref{eq:the  grad desc alg}.

\begin{thm}
\label{thm:grad des algo} 
\textbf{\emph{[Performance of Algorithm II]}} For integer $N$ and $0<c_{1}<c_{2}$
(both functions of $N$), let $\sigma_{1}=\frac{c_{1}}{N}$ and $\sigma_{2}=\frac{c_{2}}{N}$.
 Let $\tau^0\in \I^{\Kt}$ be the output of Algorithm I, and  suppose that Proposition \ref{thm:algorithm 1}  is in force, so that in particular $\Kt=K$. Suppose also that the kernel $g_{\sigma_{2},N}(\cdot)$
satisfies Criteria \ref{lem:corr decay away} and \ref{fact: slow decay in bound-1}
(with $\sigma=\sigma_{2}$). Lastly, assume that $2\sigma_{1}\le h(\sigma_{2},N)\le\sigma_{2}$
(see Criterion \ref{fact: slow decay in bound-1}).

Then, as long as
\[
\frac{\left\Vert n(\cdot)\right\Vert _{L_2}}{\|\alpha\|_{\infty}}=\frac{O(1)}{\mbox{dyn}\left(x_{\tau,\alpha}\right)^{2}}\le1,
\]
with a small enough constant, any limit point of Algorithm II is a stationary  point
$\widetilde{\tau}\in\mathbb{I}^{\Kt}$ with $\Kt=K$, and 
\begin{equation}
d(\widetilde{\tau},\tau)\le\min\left(O(1)\cdot\mbox{dyn}(x_{\tau,\alpha})^{2}\cdot\frac{\left\Vert n(\cdot)\right\Vert _{L_{2}}}{\left\Vert \alpha\right\Vert _{2}},2\sigma_{1}\right),
\label{eq:final dist bound}
\end{equation}
asymptotically as $c_{1},c_{2},N\rightarrow\infty$ and $c_1,c_{2}=\Theta(\log N)$ (with a large enough lower bound). Above, the metric and the dynamic range $\mbox{dyn}(x_{\tau,\alpha})$ were defined in (\ref{eq:Hausdoff dist}) and (\ref{eq:dyn range def}), respectively, and  $\|n(\cdot)\|_{L_{2}}$
is the energy of the additive noise (see (\ref{eq:main})).
\end{thm}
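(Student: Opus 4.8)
The plan is to reduce the theorem to a local convergence statement about the projected Newton's method applied to the smooth objective $F(\cdot)$ on the box $\mathbb{B}(\tau^0,\sigma_1)$, and to verify the geometric hypotheses that make such a statement applicable: namely, that $F(\cdot)$ has a (unique) stationary point $\widetilde\tau$ inside (or on the boundary of) the box, that its reduced Hessian is positive definite throughout the box, and that this stationary point lies within the claimed distance of $\tau$. The standard projected-Newton convergence theory (e.g.\ \cite[Chapter 5]{Kelley1999}) then yields that any limit point of Algorithm II is such a stationary point. So the real work is all in establishing the two analytic facts: (i) a curvature lower bound on the reduced Hessian $\mathcal{R}_{\epsilon}(\partial^2 F/\partial\rho^2)$ uniformly over $\mathbb{B}(\tau^0,\sigma_1)$, and (ii) a perturbation/stationarity bound locating $\widetilde\tau$ near $\tau$ with the error controlled by $\mathrm{dyn}(x_{\tau,\alpha})^2 \cdot \|n\|_{L_2}/\|\alpha\|_2$.

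First I would set up the algebra. Using the explicit gradient \eqref{eq:grad of F} and Hessian \eqref{eq:Hess exp}, I would substitute $\widehat z_{\sigma_2} = G_\tau \alpha + \widehat n_{\sigma_2}$ from \eqref{eq:def of z hat} and expand everything in terms of the Gram-type matrices $G_\rho^* G_\rho$, $G_\rho^* L G_\rho$, $G_\rho^* L^2 G_\rho$, and cross terms with $G_\tau \alpha$ and $\widehat n_{\sigma_2}$. The inner products appearing in these matrices are exactly the quantities controlled by Criteria \ref{lem:corr decay away} and \ref{fact: slow decay in bound-1}: diagonal entries (and near-diagonal entries, since $\rho, \tau \in \mathbb{B}(\tau^0,\sigma_1)$ and $2\sigma_1 \le h(\sigma_2,N)$) are governed by the ``near'' estimates \eqref{eq:near auto corr}--\eqref{eq:sign match}, giving $\langle g,g'\rangle = \Omega(N^2)\cdot d$ and $\|g'\|_{L_2}^2 = \Omega(N^2)$; off-diagonal entries (distinct impulses, separated by at least $\mathrm{sep}(\tau) - 2\sigma_1 \ge 2\sigma_2$ after rescaling, assuming the Phase-I separation hypothesis is compatibly strengthened) are governed by the ``far'' estimates \eqref{eq:far corr of signal}--\eqref{eq:far cor of der der}, all of which carry a factor $O(e^{-\Cr{decay} c_2})$. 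The upshot should be that $G_\rho^* G_\rho = I_K + o(1)$, so $\mathcal P_\rho$ and $\beta_\rho$ are well-behaved, and that the dominant term of the Hessian is the first one in \eqref{eq:Hess exp}, $-2\,\mathrm{diag}(\beta_\rho) G_\rho^* L^2 G_\rho\,\mathrm{diag}(\beta_\rho)$, which is essentially diagonal with entries of order $\|g'\|_{L_2}^2 \beta_\rho[i]^2 = \Omega(N^2)\cdot\beta_\rho[i]^2$; the remaining terms are lower order in the inactive block because they either carry exponential factors or a factor of $d(\rho,\tau) = O(\sigma_1)$. Combined with $\min_i |\beta_\rho[i]| \gtrsim \min_i|\alpha[i]| = \|\alpha\|_\infty / \mathrm{dyn}(x_{\tau,\alpha})$, this gives the reduced Hessian a positive-definite lower bound of order $N^2 \|\alpha\|_\infty^2 / \mathrm{dyn}^2$ on the inactive coordinates, uniformly over the box (this is where the noise condition $\|n\|_{L_2}/\|\alpha\|_\infty = O(1)/\mathrm{dyn}^2$ enters, to keep the noise-dependent Hessian terms from overwhelming this bound).

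Next I would locate $\widetilde\tau$. On the inactive coordinates, a stationary point of \eqref{eq:regularized problem} satisfies $[\partial F/\partial\rho(\widetilde\tau)]_i = 0$; using the sign-consistency \eqref{eq:sign match} and the lower bound \eqref{eq:near der corr}, $\partial F/\partial\rho_i$ is, up to the exponentially small off-diagonal coupling and a noise term of size $O(N)\cdot\beta \cdot \|\widehat n_{\sigma_2}\|_2$, a strictly monotone function of $d(\widetilde\tau[i],\tau[i])$ with slope $\Omega(N^2)\beta_{\widetilde\tau}[i]^2$; setting it to zero forces $d(\widetilde\tau[i],\tau[i]) = O(\|\widehat n_{\sigma_2}\|_2 / (N \beta[i])) = O(\mathrm{dyn}^2 \cdot \|n\|_{L_2}/\|\alpha\|_2)$ (absorbing $\|\widehat n_{\sigma_2}\|_2 \lesssim \|n\|_{L_2}$ and one factor of $\mathrm{dyn}$ each from $1/\beta[i]$ and from the Hessian conditioning). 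On active coordinates the distance is $\sigma_1$ by definition, and the $\min(\cdot, 2\sigma_1)$ in \eqref{eq:final dist bound} accounts for the fact that the constraint $d(\rho,\tau^0)\le\sigma_1$ together with $d(\tau^0,\tau)\le\sigma_1$ never lets $d(\widetilde\tau,\tau)$ exceed $2\sigma_1$ anyway. Finally, invoking the projected-Newton convergence theorem of Kelley — whose hypotheses (smoothness of $F$ on the closed box, positive definiteness of the reduced Hessian near stationary points, the Armijo-type line search in step (f), and the $\epsilon$-active set rule in step (b)) I will have just verified — gives that every limit point of the sequence $\{\tau^j\}$ is a stationary point $\widetilde\tau$, and since all stationary points have been shown to satisfy the distance bound, the theorem follows.

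The main obstacle I anticipate is step (i): getting a Hessian lower bound that is \emph{uniform} over the entire box $\mathbb{B}(\tau^0,\sigma_1)$ and not merely at $\tau$ itself, since $\beta_\rho$ varies with $\rho$ and the cancellation structure that makes the non-dominant terms in \eqref{eq:Hess exp} small relies on delicate sign information (\eqref{eq:sign match}) and on $d(\rho,\tau)$ staying small — which is fine inside the box but requires carefully tracking how the $O(1)$ constants in Criteria \ref{lem:corr decay away}--\ref{fact: slow decay in bound-1} propagate through the matrix inverse $(G_\rho^* G_\rho)^{-1}$ and the projector $I_N - \mathcal P_\rho$. A secondary subtlety is reconciling the separation assumption: the ``far'' criterion needs $d \ge 2\sigma_2 > 2\sigma_1$, so one must check that $\mathrm{sep}(\tau) - 2\sigma_1$ (the worst-case spacing between two perturbed, distinct impulses in the box) still exceeds $2\sigma_2$, which forces $\mathrm{sep}(\tau) \gtrsim \sigma_2$ and hence a mild strengthening of the Phase-I separation hypothesis that should be folded into the asymptotic bookkeeping $c_1, c_2 = \Theta(\log N)$.
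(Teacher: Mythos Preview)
Your proposal follows essentially the same two-step strategy as the paper: (i) establish that the Hessian of $F$ is positive definite on the entire box $\mathbb{B}(\tau^0,\sigma_1)$ (the paper carries this out in Appendix~\ref{sec:Hessian is PD section}, with the same signal/noise decomposition you describe), thereby activating the projected-Newton convergence result of Kelley (Proposition~\ref{prop:Convergence-to-a stationary point}) to conclude that every limit point is stationary; and (ii) locate any stationary point near $\tau$ by analyzing the gradient. Your identification of the dominant Hessian term and the role of the noise condition is correct.

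The real difference---and the one place your sketch has a gap---is in step (ii). You treat inactive and active coordinates separately: on inactive coordinates $[\partial F/\partial\rho]_i=0$ forces $d(\widetilde\tau[i],\tau[i])\le A:=O(1)\cdot\mbox{dyn}^2\cdot\|n\|_{L_2}/\|\alpha\|_2$, while on active coordinates you fall back to the trivial box bound $2\sigma_1$. But then $d(\widetilde\tau,\tau)=\max_i d(\widetilde\tau[i],\tau[i])$ is controlled only by the \emph{larger} of these two numbers, not the smaller; you have not shown that the noise bound $A$ holds at active coordinates. Your parenthetical that the $\min(\cdot,2\sigma_1)$ ``accounts for'' the constraint correctly recognizes $2\sigma_1$ as an always-true trivial bound---but then the content of \eqref{eq:final dist bound} is exactly that $d(\widetilde\tau,\tau)\le A$ holds unconditionally, which you have only argued on the inactive set. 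The fix is to use the KKT sign inequality at each active coordinate together with \eqref{eq:sign match}, but this is missing as written.

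The paper sidesteps the split entirely. It proves (Lemma~\ref{lem:grad is away from zero}) that for \emph{every} $\rho\in\mathbb{B}(\tau^0,\sigma_1)$,
\[
\Bigl\langle \tfrac{\partial F}{\partial\rho}(\rho),\ \mbox{sign}\bigl((\rho\ominus\tau)-\tfrac12\bigr)\Bigr\rangle
\le -\frac{\Omega(N)}{\mbox{dyn}(x_{\tau,\alpha})^2}\,\|\alpha\|_2^2\,d(\rho,\tau) + O(e^{-Cc_2})\|\alpha\|_2^2 + O(N)\|n\|_{L_2}\|\alpha\|_2 + O(N)\|n\|_{L_2}^2,
\]
and then combines this with the variational-inequality form of stationarity (Definition~\ref{def: Stationary-points-of}): since $\tau$ itself is feasible, $\langle\partial F/\partial\rho(\widetilde\tau),\,\mbox{sign}((\widetilde\tau\ominus\tau)-\tfrac12)\rangle\ge 0$. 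These two inequalities together give $d(\widetilde\tau,\tau)\le A$ directly, with no active/inactive case distinction. Your coordinate-wise monotonicity argument is really the same computation disaggregated; the paper's aggregated inner product is what makes the boundary case automatic.
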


A few remarks are in order. 
\begin{rem}\textbf{[Noise-free] }
From (\ref{eq:final dist bound}), we observe that Phase II refines the output of Phase I when the dynamic range and noise level are both moderate. In particular, in the absence of noise, Phase II exactly identifies the correct support: $\widetilde{\tau}=\tau$. 
\end{rem}
\begin{rem}\textbf{[Separation]} \label{rem:separation}
For the two-phase algorithm to succeed (i.e., for Proposition \ref{thm:algorithm 1} and Theorem \ref{thm:grad des algo} to hold), the spike locations should be well-separated. In particular, for sufficiently large $f_C$, one needs 
\begin{equation}\label{eq:asymp sep rem}
\mbox{sep}(\tau) \ge 4\sigma_1= \Omega(1) \cdot \frac{\log f_C}{f_C}
\end{equation}
(as indicated in Proposition \ref{thm:algorithm 1}). 
% Note that, asymptotically (i.e., for large enough $f_C$), (\ref{eq:asymp sep rem}) is always met. 
 
 In contrast, super-resolution via  convex relaxation requires a  separation of $\Omega(1/f_C)$ \cite{Candes2014}. It is not clear whether the extra logarithmic factor in (\ref{eq:asymp sep rem}) is an artifact of the proofs of Proposition \ref{thm:algorithm 1} or Theorem \ref{thm:grad des algo}. We  also recently learned about similar rates (obtained with different techniques) in the context of edge detection from limited Fourier measurements   \cite{Cochran2013}. It appears that further work is needed to find  possible connections and to determine whether the required separation in (\ref{eq:asymp sep rem}) is optimal. 
\end{rem}

\begin{rem}\textbf{[Computational complexity] } As mentioned earlier, the two-phase algorithm for super-resolution is very fast, in part because fast and convenient  means for generating the kernels (namely, DPSWFs, which we recommend) exist, and partly because the search space in Phase II is $K$-dimensional where $K$ (the number of impulses) is often small (see Program (\ref{eq:regularized problem})). Also confer Section \ref{sec:prior art SUPER}. 
\end{rem}

\section{Prior Art}\label{sec:prior art SUPER}

By leveraging the \emph{sparsity} of the signal model in \eqref{eq:xtaualpha}, Cand\`{e}s et al.\ \cite{Candes2014} proposed a super-resolution algorithm that involves solving a convex program---a (typically expensive) SDP to be precise. In the absence of noise, this SDP  precisely recovers the sparse measure $x_{\tau,\alpha}$. More generally, the energy of the smoothed error signal scales with the noise level \cite{Candes2013}. Later, these results were translated into  bounds on the distance between the estimated and true impulse positions \cite{Fernandez-Granda2013}. We remark that  \cite{Candes2014} was followed by several good papers, including \cite{Tang2013,Duval2015,Azais2015,Fyhn2013,
Demanet2013,Liao2014}, that either proposed new super-resolution algorithms or  improved the computational complexity and performance of existing methods. 

But perhaps \cite{Fannjiang2012} is more relevant to the present work. There,  Fannjiang et al.\ modified the orthogonal matching pursuit algorithm to handle the highly coherent over-sampled DFT matrix. To improve the robustness of the algorithm, a local optimization step is skillfully  implemented in each step of their algorithm. This step refines one impulse position $\tau[i]$ at a time while keeping the rest of $\tau$  fixed. The present work differs from \cite{Fannjiang2012} in its use of prolate functions, and  in the depth of its theoretical guarantees. In particular, \cite{Fannjiang2012} does not seem to offer an analogue of Theorem \ref{thm:grad des algo}.

For the sake of demonstration, we compared our algorithm with those in \cite{Candes2014,Fannjiang2012}. Each $x_{\tau,\alpha}$ was generated  with  number of impulses $K=14$,\footnote{For a fair comparison, we assumed that $K$ is known in advance so as to match the setup of \cite{Fannjiang2012}.} uniformly random positions $\tau\in\I^K$, and amplitudes $\alpha\in\R^K$ drawn independently from zero-mean Gaussian distribution with variance $(2f_C+1)^{-1}$.  Additionally, we made sure that the impulse positions were  well-separated: $\mbox{sep}(\tau)\ge 2/f_C$ for every $x_{\tau,\alpha}$. The cut-off frequency was set to $f_C=50$, and we set $\sigma_1=\frac{3/2}{2f_C+1}$ and $\sigma_2=3\sigma_1/2$ in our algorithm.  
Additive low-pass Gaussian noise with energy  $(2f_C+1)\nu^2$ was then added to the observations. Figure  \ref{fig:super-res results} compares the (Hausdorff) distance of the estimated and true impulse positions for various values of $\nu$, and the run-times of the algorithms.

In about $9\%$ of the noise-free trials, the two-phase algorithm failed to exactly recover the impulse positions (but the error was still very small). In these trials, the initial estimate (output of Algorithm I) was not sufficiently close to the true impulse positions and, as a result, the local optimization phase (Algorithm II) converged to a local (as opposed to global) minimum. Recall that, according to Remark \ref{rem:separation}, the two-phase algorithm requires a separation of nearly $\log(f_C)/f_C$ to succeed (in contrast to the separation of $2/f_C$ is this experiment). 

%The typical run-time for our algorithm, Fannjiang's, and Cand\`{e}s' were  $1/3$, $11$, and $14$ seconds, respectively,  on a laptop computer.

\begin{center}
\begin{figure}[h]
\begin{center}
\includegraphics[scale=.5]{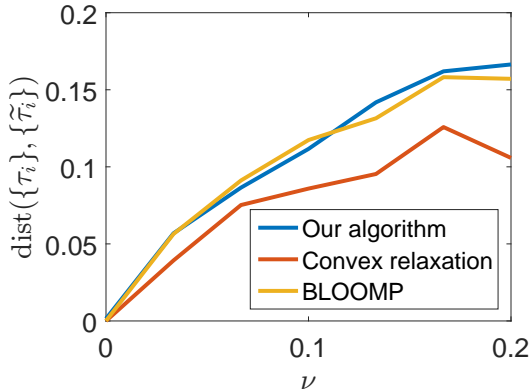}

\caption{Comparing our algorithm to the super-resolution algorithms in \cite{Candes2014,Fannjiang2012}: the horizontal axis reflects the noise level and the vertical axis displays the error, namely the distance between the estimated and true impulse positions. The average run-time for our algorithm, Fannjiang's, and Cand\`{e}s' were  $0.4$, $4$, and $8.3$ seconds, respectively on a laptop computer. (We made no attempts to optimize our code.)
%\note{I think there are lots of things we can try even those Tyrone suggested. I don't think of this work as dependent on simulations although we might share the code because of the nature of this paper. I also think that maybe better parameters might help us beat BLOOMP. So if I find the time I'll run this with other parameters.}
}
\label{fig:super-res results}
\end{center}
\end{figure}
\end{center}

The super-resolution problem in this paper and the problem of line spectral estimation are closely related (once the time and frequency domains are exchanged) \cite{thomson1982spectrum,tang2015near}. We particularly recognize  Thomson's multitaper algorithm for spectral estimation \cite{thomson1982spectrum} due to its use of prolate functions and its popularity. In Thomson's algorithm, to lower the estimation bias, data is passed through multiple \emph{tapers}. The spectra of different channels are then  averaged (often with weights) to estimate the spectrum of the underlying random process (that generated the data). Because of their finite support, orthogonality, and negligible \emph{spectral leakage}, the Fourier series of the DPSWFs (also known as DPSSs) constitute an ideal choice for the tapers. Beyond these commonalities, our work is set apart from \cite{thomson1982spectrum} in its particular model (combination of impulses), different operating regimes (diminishing $\sigma$ here versus fixed $\sigma$ in \cite{thomson1982spectrum}), and 
the strong supporting theory provided here.

\section{Theory}

\subsection{Proof of Proposition \protect\ref{thm:algorithm 1} (Phase I)}\label{sec:proof of phase I}

Asymptotically (i.e., for large enough $N$), it holds that
\begin{equation}\label{eq:asymp sep}
\mbox{sep}\left(\tau\right)\ge4\sigma_{1}=\frac{4c_{1}}{N},
\end{equation}
because the entries of $\tau$ are assumed to be distinct. 
That is to say that $\tau$ is asymptotically well-separated for our
purposes here, as we see shortly. For $t\in\I$, we next observe that
\begin{align}
z_{\sigma_{1}}^{1}(t) & =(g_{\sigma_{1,N}}\circledast y)(t)\qquad \mbox{(see Algorithm I)}\nonumber \\
 & =(g_{\sigma_{1},N}\circledast x_{\tau,\alpha})(t)+(g_{\sigma_{1,N}}\circledast n)(t)\nonumber \\
 & =\sum_{i=1}^{K}\alpha[i]\cdot g_{\sigma_{1},N}(t\ominus\tau[i])+(g_{\sigma_{1},N}\circledast n)(t)
\qquad \mbox{(see \eqref{eq:xtaualpha})} \nonumber \\
 & =:\sum_{i=1}^{K}\alpha[i]\cdot g_{\sigma_{1},N}(t\ominus\tau[i])+n_{\sigma_{1}}(t).\label{eq:rep of chi1-1}
\end{align}
The second line above holds because, by assumption, $g_{\sigma_{1},N}(\cdot)$
too is band-limited to $\F=[-f_{C}:f_{C}]$. 
%We also conveniently suppressed  the dependence of $z^1_{\sigma_1}(\cdot)$ on $N$.

Under Criterion \ref{cri:(Sharp-decay-of}, the fast decay of the
kernel $g_{\sigma_{1},N}(\cdot)$ guarantees that $z_{\sigma_{1}}^{1}(t)$
is small when $t$ is away from the impulses and large otherwise.
Indeed, for $t\in\mathbb{I}$, whenever 
$$
\min_{i}d(t,\tau[i])\ge\sigma_{1},
$$
we argue as follows. Without loss of generality, let $\tau[1]$ be
the location of the closest impulse to $t$, $\tau[2]$ the second
closest impulse, and so on. Then the fact that $\mbox{sep}(\tau)\ge2\sigma_{1}$
(asymptotically) implies that
\begin{equation}\label{eq:separation in aI}
d(t,\tau[i]),\,\, d(t,\tau[i+1])\ge i\cdot\sigma_{1},\qquad i\in[1:K]\,\mbox{ and odd}.
\end{equation}
Then it follows from (\ref{eq:rep of chi1-1}) and Criterion \ref{cri:(Sharp-decay-of}
that
\begin{align}
\left|z_{\sigma_{1}}^{1}(t)\right| & \le\max_{i}\left|\alpha[i]\right|\cdot\sum_{i=1}^{K}\left|g_{\sigma_{1,N}}(t\ominus\tau[i])\right|+\left\Vert n_{\sigma_{1}}(\cdot)\right\Vert _{L_{\infty}} 
\nonumber \\
 & =\max_{i}\left|\alpha[i]\right|\cdot\frac{O(e^{-\Cr{decay}c_{1}})}{\sqrt{N}}\sum_{i=1}^{K}\frac{1}{\sin\left(\pi(t\ominus\tau[i])\right)}+\left\Vert n_{\sigma_{1}}(\cdot)\right\Vert _{L_{\infty}}\nonumber \\
 & \le\max_{i}\left|\alpha[i]\right|\cdot\frac{O(e^{-\Cr{decay}c_{1}})}{\sqrt{N}}\sum_{i=1}^{K}\frac{1}{\sin\left(\pi\cdot d(t,\tau[i])\right)}+\left\Vert n_{\sigma_{1}}(\cdot)\right\Vert _{L_{\infty}} \qquad \mbox{(see \eqref{eq:wrip around})} \nonumber \\
 & \le\max_{i}\left|\alpha[i]\right|\cdot\frac{O(e^{-\Cr{decay}c_{1}})}{\sqrt{N}}\sum_{0<i\sigma_{1}\le\frac{1}{2}}\frac{1}{\sin\left(\pi i\sigma_{1}\right)}+\left\Vert n_{\sigma_{1}}(\cdot)\right\Vert _{L_{\infty}},\qquad \mbox{(see \eqref{eq:separation in aI})}\label{eq:far away bound 1}
\end{align}
asymptotically. We can further simplify the bound above by asymptotically
controlling the summation in the last line as follows:
\begin{align}
& \sum_{0<i\sigma_{1}\le\frac{1}{2}}\frac{1}{\sin\left(\pi i\sigma_{1}\right)} \nonumber\\
& \le\frac{1}{\sin\left(\pi\sigma_{1}\right)}+\sigma_{1}^{-1}\int_{\sigma_{1}}^{\frac{1}{2}}\frac{1}{\sin\left(\pi t\right)}\, dt
\qquad 
\left( 
\sin(\pi t) \mbox{ is increasing on } [0,1/2]
\right)
\nonumber
\\
 & \le\frac{1}{\sin\left(\pi\sigma_{1}\right)}+\sigma_{1}^{-1}\sqrt{\frac{1}{2}-\sigma_{1}}\cdot\sqrt{\int_{\sigma_{1}}^{\frac{1}{2}}\frac{1}{\sin^{2}\left(\pi t\right)}\, dt}
\qquad \mbox{(Cauchy-Shwarz inequality)} 
\nonumber \\
 & \le\frac{1}{\sin\left(\pi\sigma_{1}\right)}+\frac{\sqrt{\cot(\pi\sigma_{1})}}{\sigma_{1}\sqrt{2\pi}}
\nonumber \\
 & =O\left(\left(\frac{N}{c_{1}}\right)^{\frac{3}{2}}\right).
 \qquad \left( \sigma_1 = \frac{c_1}{N},\,\, c_1=o(N)\right)\label{eq:sum to int}
\end{align}
%Above, the last
%line follows because $\sigma_{1}=\frac{c_{1}}{N}$ and $c_{1}=o(N)$.
Substituting the estimate above back into (\ref{eq:far away bound 1}),
we find that
\begin{align}
\left|z_{\sigma_{1}}^{1}(t)\right| 
& \le\max_{i}\left|\alpha[i]\right|\cdot\frac{O(e^{-\Cr{decay}c_{1}})}{\sqrt{N}}\sum_{0<i\sigma_{1}\le\frac{1}{2}}\frac{1}{\sin\left(\pi i\sigma_{1}\right)}+\left\Vert n_{\sigma_{1}}(\cdot)\right\Vert _{L_{\infty}}\nonumber\\
 & =\max_{i}\left|\alpha[i]\right|\cdot O(Nc_{1}^{-\frac{3}{2}}e^{-\Cr{decay}c_{1}})+\left\Vert n_{\sigma_{1}}(\cdot)\right\Vert _{L_{\infty}}\nonumber\\
 & \le \max_{i}\left|\alpha[i]\right|\cdot O(Ne^{-C c_{1}})+\left\Vert n_{\sigma_{1}}(\cdot)\right\Vert _{L_{\infty}}\qquad \left(c_1\rightarrow\infty \right)\nonumber \\
 & \le2\left\Vert n_{\sigma_{1}}(\cdot)\right\Vert _{L_{\infty}},\qquad \left( c_1 = \Theta(\log N)  \mbox{ with large enough lower bound}\right)
 \label{eq:far away bound 11}
\end{align}
%and, consequently, 
%\begin{align}
%\left|z_{\sigma_{1}}^{1}(t)\right| 
%%& \le\max_{i}\left|\alpha[i]\right|\cdot\frac{O(e^{-\Cr{decay}c_{1}})}{\sqrt{N}}\sum_{0<i\sigma_{1}\le\frac{1}{2}}\frac{1}{\sin\left(\pi i\sigma_{1}\right)}+\left\Vert n_{\sigma_{1}}(\cdot)\right\Vert _{L_{\infty}}\nonumber\\
%% & =\max_{i}\left|\alpha[i]\right|\cdot O(Nc_{1}^{-\frac{3}{2}}e^{-\Cr{decay}c_{1}})+\left\Vert n_{\sigma_{1}}(\cdot)\right\Vert _{L_{\infty}}\nonumber\\
% & \le \max_{i}\left|\alpha[i]\right|\cdot O(Ne^{-C c_{1}})+\left\Vert n_{\sigma_{1}}(\cdot)\right\Vert _{L_{\infty}}\qquad \left(c_1\rightarrow\infty \right)\nonumber \\
% & \le2\left\Vert n_{\sigma_{1}}(\cdot)\right\Vert _{L_{\infty}},\qquad \left( c_1 = \Theta(\log N)  \mbox{ with large enough lower bound}\right)
% \label{eq:far away bound 11}
%\end{align}
asymptotically.
% and whenever $\min_{i}d(t,\tau[i])\ge\sigma_{1}$.
%The last line holds since we assume that $c_{1}=\Omega(\log N)$ asymptotically
%and with a large enough constant. \textbf{ }
%To control $z_{\sigma_{1}}^{1}(t)$
%when $t$ is far from all impulses, it remains to control the noise
Let us simplify the noise term $\|n_{\sigma_{1}}(\cdot)\|_{L_{\infty}}$. Note
that
\begin{align}
& \left\Vert n_{\sigma_{1}}(\cdot)\right\Vert _{L_{\infty}} \nonumber\\
& =\left\Vert  \left( g_{\sigma_{1},N}\circledast n \right)(\cdot) \right\Vert _{L_{\infty}}\qquad \mbox{(see \eqref{eq:rep of chi1-1})} \nonumber\\
 & =\left\Vert \int_{\mathbb{I}}g_{\sigma_{1},N}(t')\cdot n(t-t')\, dt'\right\Vert _{L_{\infty}}\nonumber\\
 & \le\left\Vert g_{\sigma_{1},N}(\cdot)\right\Vert _{L_{1}}\left\Vert n(\cdot)\right\Vert _{L_{\infty}}\qquad 
 \mbox{(Holder inequality)} \nonumber\\
 & \le\left\Vert g_{\sigma_{1},N}(\cdot)\right\Vert _{L_{2}}\left\Vert n(\cdot)\right\Vert _{L_{\infty}}=\left\Vert n(\cdot)\right\Vert _{L_{\infty}}. \qquad \mbox{(Cauchy\textendash Schwarz inequality, and Criterion \ref{cri:(Sharp-decay-of})}
 \label{eq:two noise terms}
\end{align}
%where the third line uses the Holder inequality. 
%The last line follows
%from the Cauchy\textendash Schwarz inequality. 
Overall, from \eqref{eq:far away bound 11}, we conclude
that
\begin{equation}
\left|z_{\sigma_{1}}^{1}(t)\right|
\le 2\left\| n_{\sigma_1}(\cdot) \right\|_{L_{\infty}}
\le2\left\Vert n(\cdot)\right\Vert _{L_{\infty}},\qquad
 %\mbox{(see \eqref{eq:far away bound 11})}
\mbox{if }\min_{i}d(t,\tau[i])\ge\sigma_{1},
\label{eq:upp bnd away}
\end{equation}
asymptotically. In words, $|z_{\sigma_{1}}^{1}(\cdot)|$ is small
away from the impulses.

At impulses, on the contrary, $|z_{\sigma_{1}}^{1}(\cdot)|$
remains large as we argue next. Without loss of generality, consider the first impulse positioned at $\tau[1]$.
 We observe that
 \begin{align*}
& \left|z_{\sigma_{1}}^{1}(\tau[1])\right| \nonumber\\
& =\left|\sum_{i=1}^{K}\alpha[i]\cdot g_{\sigma_{1},N}(\tau[1]\ominus\tau[i])+n_{\sigma_{1}}(\tau[1])\right|
\qquad \mbox{(see \eqref{eq:far away bound 1})}
\nonumber \\
 & \ge\left|\alpha[1] \right| \cdot \left|g_{\sigma_{1},N}(0)\right|-\max_{i}\left|\alpha[i]  \right|
\cdot \sum_{i=2}^{K} \left|g_{\sigma_{1},N}(\tau[1]\ominus\tau[i])\right|-\left\Vert n_{\sigma_{1}}(\cdot)\right\Vert _{L_{\infty}}\nonumber \\
 & =|\alpha[1]|\cdot\Omega\left(\sqrt{\frac{N}{c}}\right)-2\left\Vert n_{\sigma_{1}}(\cdot)\right\Vert _{L_{\infty}} 
 \qquad \mbox{(Criterion \ref{cri:(Sharp-decay-of}, similar to \eqref{eq:far away bound 1})}
 \nonumber \\
 & \ge\min_{j}|\alpha[j]|\cdot\Omega\left(\sqrt{\frac{N}{c}}\right)-2\left\Vert n(\cdot)\right\Vert_{L_{\infty}}.
\qquad \mbox{(see \eqref{eq:two noise terms})}
 \end{align*}
Next, we introduce the dynamic range of the signal (namely, $\mbox{dyn}(x_{\tau,\alpha})$) in order to simplify the expressions. More specifically, we continue by writing that 
\begin{align}
& \left|z_{\sigma_{1}}^{1}(\tau[1])\right| \nonumber\\
%& =\left|\sum_{i=1}^{K}\alpha[i]\cdot g_{\sigma_{1},N}(\tau[1]\ominus\tau[i])+n_{\sigma_{1}}(\tau[1])\right|
%\qquad \mbox{(see \eqref{eq:far away bound 1})}
%\nonumber \\
% & \ge\left|\alpha[1] \right| \cdot \left|g_{\sigma_{1},N}(0)\right|-\max_{i}\left|\alpha[i]  \right|
%\cdot \sum_{i=2}^{K} \left|g_{\sigma_{1},N}(\tau[1]\ominus\tau[i])\right|-\left\Vert n_{\sigma_{1}}(\cdot)\right\Vert _{L_{\infty}}\nonumber \\
% & =|\alpha[1]|\cdot\Omega\left(\sqrt{\frac{N}{c}}\right)-2\left\Vert n_{\sigma_{1}}(\cdot)\right\Vert _{L_{\infty}} 
% \qquad \mbox{(Criterion \ref{cri:(Sharp-decay-of}, similar to \eqref{eq:far away bound 1})}
% \nonumber \\
% & \ge\min_{j}|\alpha[j]|\cdot\Omega\left(\sqrt{\frac{N}{c}}\right)-2\left\Vert n(\cdot)\right\Vert_{L_{\infty}}
%\qquad \mbox{(see \eqref{eq:two noise terms})}
%\nonumber \\
 & \ge \frac{\max_{j}|\alpha[j]|}{\mbox{dyn}(x_{\tau,\alpha})}\cdot \Omega\left(\sqrt{\frac{N}{c}}\right)-2\left\Vert n(\cdot)\right\Vert _{L_{\infty}}
\qquad \mbox{(see \eqref{eq:dyn range def})}
 \nonumber \\
 & \ge\frac{\max_{j}|\alpha[j]|}{\mbox{dyn}(x_{\tau,\alpha})}\cdot \Omega\left(\sqrt{\frac{N}{c}}\right)-2\left\Vert n(\cdot)\right\Vert _{L_{\infty}},
 \label{eq:lower bnd-1}
\end{align}
asymptotically.  
%Here, the fourth line holds under Criterion \ref{cri:(Sharp-decay-of}.
%We also used our earlier argument to bound the sum in the second to last line.
%The last line follows because of the assumption that $c_{1}=\Omega(\log N)$
%asymptotically (with a sufficiently large constant). 
 In words, (\ref{eq:lower bnd-1})
states that $z_{\sigma_{1}}^{1}(\tau[i])$ is bounded away from zero (for every $i$).
\begin{comment}
For the record, if $|\alpha[i_{M}]|=\|\alpha\|_{\infty}$ is the largest
amplitude in $x_{\tau,\alpha}$, the argument underlying (\ref{eq:lower bnd-1})
yields
\begin{equation}
\|g\circledast x_{\tau,\alpha}\|_{L_{\infty}}\ge\left|(g\circledast x_{\tau,\alpha})(\tau[i_{M}])\right|\ge\left(1-c_{1}e^{-\Cr{decay}c_{1}}\right)\max_{i}|\alpha[i]|.\label{eq:two maxes}
\end{equation}
\end{comment}
Put differently, for large enough $N$, there exists a constant $\Cl{templo}>0$ 
such that
\begin{equation}\label{eq:lower bnd-11}
\left|z_{\sigma_{1}}^{1}(\tau[i])\right|\ge\Cr{templo}\cdot \frac{\|\alpha\|_\infty}{\mbox{dyn}(x_{\tau,\alpha})}\sqrt{\frac{N}{c}}-2\left\Vert n(\cdot)\right\Vert _{L_{\infty}},\qquad i\in[1:K].
\end{equation}
By comparing (\ref{eq:upp bnd away}) and (\ref{eq:lower bnd-11}), we observe that if
\begin{equation}
\frac{\|n(\cdot)\|_{L_{\infty}}}{\|\alpha\|_{\infty}}\le\frac{\Cr{templo}}{4}\cdot\frac{1}{\mbox{dyn}(x_{\tau,\alpha})}\cdot\sqrt{\frac{N}{c}},\label{eq:pre SNR}
\end{equation}
the lower bound is (\ref{eq:lower bnd-11}) does not exceed the upper
bound in (\ref{eq:upp bnd away}). All quantities $\|\alpha\|_{\infty}$,
$\mbox{dyn}(x_{\tau,\alpha})$, and $\|n(\cdot)\|_{L_{\infty}}$ are
independent of $c$ and $N$. Consequently, (\ref{eq:pre SNR}) is
met asymptotically (i.e., for large enough $N$). As a result, $\tau^{0}[1]$
(where $|z_{\sigma_{1}}^{1}(\cdot)|$ achieves its maximum on $\mathbb{I}$)
is within a radius $\sigma_{1}$ of the set $\tau$, i.e.,
\[
\min_{i}d(\tau^{0}[1],\tau[i])\le\sigma_{1}<\frac{1}{2}.
\]
Without loss of generality, suppose that $\tau[1]$ is the unique
entry of $\tau$ that achieves the minimum above, i.e.\ $d(\tau^{0}[1],\tau[1])=\min_{i}d(\tau^{0}[1],\tau[i])$.
Indeed, the uniqueness is guaranteed because $\tau$ is asymptotically well-separated (see \eqref{eq:asymp sep}). \textbf{ }Then, according to (\ref{eq:upp bnd away})\emph{,
}setting to zero a neighborhood of radius $2\sigma_{1}$ of $\tau^{0}[1]$
(to obtain $z_{\sigma_{1}}^{2}(\cdot))$ removes the \emph{bump }located
at $\tau[1]$. At the same time, since $\mbox{sep}(\tau)\ge4\sigma_{1}$
by \eqref{eq:asymp sep}, altering this neighborhood does not remove the bumps
located at $\tau[i]$, $i>1$. Therefore, $K$ repetitions of this
process recovers every member of $\tau$ to a precision of $\sigma_{1}$.
The algorithm terminates after $K$ iterations (so that $\widetilde{K}=K$)
because
\[
\|z_{\sigma_{1}}^{K+1}(\cdot)\|_{L_{\infty}}\le2\left\Vert n(\cdot)\right\Vert _{L_{\infty}}=\eta,
\]
asymptotically and according to (\ref{eq:upp bnd away}). In other
words, at this point, all the bumps have been removed and we have
reached the noise/interference level. This completes the proof of
Proposition \ref{thm:algorithm 1}.

\begin{comment}
To complete the proof of Theorem \ref{thm:algorithm 1}, we need to
translate the condition (\ref{eq:ups and lows-1}) into a requirement
on the dynamic range of the signal:
\[
\mbox{D}(x_{\tau,\alpha})\lesssim\min\left(\frac{\theta\left\Vert \psi_{c_{1},\theta}\circledast x_{\tau,\alpha}\right\Vert _{\mathcal{L}_{\infty}}}{\left\Vert n\right\Vert _{\mathcal{L}_{\infty}}},e^{\pi c_{1}/2}\right)\approx\min\left(\theta\sqrt{\frac{N_{\theta}}{c_{1}}}\cdot\frac{\max_{i}\left|\alpha[i]\right|}{\left\Vert n\right\Vert _{\mathcal{L}_{\infty}}},e^{\pi c_{1}/2}\right),
\]
which completes the proof of Theorem \ref{thm:algorithm 1}. Above
we used the inequality $\frac{2}{a+b}\ge\min\left(a^{-1},b^{-1}\right)$
for $a,b\ne0$.
\end{comment}

%

\subsection{Proof of Theorem  \protect\ref{thm:grad des algo} (Phase II)\label{sec:Proof-of-Theorem grad descent}}

At this point, we begin to study the performance of Algorithm II.
\emph{Stationarity} is a necessary (first-order) condition for a feasible
point in $\mathbb{B}(\tau^{0},\sigma_{1})$ to be a local minimizer
of Program \eqref{eq:regularized problem}. In a constrained program,
a feasible point is stationary if the gradient of the objective function
makes an acute angle with every feasible direction. To be concrete,
we recall the definition of a stationary point \cite{Kelley1999} (slightly adjusted to match our settings).

\begin{defn}
\label{def: Stationary-points-of}\textbf{{[}Stationary point{]}}
\emph{In Program }\eqref{eq:regularized problem},\emph{ }$\rho_{s}\in\mathbb{B}(\tau^{0},\sigma_{1})$\emph{
is a stationary point if and only if
\[
\left\langle \frac{\partial F}{\partial\rho}(\rho_{s}),
\mbox{sign}\left( (\rho_s\ominus\rho) -\frac{1}{2}\right)
\right\rangle \ge0,\qquad\forall\rho\in\mathbb{B}(\tau^{0},\sigma_{1}).
\]
The entries of the sign vector above are $\{\mbox{sign}((\rho_s[i]\ominus 
\rho[i])-\frac{1}{2})\}$, $i\in[1:\Kt]$. 
}
\end{defn}
\begin{comment}
We remark that the subtraction above is over $\mathbb{R}$ and not
the unit circle. (In general, $\rho-\rho_{s}\ne\rho\ominus\rho_{s}$.)
\end{comment}
While not the focus of our analysis, one can establish that
the gradient projection algorithm outlined in  \eqref{eq:the grad desc alg}  (with appropriate step sizes $\{\delta^{j}\}$) always converges to a stationary
point of Program \eqref{eq:regularized problem}. (Also confer  {\cite[Theorem 5.4.6]{Kelley1999}.)%
\begin{comment}
\begin{prop}
\emph{\label{thm:grad proj} }If the Lipschitz constant of $\frac{\partial F}{\partial\rho}(\cdot)$
on $\mathbb{B}(\tau^{0},\sigma_{1})$ is finite, then the gradient
projection algorithm (Algorithm II) converges to a stationary point
of Program \ref{eq:regularized problem}.
\end{prop}
Note that $f(\cdot,\cdot)$ and therefore $F(\cdot)$ is twice differentiable.
Consequently, on the compact set $\mathbb{B}(\tau^{0},\sigma_{1})$,
$\frac{\partial F}{\partial\rho}(\cdot)$ has a finite Lipschitz constant.
Proposition \ref{thm:grad proj}, then, is in force and dictates that
the gradient projection algorithm converges to a stationary point
$\widetilde{\tau}\in\mathbb{B}(\tau^{0},\sigma_{1})$.
\end{comment}

Similarly, we prove next that the projected Newton's method  in Algorithm II converges
to a stationary point of Program \eqref{eq:regularized problem}. This
claim depends on the following result adapted from \cite[Theorem 5.5.2]{Kelley1999}.
\begin{prop}
\emph{\textbf{{[}Convergence to a stationary point{]} }} \label{prop:Convergence-to-a stationary point}Any
limit point of the sequence $\{\tau^{j}\}_{j}$ produced by Algorithm
II is a stationary point of Program \eqref{eq:regularized problem}
if
\begin{itemize}
\item the gradient   is Lipschitz continuous, i.e.,
$$
\left\| \frac{\partial F}{\partial \rho}(\rho_1)- \frac{\partial F}{\partial \rho}(\rho_2)\right\|_2 \le L \cdot d\left( 
\rho_1,\rho_2
\right),\qquad \forall\rho\in \mathbb{B}(\tau^0,\sigma_1), 
$$
for some finite $L$,
\item the Hessian is positive definite on the feasible set, i.e.,
$$
\frac{\partial^2 F}{\partial \rho^2}(\rho) \succ  0,\qquad \forall \rho\in \mathbb{B}(\tau^0,\sigma_1),
$$
\item both the spectral norm and the condition number of the Hessian are bounded on $\mathbb{B}(\tau^0,\sigma_1)$, and 
\item lastly, $0<\overline{\epsilon}\le \epsilon^j<\sigma_1$ for every $j$ and for some $\overline{\epsilon}$.
\end{itemize}
\end{prop}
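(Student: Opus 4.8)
The plan is to prove that Algorithm II, once read in suitable coordinates, is literally an instance of the projected Newton method of \cite[Algorithm 5.5.2]{Kelley1999}, and then to invoke \cite[Theorem 5.5.2]{Kelley1999} after checking that its hypotheses are exactly the four bullets above. The only genuine work is a harmless change of coordinates that trades the torus $\I^{\Kt}$ for a Euclidean box. Under the standing assumption that the entries of $\tau^0$ are distinct (which holds asymptotically once Proposition \ref{thm:algorithm 1} is in force), the ball $\mathbb{B}(\tau^0,\sigma_1)$ decomposes into a product of arcs of length $2\sigma_1<1$; lifting each arc through the covering map $\mathbb{R}\to\mathbb{R}/\mathbb{Z}$ (injectively, since $2\sigma_1<1$) identifies $\mathbb{B}(\tau^0,\sigma_1)$ with an ordinary box $\Omega=\prod_{i=1}^{\Kt}[\tau^0[i]-\sigma_1,\tau^0[i]+\sigma_1]\subset\mathbb{R}^{\Kt}$. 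Since the Fourier-domain quantities $G_\rho$ and $\widehat{z}_{\sigma_2}$ depend on $\rho[i]$ only through $e^{-\imag 2\pi l\rho[i]}$, which is $\mathbb Z$-periodic, the objective $F(\cdot)$ and the formulas \eqref{eq:grad of F} and \eqref{eq:Hess exp} pull back to $\Omega$ unchanged; the wraparound metric $d(\cdot,\cdot)$ agrees with the Euclidean metric on $\mathbb{B}(\tau^0,\sigma_1)$; the projection $\mathcal{P}_{\mathbb{B}(\tau^0,\sigma_1)}$ becomes the coordinatewise projection onto $\Omega$; the (in)active and $\epsilon$-active sets of \eqref{eq:active constraints} and \eqref{eq:e active const} become the standard bound-constraint active sets; the ($\epsilon$-)reduced Hessians of \eqref{eq:red Hess} and \eqref{eq:e red Hess} coincide with Kelley's; and the Armijo test in step (f) of Algorithm II is precisely Kelley's sufficient-decrease condition. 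Hence, in these coordinates, the sequence $\{\tau^j\}$ is exactly the iteration of \cite[Algorithm 5.5.2]{Kelley1999} applied to $(F,\Omega)$.

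It then remains to match the four bullets with the hypotheses of \cite[Theorem 5.5.2]{Kelley1999}: a globally Lipschitz gradient on the feasible set, a Hessian that is positive definite there with bounded spectral norm and bounded condition number, and an $\epsilon$-schedule confined to $(0,\sigma_1)$ and bounded away from $0$. These are the stated bullets verbatim. One small observation deserves mention: the descent step (c) of Algorithm II uses the \emph{reduced} Hessian, but positive definiteness of the full Hessian on the convex set $\mathbb{B}(\tau^0,\sigma_1)$ forces every principal submatrix — hence (since the reduced Hessian is block-diagonal: the identity on $\mathbb{A}_\epsilon(\rho)$ and the corresponding principal submatrix of $\partial^2F/\partial\rho^2$ on $\mathbb{A}_\epsilon^C(\rho)$) — to be positive definite with eigenvalues trapped by the extreme eigenvalues of the full Hessian, so the second and third bullets also control the invertibility and conditioning of the matrices inverted in step (c). With the hypotheses in place, \cite[Theorem 5.5.2]{Kelley1999} yields that every limit point of $\{\tau^j\}$ is a stationary point of $\min_{\rho\in\Omega}F(\rho)$; undoing the lift and writing out the first-order optimality condition for a box recovers exactly the inequality in Definition \ref{def: Stationary-points-of}.

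The one delicate point — and the place where ``adapted from'' does real work — is the interplay between the last bullet and step (e) of Algorithm II, which resets $\epsilon^{j+1}=\min(\|\tau^j(1)-\tau^j\|_2,\sigma_1)$: this quantity tends to $0$ along any convergent subsequence, so a literal uniform lower bound $\overline{\epsilon}>0$ cannot be a running property of the iteration but must be read, following the analysis in \cite{Kelley1999}, as the guarantee that the active set is correctly identified after finitely many iterations, after which the $\epsilon$-reduced Hessian stabilizes and the classical Newton argument takes over. Accordingly, in this proposition the four bullets are taken precisely as the hypotheses under which the conclusion of \cite[Theorem 5.5.2]{Kelley1999} holds. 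The substantive remaining task — showing that the quantitative bullets (Lipschitz gradient, uniformly positive-definite and well-conditioned Hessian on $\mathbb{B}(\tau^0,\sigma_1)$) actually hold for this particular $F$ under Criteria \ref{lem:corr decay away} and \ref{fact: slow decay in bound-1}, together with the choice $2\sigma_1\le h(\sigma_2,N)\le\sigma_2$ — is where I expect the main effort to go, and it is carried out in the remainder of Section \ref{sec:Proof-of-Theorem grad descent}.
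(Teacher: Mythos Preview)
Your proposal is correct and matches the paper's approach: the paper itself offers no proof of this proposition beyond the attribution ``adapted from \cite[Theorem 5.5.2]{Kelley1999},'' and your write-up supplies the details of that adaptation (the torus-to-box coordinate change, the identification of Algorithm~II with Kelley's projected Newton iteration, and the observation that positive-definiteness of the full Hessian controls the reduced Hessian). One small correction to your final paragraph: because step~(e) exits whenever $\|\tau^j(1)-\tau^j\|_2\le\eta$, the reset $\epsilon^{j+1}=\min(\|\tau^j(1)-\tau^j\|_2,\sigma_1)$ actually stays bounded below by $\min(\eta,\sigma_1)$ as long as the iteration continues, so the lower bound $\overline{\epsilon}$ in the fourth bullet is not in tension with the algorithm as you suggest.
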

By (\ref{eq:grad of F}), $\frac{\partial F}{\partial\rho}(\cdot)$
is continuous, and since $\mathbb{B}(\tau^{0},\sigma_{1})$ is compact,
$\frac{\partial F}{\partial\rho}(\cdot)$ is Lipschitz contiuous too.
In Appendix \ref{sec:Hessian is PD section}, we establish that $\frac{\partial^{2}F}{\partial\rho^{2}}(\cdot)$
is asymptotically positive definite on $\mathbb{B}(\tau^{0},\sigma_{1})$ (and moreover bounded from below by a positive factor of identity matrix) 
as long as
\[
\frac{\left\Vert n(\cdot)\right\Vert _{L_2}}{\|\alpha\|_{\infty}}=\frac{O(1)}{\mbox{dyn}\left(x_{\tau,\alpha}\right)^{2}},
\]
with a small enough constant. 
\begin{comment}
(when $c_{2},N\rightarrow\infty$, $c_{2}=o(N)$, and $c_{2}=\Omega(\log N)$).
\end{comment}
Then, since the eigenvalues of a matrix are continuous functions of
its entries, it follows that both spectral norm and condition number
of the Hessian are bounded on $\mathbb{B}(\tau^{0},\sigma_{1})$. The last item in Proposition \ref{prop:Convergence-to-a stationary point} holds by design (see Algorithm II).  
In summary, Proposition \ref{prop:Convergence-to-a stationary point}
is in force and any limit point of Algorithm II is a stationary point
of Program \eqref{eq:regularized problem}.

Upon existence, let $\widetilde{\tau}\in\I^{K}$ denote
one such  limit point which%Then, since Proposition \ref{prop:Convergence-to-a stationary point}
%is in force asymptotically, $\widetilde{\tau}$ is a stationary point
%of Program \eqref{eq:regularized problem} and, 
, by Definition \ref{def: Stationary-points-of}, satisfies
\begin{equation}
\left\langle \frac{\partial F}{\partial\rho}(\widetilde{\tau}),
\mbox{sign}\left(
\left(\widetilde{\tau}\ominus \rho\right) - \frac{1}{2}
\right)
\right\rangle
\ge0,\qquad\forall\rho\in\mathbb{B}(\tau^{0},\sigma_{1}).\label{eq:cons of stationarity}
\end{equation}
 To control the distance of $\widetilde{\tau}$ from the true position  
vector $\tau$, we upper-bound the above inner product as follows.
See Appendix \ref{sec:Proof-of-Lemma grad is away-1} for the proof.
\begin{lem}
\label{lem:grad is away from zero}For integer $N$ and $0<c_{1}<c_{2}$
(both functions of $N$), let $\sigma_{1}=\frac{c_{1}}{N}$ and $\sigma_{2}=\frac{c_{2}}{N}$.
Suppose that the kernel $g_{\sigma_{2},N}(\cdot)$ satisfies Criteria
\ref{lem:corr decay away} and \ref{fact: slow decay in bound-1}
(with $\sigma=\sigma_{2}$). Suppose also that $2\sigma_{1}\le h(\sigma_{2},N)\le\sigma_{2}$ (see Criterion \ref{fact: slow decay in bound-1})%
\begin{comment}
and that $\mbox{sep}(\tau)\ge4\sigma_{2}$
\end{comment}
. Lastly, define $F(\cdot)$ as in (\ref{eq:main program}),  and recall
the quantities involved there.

Then, for every $\rho\in\mathbb{B}(\tau^{0},\sigma_{1})$, it holds
asymptotically that
\begin{align*}
\left\langle \frac{\partial F}{\partial\rho}(\rho),
\mbox{sign}
\left(
(\rho \ominus \tau)- \frac{1}{2}
\right)
\right\rangle  & =-\frac{\Omega(N)}{\mbox{dyn}(x_{\tau,\alpha})^{2}}\cdot\|\alpha\|_{2}^{2} \cdot d(\rho,\tau)
+O(e^{-Cc_{2}})\cdot\|\alpha\|_{2}^{2}
\\
 & \qquad+O(N)\cdot\left\Vert n(\cdot)\right\Vert _{L_{2}}\left\Vert \alpha\right\Vert _{2}+O(N)\cdot\left\Vert n(\cdot)\right\Vert _{L_{2}}^{2},
\end{align*}
when $c_{1},c_{2},N\rightarrow\infty$ and $c_1,c_{2}=\Theta(\log N)$ (with a large enough lower bound). 
\end{lem}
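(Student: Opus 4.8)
The plan is to compute the inner product $\langle \frac{\partial F}{\partial \rho}(\rho), \mbox{sign}((\rho\ominus\tau)-\frac{1}{2})\rangle$ directly from the explicit gradient formula \eqref{eq:grad of F}, namely $\frac{\partial F}{\partial\rho}(\rho)=-2\cdot\mbox{diag}(\beta_{\rho})\cdot G_{\rho}^{*}L(I_{N}-\mathcal{P}_{\rho})\widehat{z}_{\sigma_{2}}$, and then isolate the dominant term. First I would substitute the measurement model $\widehat{z}_{\sigma_{2}}=G_{\tau}\alpha+\widehat{n}_{\sigma_{2}}$ from \eqref{eq:def of z hat} and split into a ``signal'' contribution (involving $G_\tau\alpha$) and a ``noise'' contribution (involving $\widehat{n}_{\sigma_2}$). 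For the noise contribution, Cauchy--Schwarz together with bounds on $\|L\|$, $\|G_\rho\|$, and $\|\beta_\rho\|$ (the last controlled via $\beta_\rho = G_\rho^\dagger \widehat{z}_{\sigma_2}$ and the near-identity Gram structure guaranteed by Criterion \ref{fact: slow decay in bound-1}) should yield the $O(N)\|n\|_{L_2}\|\alpha\|_2 + O(N)\|n\|_{L_2}^2$ error terms, using $\|L\|=\Theta(N)$.

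The heart of the argument is the signal term. The key point is that when $\rho=\tau$, the residual $(I_N-\mathcal{P}_\tau)\widehat{z}_{\sigma_2}$ has no signal component (since $G_\tau\alpha\in\mbox{span}(G_\tau)$), so the gradient vanishes there; for $\rho$ near $\tau$ the residual is approximately linear in $\rho-\tau$. I would expand $(I_N-\mathcal{P}_\rho)G_\tau\alpha$ to first order in the displacement $\rho[i]\ominus\tau[i]$, writing $G_\tau\alpha = G_\rho\alpha + (\text{correction})$ and using that each column of $G_\tau$ differs from the corresponding column of $G_\rho$ by roughly $(\tau[i]\ominus\rho[i])$ times a derivative column (the matrix $L G_\rho \mbox{diag}(\text{displacement})$ type term). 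Then the inner product with the sign vector picks out, coordinate by coordinate, a contribution proportional to $\beta_\rho[i]^2 \cdot |\rho[i]\ominus\tau[i]| \cdot \langle g_{\sigma_2,N}(\cdot\ominus\rho[i]), g'_{\sigma_2,N}(\cdot\ominus\tau[i])\rangle$-type quantities. Here is where Criterion \ref{fact: slow decay in bound-1} does the decisive work: \eqref{eq:near der corr} gives the $\Omega(N^2)\cdot d(\rho[i],\tau[i])$ magnitude, and crucially \eqref{eq:sign match} guarantees that the sign of this diagonal term matches $\mbox{sign}((\rho[i]\ominus\tau[i])-\frac{1}{2})$, so that all diagonal contributions add \emph{constructively} (with a uniform negative sign after the overall factor $-2$), producing $-\Omega(N)\cdot\frac{\|\alpha\|_2^2}{\mbox{dyn}^2}\cdot d(\rho,\tau)$ after summing over $i$ and bounding $\min_i \beta_\rho[i]^2$ from below by $\Omega(\min_i\alpha[i]^2) = \Omega(\|\alpha\|_\infty^2/\mbox{dyn}^2)$. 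The off-diagonal cross terms (interactions between spike $i$ and spike $j\ne i$, with $d(\tau[i],\tau[j])\ge \mbox{sep}(\tau)\ge 4\sigma_2$) are controlled by the decay estimates \eqref{eq:far corr of signal}--\eqref{eq:far cor of der der} of Criterion \ref{lem:corr decay away}; each carries a factor $O(e^{-\Cr{decay}c_2})$, and after summing the $1/\sin(\pi d(\tau[i],\tau[j]))$ terms over $j$ (as in the telescoping argument \eqref{eq:sum to int} from the Phase I proof), they collapse into the $O(e^{-Cc_2})\|\alpha\|_2^2$ error term.

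The main obstacle I anticipate is making the first-order expansion of $(I_N-\mathcal{P}_\rho)$ (equivalently, of $\beta_\rho$ and $\mathcal{P}_\rho$ as functions of $\rho$) rigorous rather than heuristic: one must bound the second-order remainder in the Taylor expansion uniformly over $\mathbb{B}(\tau^0,\sigma_1)$, which requires control of $G_\rho^\dagger$ (i.e.\ a lower bound on the smallest singular value of $G_\rho$, again from the near-orthogonality in Criterion \ref{fact: slow decay in bound-1} combined with the far-separation decay in Criterion \ref{lem:corr decay away}) and of the second derivatives of the columns of $G_\rho$, which bring in $L^2\sim N^2$ factors. Keeping track of which terms are genuinely $\Omega(N)\cdot d(\rho,\tau)$ versus lower order requires care, especially since $d(\rho,\tau)\le 2\sigma_1 = O(\log N / N)$ is itself small, so the ``linear in displacement'' term and the remainder must be separated cleanly. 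A secondary subtlety is the sign bookkeeping: the wraparound subtraction means $\rho[i]\ominus\tau[i]$ lives on the circle, and one must verify that \eqref{eq:sign match} indeed aligns with the stationarity sign vector in Definition \ref{def: Stationary-points-of} for the relevant range of displacements (all of magnitude $\le 2\sigma_1 \ll \frac12$), so that no cancellation occurs among the diagonal terms. Once these are handled, collecting the diagonal main term, the off-diagonal exponentially small term, and the two noise terms yields the stated identity.
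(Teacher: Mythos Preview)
Your high-level plan (split into signal and noise, bound noise by Cauchy--Schwarz, and for the signal use the sign-alignment \eqref{eq:sign match} together with the size estimate \eqref{eq:near der corr} on the diagonal and the exponential decay of Criterion~\ref{lem:corr decay away} off the diagonal) matches the paper exactly. Where you diverge is in the mechanism for extracting the ``linear-in-displacement'' behavior of the signal term: you propose a Taylor expansion of $(I_N-\mathcal{P}_\rho)G_\tau\alpha$ (equivalently of $\mathcal{P}_\rho$ and $\beta_\rho$) about $\rho=\tau$ and then worry, correctly, about controlling the second-order remainder. The paper avoids this entirely. It never Taylor-expands in $\rho-\tau$; instead it rewrites the two signal terms in \eqref{eq:1st step in bounding grad of F} so that every occurrence of $G_\rho^*G_\tau$ and $G_\rho^*LG_\tau$ is replaced by the \emph{diagonal} matrices $M_{\rho,\tau}$ and $M^d_{\rho,\tau}$ (the off-diagonal entries being $O(e^{-Cc_2})$ by the toolbox Lemmas~\ref{lem:props of Psi} and~\ref{lem:P is simple}). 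The dominant term then reduces to the finite sum $\sum_i |\alpha[i]|^2\, u[i]\, M_{\rho,\tau}[i,i]\, M^d_{\rho,\tau}[i,i]$, and the ``$\Omega(N^2)\cdot d(\rho[i],\tau[i])$'' scaling comes \emph{directly} from Criterion~\ref{fact: slow decay in bound-1} as a statement about the correlation $\langle g_{\sigma_2,N}(\cdot\ominus\rho[i]),g'_{\sigma_2,N}(\cdot\ominus\tau[i])\rangle$, not from any linearization. This buys you exactly the clean separation you were worried about: there is no remainder to bound, the $|\alpha[i]|^2$ (rather than $\beta_\rho[i]^2$) appear directly, and the only errors are the exponentially small off-diagonal leakage and the noise terms. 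Your route would work, but the obstacle you flagged is real and the paper's diagonal-matrix reduction is the cleaner way around it.
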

We are now ready to complete the proof of Theorem \ref{thm:grad des algo}.
\begin{comment}
According to Theorem \ref{thm:grad proj}, the gradient projection
algorithm described in (\ref{eq:the grad desc alg}) converges to
the stationary point $\widetilde{\tau}\in\mathbb{B}(\tau^{0},\sigma_{1})$
of Program \ref{eq:regularized problem}.
\end{comment}
Since $\tau\in\mathbb{B}(\tau^{0},\sigma_{1})$ too, in light of (\ref{eq:cons of stationarity})
and Lemma \ref{lem:grad is away from zero}, we can write that
\begin{align*}
0 & \le\left\langle \frac{\partial F}{\partial\rho}(\widetilde{\tau}),
\mbox{sign}
\left(
\left(\widetilde{\tau}\ominus \tau \right) - \frac{1}{2}
\right)
\right\rangle 
= -\frac{\Omega(N)}{\mbox{dyn}(x_{\tau,\alpha})^{2}}\cdot\|\alpha\|_{2}^{2}\cdot d(\widetilde{\tau},\tau)+O(e^{-Cc_{2}})\cdot\|\alpha\|_{2}^{2}\\
 & \qquad \qquad \qquad \qquad \qquad \qquad\qquad\qquad+O(N)\cdot\left\Vert n(\cdot)\right\Vert _{L_{2}}\left\Vert \alpha\right\Vert _{2}+O(N)\cdot\left\Vert n(\cdot)\right\Vert _{L_{2}}^{2},
\end{align*}
which simplifies to
\begin{align*}
\frac{d(\widetilde{\tau},\tau)}{\mbox{dyn}(x_{\tau,\alpha})^{2}} & =O\left(e^{-Cc_{2}}+\frac{\left\Vert n(\cdot)\right\Vert _{L_{2}}}{\left\Vert \alpha\right\Vert _{2}}+\frac{\left\Vert n(\cdot)\right\Vert _{L_{2}}^{2}}{\left\Vert \alpha\right\Vert _{2}^{2}}\right)\\
 & =O\left(e^{-Cc_{2}}+\frac{\left\Vert n(\cdot)\right\Vert _{L_{2}}}{\left\Vert \alpha\right\Vert _{2}}\right),\qquad\mbox{if }\left\Vert n(\cdot)\right\Vert _{L_{2}}\le\|\alpha\|_{2},
\end{align*}
asymptotically. %
\begin{comment}
\textbf{(Alternatively, since Proposition \ref{thm:algorithm 1} is
in force, we may use (\ref{eq:noise break}). which one is better? }
\end{comment}
 This completes the proof of Theorem \ref{thm:grad des algo} since
we already know that
\[
\widetilde{\tau},\tau\in\mathbb{B}(\tau^{0},\sigma_{1})\Longrightarrow d(\widetilde{\tau},\tau)\le d(\widetilde{\tau},\tau^{0})+d(\tau^{0},\tau)\le2\sigma_{1},
\]
under Proposition \ref{thm:algorithm 1}.

\section*{Acknowledgments}
AE  acknowledges Ben Adcock, Aditya Viswanathan, and Anne Gelb for pointing out the possible connection between our work and \cite{Cochran2013}. Part of this research was conducted when AE was a graduate fellow at the  Statistical and Applied Mathematical Sciences Institute (SAMSI) and later a visitor at  the Institute for Computational and Experimental Research in Mathematics (ICERM). AE is grateful for their hospitality and kindness. 

\appendix

\bibliographystyle{plain}
\bibliography{References}

\section{Toolbox}
\label{sec:Toolbox}
%\note{This section is called from other appendices and not from the text. Maybe we should put it at the end of the appendices?}
This section collects a number of results which are frequently 
invoked in the rest of the appendix.

In what follows, with integer $N$ and $c=c(N)>0$, we assume that
$\sigma=\frac{c}{N}$,  and consider a kernel $g_{\sigma,N}(\cdot)=g(\cdot;\sigma,N)$
that satisfies Criteria  \ref{lem:corr decay away}  
and \ref{fact: slow decay in bound-1}. 
%Below, we list several useful
%implications of these two criteria which will appear in our study
%of Algorithm II.
\begin{lem}
\label{lem:sum of corr away}\textbf{ }For integer $N$ and $c=c(N)>0$,
let $\sigma=\frac{c}{N}$. Consider a kernel $g_{\sigma,N}(\cdot)$
that satisfies Criterion \ref{lem:corr decay away}. Fix $i\in[1:K]$, and $\rho\in\mathbb{I}^{K}$ with distinct entries. 
\begin{comment}
with $\mbox{sep}(\rho)\ge2\sigma$
\end{comment}
Then, it holds asymptotically that
\end{lem}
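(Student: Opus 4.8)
The quantity to be controlled is a sum of correlations of the form $\sum_{j \ne i} |\langle g_{\sigma,N}(t\ominus\rho[j]), g_{\sigma,N}(t\ominus\rho[i])\rangle|$ (or possibly the analogous sums involving derivatives, as in \eqref{eq:far corr of signal}--\eqref{eq:far cor of der der}), and the claim will be that this sum is asymptotically negligible (something like $O(N\, e^{-Cc})$ or a similar decaying bound) provided $\rho$ has sufficient separation. The plan is to reduce everything to the one-pair estimates already granted by Criterion \ref{lem:corr decay away}, namely that each such inner product is bounded by $O(e^{-\Cr{decay}c})$ divided by $N\sin(\pi\, d(\rho[i],\rho[j]))$ (up to the appropriate power of $N$), and then to sum the resulting $1/\sin(\pi d)$ terms over $j \ne i$.

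First I would fix the index $i$ and, without loss of generality, relabel the remaining entries of $\rho$ by increasing wraparound distance from $\rho[i]$, so that $d(\rho[i],\rho[j]) \ge (\lceil j/2 \rceil)\cdot \mathrm{sep}(\rho)$ or, more robustly, $\ge \lfloor j/2\rfloor \cdot 2\sigma$ using the hypothesis that consecutive entries are separated (this is the same bookkeeping device used around \eqref{eq:separation in aI} in the Phase I proof). Then I would invoke Criterion \ref{lem:corr decay away} term by term, pulling the common factor $O(e^{-\Cr{decay}c})/N$ (or the relevant power of $N$) out front, leaving the sum $\sum_{j}\frac{1}{\sin(\pi\, d(\rho[i],\rho[j]))}$. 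Next I would bound this sum by $\sum_{0 < k\sigma \le 1/2}\frac{1}{\sin(\pi k \sigma)}$, exactly as in \eqref{eq:sum to int}, and carry over that estimate verbatim: comparing the sum to an integral (using monotonicity of $\sin$ on $[0,1/2]$), applying Cauchy--Schwarz, and evaluating $\int \csc^2$, which yields $O((N/c)^{3/2})$. Multiplying back by the prefactor $O(e^{-\Cr{decay}c})/N$ (or $O(e^{-\Cr{decay}c})$, or $N\cdot O(e^{-\Cr{decay}c})$, matching whichever of \eqref{eq:far corr of signal}--\eqref{eq:far cor of der der} is in play) and using $c = \Theta(\log N)$ to absorb the polynomial factor $c^{-3/2}N^{3/2}$ into $O(e^{-Cc})$ for a fresh constant $C$, I obtain the asserted asymptotic bound.

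The only mild subtlety — and the step I expect to need the most care — is the passage from "distinct entries" to a usable lower bound on the pairwise distances $d(\rho[i],\rho[j])$. Criterion \ref{lem:corr decay away} requires $d(\rho[i],\rho[j]) \ge 2\sigma$, so I must either assume $\mathrm{sep}(\rho) \ge 2\sigma$ outright (as the commented-out hypothesis \texttt{sep}$(\rho)\ge 2\sigma$ in the source suggests the authors intend) or argue that for $\rho \in \mathbb{B}(\tau^0,\sigma_1)$ with $\tau$ well-separated this separation holds asymptotically. Granting that, the geometric grouping $d(\rho[i],\rho[j]) \ge \lceil j/2 \rceil\cdot 2\sigma$ is immediate and the rest is the routine sum-to-integral computation. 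Everything else is a direct transcription of the machinery already deployed in Section \ref{sec:proof of phase I}.
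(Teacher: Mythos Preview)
Your proposal is correct and follows essentially the same approach as the paper: apply Criterion \ref{lem:corr decay away} termwise, use the separation $\mathrm{sep}(\rho)\ge 2\sigma$ (which the paper also notes holds asymptotically because the entries of $\rho$ are distinct) to bound the sum of $1/\sin(\pi d(\rho[i],\rho[j]))$ by a sum over integer multiples of $2\sigma$, invoke the estimate \eqref{eq:sum to int} to get $O((N/c)^{3/2})$, and then absorb the resulting polynomial factor into $O(e^{-Cc})$ via $c=\Theta(\log N)$. Your identification of the ``distinct entries $\Rightarrow$ asymptotic separation'' step as the only subtlety matches the paper exactly.
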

\begin{equation}
\sum_{j\in[1:K]\backslash\{i\}}\left|\left\langle g_{\sigma,N}(t\ominus\rho[i]),g_{\sigma,N}(t\ominus\rho[j])\right\rangle \right|=O(e^{-Cc}),\label{eq:fast decay}
\end{equation}
\begin{equation}
\sum_{j\in[1:K]\backslash\{i\}}\left|\left\langle g_{\sigma,N}(t\ominus\rho[i]),g_{\sigma,N}'(t\ominus\rho[j])\right\rangle \right|=O(e^{-Cc}),\label{eq:fast decay of derivates 0}
\end{equation}
\begin{equation}
\sum_{j\in[1:K]\backslash\{i\}}\left|\left\langle g'_{\sigma,N}(t\ominus\rho[i]),g_{\sigma,N}'(t\ominus\rho[j])\right\rangle \right|=O(e^{-Cc}),\label{eq:fast decay of derivates 0-1}
\end{equation}
when $c,N\rightarrow\infty$ and $c=\Theta(\log N)$ 
(with a large enough lower bound).
\begin{proof}
%\textbf{Note: The} \textbf{spacing looks right I think. Criterion
%5 needs $2\sigma$ separation and we get multiples of $2\sigma$.
%}
These inequalities are direct consequences of Criterion \ref{lem:corr decay away}.
Indeed, since the entries of $\rho$ are distinct, $\mbox{sep}(\rho)\ge2\sigma$ asymptotically (i.e., for large enough $N$). Then, to prove \eqref{eq:fast decay}, we  write that
\begin{align*}
& \sum_{j\in[1:K]\backslash\{i\}}\left|\left\langle g_{\sigma,N}(t\ominus\rho[i]),g_{\sigma,N}(t\ominus\rho[j])\right\rangle \right| \nonumber\\
& =O\left(\frac{e^{-\Cr{decay}c}}{N}\right)\sum_{j\in[1:K]\backslash\{i\}}\frac{1}{\left|\sin\left(\pi\cdot d(\rho[i],\rho[j])\right)\right|}
\qquad \mbox{(see Criterion \ref{lem:corr decay away})}
\\
 & =O\left(\frac{e^{-\Cr{decay}c}}{N}\right)\sum_{0<l\cdot2\sigma\le\frac{1}{2}}\frac{1}{\sin\left(\pi\cdot l\cdot2\sigma\right)}
\qquad \left( 
\mbox{sep}(\tau)\ge 2\sigma, \mbox{ asymptotically}
\right) 
 \\
% & =O\left(\frac{e^{-\Cr{decay}c}}{N}\right)\left(\frac{1}{\sin\left(2\pi\sigma\right)}+\frac{1}{2\sigma}\int_{2\sigma}^{\frac{1}{2}}\frac{1}{\sin\left(\pi t\right)}\, dt\right)\\
% & =O\left(\frac{e^{-\Cr{decay}c}}{N}\right)\left(\frac{1}{\sin\left(2\pi\sigma\right)}+\frac{1}{\sigma}\sqrt{\int_{2\sigma}^{\frac{1}{2}}\frac{1}{\sin^{2}\left(\pi t\right)}\, dt}\right)\\
% & =O\left(\frac{e^{-\Cr{decay}c}}{N}\right)\left(\frac{1}{\sin\left(2\pi\sigma\right)}+\frac{1}{\sigma}\sqrt{\cot\left(2\pi\sigma\right)}\right)\\
 & =O\left(\frac{e^{-\Cr{decay}c}}{N}\right)\left(\frac{N}{c}\right)^{\frac{3}{2}} \qquad \mbox{(similar to \eqref{eq:sum to int})}\\
 & =O\left(N^{\frac{1}{2}}e^{-Cc}\right)
\qquad \left( c\rightarrow\infty \right)
 \\
 & =O\left(e^{-Cc}\right).
 \qquad \left( c = \Theta(\log N) \right)
\end{align*}
%In the last line above, the lower bound in $c=\Theta(\log N)$ should be sufficiently large. 
%where we used Criterion \ref{lem:corr decay away} and the fact that
%(asymptotically) $\mbox{sep}(\rho)\ge2\sigma$ in the first two lines.
%The third line uses the fact that $\frac{1}{\sin\left(\pi t\right)}$
%is decreasing in $t$ on $(0,1/2]$. The fourth line uses the Cauchy-Schwartz
%inequality and the sixth line is obtained after considering the fact
%that $\sigma=\frac{c}{N}\rightarrow0$. We used the assumption $c=\Omega(\log N)$
%in the last line.
%\note{Does this require $C_3$ to be sufficiently large with respect to the implicit constant in $c=\Omega(\log N)$?} \note{AE: I think we need $c> 2\log N/C_3$ or something like that. So for any $C_3$, we just pick the lower bound large enough to get a negative exponent.}
 The next inequalities in Lemma \ref{lem:sum of corr away}
are proved similarly and we omit the details here. 
%This completes
%the proof of Lemma \ref{lem:sum of corr away}.
\end{proof}
\begin{lem}
\label{lem:sum of cross corr}\textbf{ }For integer $N$ and $c=c(N)>0$,
let $\sigma=\frac{c}{N}$. Consider a kernel $g_{\sigma,N}(\cdot)$  
that satisfies Criterion \ref{lem:corr decay away}. Fix $i\in[1:K]$, and $\rho_{1},\rho_{2}\in\mathbb{I}^{K}$. Suppose that $\rho_{2}[j]\ne\rho_{1}[i]$,
for every $j\ne i$. %
\begin{comment}
with $\mbox{sep}(\rho_{1})\ge4\sigma$ with $d(\rho_{1},\rho_{2})\le\sigma$,
\end{comment}
{} Then, it holds asymptotically that
\begin{equation}
\sum_{j\in[1:K]\backslash\{i\}}\left|\left\langle g_{\sigma,N}(t\ominus\rho_{1}[i]),g_{\sigma,N}(t\ominus\rho_{2}[j])\right\rangle \right|=O(e^{-Cc}),\label{eq:fast mutual decay}
\end{equation}
\begin{equation}
\sum_{j\in[1:K]\backslash\{i\}}\left|\left\langle g_{\sigma,N}(t\ominus\rho_{1}[i]),g_{\sigma,N}'(t\ominus\rho_{2}[j])\right\rangle \right|=O(e^{-Cc}),\label{eq:fast decay of derivates}
\end{equation}
when $c,N\rightarrow\infty$ and $c=\Theta(\log N)$ (with a large enough lower bound). 
\end{lem}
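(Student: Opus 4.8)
The plan is to follow the proof of Lemma~\ref{lem:sum of corr away} almost verbatim; the only genuinely new point is that the two shift vectors $\rho_1$ and $\rho_2$ are no longer the same, so I must first produce a lower bound on the wraparound distances $d(\rho_1[i],\rho_2[j])$, $j\ne i$, strong enough to invoke Criterion~\ref{lem:corr decay away} term by term.

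First I would establish that, under the hypotheses together with the standing assumptions of this section (in particular that $\rho_1$ is well-separated, $\mathrm{sep}(\rho_1)\ge 4\sigma$, and that $\rho_2$ sits within $\sigma$ of $\rho_1$ in the wraparound metric), one has asymptotically
\[
d(\rho_1[i],\rho_2[j]) \;\ge\; d(\rho_1[i],\rho_1[j]) - d(\rho_1[j],\rho_2[j]) \;\ge\; 4\sigma-\sigma \;=\; 3\sigma \;\ge\; 2\sigma, \qquad j\ne i.
\]
(The stated hypothesis $\rho_2[j]\ne\rho_1[i]$ only removes the degenerate coincidence; the quantitative $2\sigma$ gap is what is actually needed, and it comes from the separation of $\rho_1$ and the proximity of $\rho_2$.) With this in hand, \eqref{eq:far corr of signal} and \eqref{eq:far cor of der} apply with argument $d(\rho_1[i],\rho_2[j])$.

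Second, I would sort the off-diagonal indices so that $d(\rho_1[i],\rho_2[\cdot])$ is nondecreasing, note that the $l$-th smallest such distance is $\Omega(l\sigma)$ by the same packing argument behind \eqref{eq:separation in aI}, and then bound the sum $\sum_{j\ne i}1/|\sin(\pi\, d(\rho_1[i],\rho_2[j]))|$ by comparison with an integral exactly as in \eqref{eq:sum to int}, obtaining $O((N/c)^{3/2})$.

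Third, I would assemble the estimates. For \eqref{eq:fast mutual decay}, multiplying $O((N/c)^{3/2})$ by the prefactor $O(e^{-\Cr{decay}c}/N)$ from \eqref{eq:far corr of signal} gives $O(N^{1/2}e^{-Cc})=O(e^{-Cc})$ once $c=\Theta(\log N)$ with a large enough lower bound, since the $\tfrac{1}{2}\log N$ is absorbed into the exponent at the cost of shrinking the constant. For \eqref{eq:fast decay of derivates} the prefactor from \eqref{eq:far cor of der} carries no factor $1/N$, so the same computation yields $O(N^{3/2}e^{-Cc})=O(e^{-Cc})$, again for a sufficiently large logarithmic lower bound on $c$. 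The one place needing care is the very first step: extracting the $2\sigma$ separation between $\rho_1[i]$ and the $\rho_2[j]$ from the (ambient) hypotheses; once that is secured, everything else is a mechanical repetition of Lemma~\ref{lem:sum of corr away}.
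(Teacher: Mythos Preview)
Your approach is correct in spirit but considerably more work than the paper does, and it quietly imports hypotheses that are not in the statement. You assume $\mathrm{sep}(\rho_1)\ge 4\sigma$ and $d(\rho_1,\rho_2)\le\sigma$ as ``standing assumptions of this section''; they are not---that language sits in a commented-out block and is nowhere asserted in the Toolbox. The lemma as stated only gives you $\rho_2[j]\ne\rho_1[i]$ for $j\ne i$. With only that, your triangle-inequality step $d(\rho_1[i],\rho_2[j])\ge d(\rho_1[i],\rho_1[j])-d(\rho_1[j],\rho_2[j])\ge 4\sigma-\sigma$ has no justification.

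The paper sidesteps this entirely. Its proof is a single observation: the $K$-vector with entries $\{\rho_1[i]\}\cup\{\rho_2[j]\}_{j\ne i}$ has distinct entries by hypothesis, so Lemma~\ref{lem:sum of corr away} applies to it verbatim and yields both \eqref{eq:fast mutual decay} and \eqref{eq:fast decay of derivates} immediately. The point is that Lemma~\ref{lem:sum of corr away} already contains the asymptotic step ``distinct entries $\Rightarrow$ $\mathrm{sep}\ge 2\sigma$ for large $N$'' (since $\sigma=c/N\to 0$ while the finitely many pairwise gaps are fixed), so there is no need to manufacture a quantitative $2\sigma$ gap by hand. Your redo-the-integral-comparison route would also work once you replace your first step by this asymptotic distinctness argument, but then you are literally reproving Lemma~\ref{lem:sum of corr away}; the cleaner move is to invoke it.
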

\begin{proof}
Note that, by hypothesis, the vector formed from $\{\rho_{1}[i]\}\cup\{\rho_{2}[j]\}_{j\ne i}$
has distinct entries to which we can apply Lemma \ref{lem:sum of corr away}.
This completes the proof of Lemma \ref{lem:sum of cross corr}.
\end{proof}
%Next, we present a few more technical lemmas that will be used later
%in the analysis of the local optimization step. 
A few more technical lemmas are in order.
In what follows, $G_{\rho}=G_{\rho,\sigma,N}\in\mathbb{C}^{N\times K}$
is defined similar to \eqref{eq:def of G_tau} for $\rho\in\mathbb{I}^{K}$,
$\sigma<\frac{1}{2}$, and integer $N$. %
\begin{comment}
More specifically, $G_{\rho}[l,i]=\widehat{g}_{\sigma,N}[l]\cdot e^{-\imag2\pi l\rho[i]}$
for $l\in\F$ and $i\in[1:K]$.
\end{comment}
Among other things, the next result states that the columns of $G_{\rho}$
are nearly orthonormal as long as $\rho$ is well-separated. %
\begin{comment}
\textbf{Note: There is a good reason to keep this next lemma only
under the first criterion. Because the 2nd criterion only works when
$\rho$s are close. So it's not as general as this result. It's perhaps
best to apply the 2nd criterion when needed in the proof. }
\end{comment}

\begin{lem}
\label{lem:props of Psi} For integer $N$ and $c=c(N)>0$, let $\sigma=\frac{c}{N}$. Consider a kernel $g_{\sigma,N}(\cdot)$ that
satisfies Criterion \ref{lem:corr decay away}.\textbf{ }Fix $\rho\in\I^{K}$
with distinct entries 
\begin{comment}
and assume that $\mbox{sep}(\rho)\ge2\sigma$
\end{comment}
 and recall \eqref{eq:def of G_tau}. 
% Consider $G_{\rho}\in\mathbb{C}^{N\times K}$ such that $G_{\rho}[l,i]=\widehat{g}_{\sigma,N}[l]\cdot e^{-\imag2\pi l\rho[i]}$
%for $l\in\F$ and $i\in[1:K]$. 
Then, it holds asymptotically that

\begin{equation}
\left\Vert I_{K}-G_{\rho}^{*}G_{\rho}\right\Vert =O(e^{-Cc}),\label{eq:props of Psi no 2}
\end{equation}
\begin{equation}
\left\Vert G_{\rho}\right\Vert \le1+O(e^{-Cc}),\label{eq:props of Psi no 1}
\end{equation}
\begin{equation}
\left\Vert \left(G_{\rho}^{*}G_{\rho}\right)^{-1}\right\Vert \le1+O(e^{-Cc}),\label{eq:props of Psi 3}
\end{equation}
\begin{equation}
\left\Vert G_{\rho}^{\dagger}\right\Vert \le1+O(e^{-Cc}),\label{eq:propes of Psi 3.1}
\end{equation}
\begin{equation}
\left\Vert G_{\rho}^{*}LG_{\rho}\right\Vert =O(e^{-Cc}),\label{eq:props of Psi no 5}
\end{equation}
\begin{equation}
\left\Vert \left\Vert g'_{\sigma,N}(\cdot)\right\Vert _{L_2}^{2}\cdot I_{K}-G_{\rho}^{*}L^{*}LG_{\rho}\right\Vert =O(e^{-Cc}),\label{eq:props of Psi no 5.1}
\end{equation}
as $c,N\rightarrow\infty$ and $c=\Theta(\log N)$ (with a sufficiently large lower bound). 
%Also, $c=\Omega(\log N)$
%with a large enough constant. 
Above, $A^{\dagger}$ is the pseudo-inverse of $A$, and $\|A\|$ returns
its spectral norm of $A$. Entries of the diagonal matrix $L\in \mathbb{C}^{N\times N}$  are specified as $L[l,l]=\i 2\pi l$, $l\in\mathbb{F}$.  
\begin{comment}
 $\|A\|_{\infty\rightarrow\infty}$ returns the largest $\ell_{1}$-norm
of the rows of $A$.
\end{comment}
{} The inverse of $G_{\rho}^{*}G_{\rho}$ exists, so that  (\ref{eq:props of Psi 3})
and (\ref{eq:propes of Psi 3.1}) are well-defined. %
\begin{comment}
\begin{equation}
\left\Vert \left\Vert \psi_{c,N}'\right\Vert ^{2}I_{K}-\widehat{\Psi}_{\rho,c,N}^{*}L^{*}L\widehat{\Psi}_{\rho,c,N}\right\Vert _{2}\lesssim e^{-CN^{\frac{1}{3}}},\label{eq:props of Psi 5}
\end{equation}
\begin{equation}
\left\Vert L\widehat{\Psi}_{\rho,c,N}\right\Vert _{2}\lesssim N^{\frac{1}{2}}.\label{eq:props of Psi 6}
\end{equation}
\end{comment}

Moreover, suppose that $\rho_{1},\rho_{2}\in\I^{K}$ satisfy $\rho_{1}[i]\ne\rho_{2}[j]$
for all $i\ne j$.%
\begin{comment}
with $d(\rho_{1},\rho_{2})\le\sigma$ and $\mbox{sep}(\rho_{2})\ge4\sigma$
\end{comment}
{} Then, it holds asymptotically that
\begin{equation}
\left\Vert G_{\rho_{1}}^{*}G_{\rho_{2}}-M_{\rho_{1},\rho_{2}}\right\Vert =O(e^{-Cc}),\label{eq:props of Psi 4}
\end{equation}
where the entries of diagonal matrix $M_{\rho_{1},\rho_{2}}\in\mathbb{R}^{K\times K}$
 are specified as 
\begin{equation}
M_{\rho_{1},\rho_{2}}[i,i]=\left\langle g_{\sigma,N}(t\ominus\rho_{1}[i]),g_{\sigma,N}(t\ominus\rho_{2}[i])\right\rangle ,\qquad i\in[1:K].
\label{eq:def of M12}
\end{equation}
It also holds asymptotically that
\begin{equation}
\left\Vert G_{\rho_{1}}^{*}LG_{\rho_{2}}-M_{\rho_{1},\rho_{2}}^{d}\right\Vert =O(e^{-Cc}),\label{eq:props of Psi 6-1}
\end{equation}
where the diagonal matrix $M_{\rho_{1},\rho_{2}}^{d}\in\mathbb{R}^{K\times K}$
is defined with
\begin{equation}
M{}_{\rho_{1},\rho_{2}}^{d}[i,i]=\left\langle g_{\sigma,N}(t\ominus\rho_{1}[i]),g_{\sigma,N}'(t\ominus\rho_{2}[i])\right\rangle ,\qquad i\in[1:K].
\label{eq:def of M12d}
\end{equation}
In addition,
\begin{equation}
\left\Vert G_{\rho_{1}}-G_{\rho_{2}}\right\Vert \le\sqrt{2}\left\Vert I_{K}-M_{\rho_{1},\rho_{2}}\right\Vert ^{\frac{1}{2}}+O(e^{-Cc}),\label{eq:diff between Gs}
\end{equation}
\begin{equation}
\left\Vert G_{\rho_{1}}^{\dagger}-G_{\rho_{2}}^{\dagger}\right\Vert =O(1)\left\Vert I_{K}-M_{\rho_{1},\rho_{2}}\right\Vert ^{\frac{1}{2}}+O(e^{-Cc}).\label{eq:diff between Gdaggers}
\end{equation}
\end{lem}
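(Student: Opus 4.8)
The plan is to reduce every claim to the entrywise estimates of Criterion \ref{lem:corr decay away} together with the summation bounds of Lemma \ref{lem:sum of corr away} (and Lemma \ref{lem:sum of cross corr} for the mixed-$\rho$ statements). First I would observe that $(G_\rho^*G_\rho)[i,j]=\langle g_{\sigma,N}(t\ominus\rho[i]),g_{\sigma,N}(t\ominus\rho[j])\rangle$, so $I_K-G_\rho^*G_\rho$ has zero diagonal (using $\|g_{\sigma,N}\|_{L_2}=1$) and off-diagonal entries controlled by \eqref{eq:far corr of signal}. To pass from entries to spectral norm I would bound $\|I_K-G_\rho^*G_\rho\|$ by the maximum absolute row sum (Gershgorin / the fact that $\|A\|\le\sqrt{\|A\|_{1\to1}\|A\|_{\infty\to\infty}}$ and symmetry), and each row sum is exactly the quantity estimated in \eqref{eq:fast decay}, giving $O(e^{-Cc})$. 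This proves \eqref{eq:props of Psi no 2}. Then \eqref{eq:props of Psi no 1} follows from $\|G_\rho\|^2=\|G_\rho^*G_\rho\|\le 1+O(e^{-Cc})$; \eqref{eq:props of Psi 3} follows since the smallest eigenvalue of $G_\rho^*G_\rho$ is at least $1-O(e^{-Cc})$ (whence invertibility, making the later claims well-defined, and $\|(G_\rho^*G_\rho)^{-1}\|\le(1-O(e^{-Cc}))^{-1}=1+O(e^{-Cc})$); and \eqref{eq:propes of Psi 3.1} follows from $\|G_\rho^\dagger\|=\|(G_\rho^*G_\rho)^{-1}G_\rho^*\|\le\|(G_\rho^*G_\rho)^{-1}\|\,\|G_\rho\|$.

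Next I would treat the matrices involving $L$. Since $(G_\rho^*LG_\rho)[i,j]=\langle \widehat g_{\sigma,N}e^{-\imath 2\pi l\rho[i]}, (\imath 2\pi l)\widehat g_{\sigma,N}e^{-\imath 2\pi l\rho[j]}\rangle$, and because $L$ applied to a Fourier series is (up to sign) differentiation, this equals (up to conjugation/sign) $\langle g_{\sigma,N}(t\ominus\rho[i]), g'_{\sigma,N}(t\ominus\rho[j])\rangle$. The diagonal entries here vanish: by symmetry of $g_{\sigma,N}$ about $1/2$, $g'_{\sigma,N}$ is odd about $1/2$, and more to the point $\langle g_{\sigma,N}(t), g'_{\sigma,N}(t)\rangle=\tfrac12\frac{d}{dt}\|g_{\sigma,N}\|^2=0$ (this is the $d(\rho_1,\rho_2)=0$ case of \eqref{eq:near der corr}, whose bound is $\Omega(N^2)\cdot 0 = 0$, i.e.\ it is exactly zero). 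So $G_\rho^*LG_\rho$ is again a hollow matrix with off-diagonal entries bounded by \eqref{eq:far cor of der}, and the row-sum argument via \eqref{eq:fast decay of derivates 0} yields \eqref{eq:props of Psi no 5}. For \eqref{eq:props of Psi no 5.1}, the entries of $G_\rho^*L^*LG_\rho$ are $\langle g'_{\sigma,N}(t\ominus\rho[i]), g'_{\sigma,N}(t\ominus\rho[j])\rangle$; the diagonal is $\|g'_{\sigma,N}\|_{L_2}^2$, the off-diagonal is bounded by \eqref{eq:far cor of der der} and summed via \eqref{eq:fast decay of derivates 0-1}, so $\|\,\|g'_{\sigma,N}\|_{L_2}^2 I_K - G_\rho^*L^*LG_\rho\|=O(Ne^{-Cc})=O(e^{-Cc})$ since $c=\Theta(\log N)$ with a large enough constant kills the polynomial factor.

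For the mixed statements, with $\rho_1[i]\neq\rho_2[j]$ for $i\neq j$, the matrix $G_{\rho_1}^*G_{\rho_2}-M_{\rho_1,\rho_2}$ has zero diagonal by the very definition \eqref{eq:def of M12} of $M_{\rho_1,\rho_2}$, and off-diagonal entries of the form $\langle g_{\sigma,N}(t\ominus\rho_1[i]),g_{\sigma,N}(t\ominus\rho_2[j])\rangle$ controlled and summed by \eqref{eq:fast mutual decay} of Lemma \ref{lem:sum of cross corr} — giving \eqref{eq:props of Psi 4}; identically \eqref{eq:props of Psi 6-1} uses \eqref{eq:fast decay of derivates}. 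Finally, for \eqref{eq:diff between Gs} I would compute $\|G_{\rho_1}-G_{\rho_2}\|^2 = \|(G_{\rho_1}-G_{\rho_2})^*(G_{\rho_1}-G_{\rho_2})\|$ and expand into $G_{\rho_1}^*G_{\rho_1}+G_{\rho_2}^*G_{\rho_2}-G_{\rho_1}^*G_{\rho_2}-G_{\rho_2}^*G_{\rho_1}$; using \eqref{eq:props of Psi no 2} (twice) and \eqref{eq:props of Psi 4} this is within $O(e^{-Cc})$ of $2I_K - M_{\rho_1,\rho_2}-M_{\rho_1,\rho_2}^*$, whose norm is at most $2\|I_K-M_{\rho_1,\rho_2}\|$ (real diagonal $M$, or bound each term by $\|I_K-M_{\rho_1,\rho_2}\|$), and taking square roots with $\sqrt{a+b}\le\sqrt a+\sqrt b$ gives the claimed bound. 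For \eqref{eq:diff between Gdaggers} I would write $G_{\rho_1}^\dagger-G_{\rho_2}^\dagger$ using the resolvent-type identity $G_{\rho_1}^\dagger - G_{\rho_2}^\dagger = (G_{\rho_1}^*G_{\rho_1})^{-1}G_{\rho_1}^* - (G_{\rho_2}^*G_{\rho_2})^{-1}G_{\rho_2}^*$, split into $(G_{\rho_1}^*G_{\rho_1})^{-1}(G_{\rho_1}-G_{\rho_2})^* + \big((G_{\rho_1}^*G_{\rho_1})^{-1}-(G_{\rho_2}^*G_{\rho_2})^{-1}\big)G_{\rho_2}^*$, bound the inverses by \eqref{eq:props of Psi 3}, the first difference by \eqref{eq:diff between Gs}, and the difference of inverses via $A^{-1}-B^{-1}=A^{-1}(B-A)B^{-1}$ together with another expansion of $G_{\rho_1}^*G_{\rho_1}-G_{\rho_2}^*G_{\rho_2}$ controlled by \eqref{eq:props of Psi no 2} and \eqref{eq:props of Psi 4}, again producing the $O(1)\|I_K-M_{\rho_1,\rho_2}\|^{1/2}+O(e^{-Cc})$ bound.

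The routine part is the bookkeeping of constants. The one point deserving care — and the likeliest source of a slip — is the identification of $G_\rho^*LG_\rho$ and $G_\rho^*L^*LG_\rho$ with inner products of $g_{\sigma,N}$ and its derivative (keeping the factor $\imath 2\pi l$, its conjugate, and the sign of differentiation straight), and especially the verification that the diagonal of $G_\rho^*LG_\rho$ vanishes exactly; once those are pinned down, every estimate is a row-sum reduction to Lemmas \ref{lem:sum of corr away}–\ref{lem:sum of cross corr}, whose $O(e^{-Cc})$ conclusions absorb the stray powers of $N$ because $c=\Theta(\log N)$ with a sufficiently large lower bound.
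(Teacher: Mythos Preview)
Your proposal is correct and follows essentially the same route as the paper: identify the entries of each matrix as inner products of (derivatives of) translated kernels, note that the relevant diagonals are exactly zero (unit energy for $G_\rho^*G_\rho$; $\langle g,g'\rangle=0$ for $G_\rho^*LG_\rho$), bound the spectral norm by row/column sums via Gershgorin or the $\|A\|\le\max(\|A\|_{1\to1},\|A\|_{\infty\to\infty})$ inequality, and invoke Lemmas \ref{lem:sum of corr away}--\ref{lem:sum of cross corr}; the perturbation bounds \eqref{eq:diff between Gs}--\eqref{eq:diff between Gdaggers} are handled by the same expand-and-split manipulations the paper uses. The only cosmetic difference is that for the non-Hermitian mixed matrix $G_{\rho_1}^*G_{\rho_2}-M_{\rho_1,\rho_2}$ you should explicitly bound both row and column sums (as the paper does), not just rows---but Lemma \ref{lem:sum of cross corr} supplies both, so this is a one-line addition.
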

\begin{proof}
Because $\|g_{\sigma,N}(\cdot)\|_{L_{2}}=1$ by Criterion \ref{lem:corr decay away}, the diagonal
entries of $G_{\rho}^{*}G_{\rho}$ equal to one, and in fact
\begin{equation}
\left(I_{K}-G_{\rho}^{*}G_{\rho}\right)[i,j]=\begin{cases}
0 & i=j,\\
-\left\langle g_{\sigma,N}(t\ominus\rho[i]),g_{\sigma,N}(t\ominus\rho[j])\right\rangle  & i\ne j,
\end{cases}\label{eq:Gres 1}
\end{equation}
where $I_{K}$ is the $K\times K$ identity matrix. Let $\lambda_{l}(A)$
return the $l$th eigenvalue of a square matrix $A$. Then, using the
Gershgorin disc theorem, we can write that
\begin{align}
\left\Vert I_{K}-G_{\rho}^{*}G_{\rho}\right\Vert  & =\max_{l\in[1:K]}\left|\lambda_{l}\left(I_{K}-G_{\rho}^{*}G_{\rho}\right)\right|
\nonumber
\\
 & \le\max_{i\in[1:K]}\sum_{j\ne i}\left|\left\langle g_{\sigma,N}(t\ominus\rho[i]),g_{\sigma,N}(t\ominus\rho[j])\right\rangle \right|
\qquad \mbox{(see \eqref{eq:Gres 1})} 
\nonumber
 \\
 & =O(e^{-Cc}).
 \qquad \mbox{(see (\ref{eq:fast decay}))}
 \label{eq:Gres disc 2}
\end{align}
%The last line above follows from (\ref{eq:fast decay}) in Lemma \ref{lem:sum of corr away}%
\begin{comment}
because $\mbox{sep}(\rho)\ge2\sigma$
\end{comment}
 This establishes (\ref{eq:props of Psi no 2}). It also follows
that
\begin{align*}
\left\Vert G_{\rho}\right\Vert ^{2}-1 
& =\left\Vert G_{\rho}^{*}G_{\rho}\right\Vert -1\\
 & \le\left\Vert I_{K}-G_{\rho}^{*}G_{\rho}\right\Vert\nonumber\\
  & =O(e^{-Cc}),
\end{align*}
%and, consequently, 
%\begin{align*}
%\left\Vert G_{\rho}\right\Vert ^{2}-1 
%%& =\left\Vert G_{\rho}^{*}G_{\rho}\right\Vert -1\\
%% & \le\left\Vert I_{K}-G_{\rho}^{*}G_{\rho}\right\Vert \\
% & =O(e^{-Cc}),
%\end{align*}
which implies (\ref{eq:props of Psi no 1}). 
%(after using the inequality
%$\sqrt{a+b}\le\sqrt{a}+\sqrt{b}$ for $a,b\ge0$). \note{Why?!} 
Similarly, letting
$\sigma_{i}(A)$ be the $l$th singular value of a matrix $A$, we can write that
\begin{align*}
\min_{i\in[1:K]}\sigma_{i}\left(G_{\rho}^{*}G_{\rho}\right)-1 & =\min_{i\in[1:K]}\lambda_{i}\left(G_{\rho}^{*}G_{\rho}\right)-1\\
 & =-\max_{i\in[1:K]}\lambda_{i}\left(I_{K}-G_{\rho}^{*}G_{\rho}\right)\\
 & \ge-\max_{i\in[1:K]}\left|\lambda_{i}\left(I_{K}-G_{\rho}^{*}G_{\rho}\right)\right|\\
 & =-\left\Vert I_{K}-G_{\rho}^{*}G_{\rho}\right\Vert \\
 & =-O(e^{-Cc}).
 \qquad \mbox{(see \eqref{eq:props of Psi no 2})}
\end{align*}
It immediately follows that
\[
\left\Vert \left(G_{\rho}^{*}G_{\rho}\right)^{-1}\right\Vert 
= \frac{1}{\min_{i} \sigma_i\left( G_\rho^* G_\rho\right)}
\le\frac{1}{1-O(e^{-Cc})}=1+O(e^{-Cc}),
\]
as claimed in (\ref{eq:props of Psi 3}). Additionally,  (\ref{eq:propes of Psi 3.1})
follows directly from (\ref{eq:props of Psi no 1}) and (\ref{eq:props of Psi 3}).
\begin{comment}
Note that
\[
\left(\left\Vert \psi_{\sigma_{2}}'\right\Vert ^{2}I_{K}-\widehat{\Psi}_{\rho}^{*}L^{*}L\widehat{\Psi}_{\rho}\right)(i,j)=\begin{cases}
0 & i=j,\\
-\left\langle \psi_{\sigma_{2}}'(t\ominus\rho(i)),\psi_{\sigma_{2}}'(t\ominus\rho(j))\right\rangle  & i\ne j.
\end{cases}
\]
Therefore, using the Greshgorian disc theorem, we can write that
\begin{align*}
\left\Vert \left\Vert \psi'_{\sigma_{2}}\right\Vert ^{2}I_{K}-\widehat{\Psi}_{\rho}^{*}L^{*}L\widehat{\Psi}_{\rho}\right\Vert _{2} & =\max_{i\in[K]}\left|\lambda_{i}\left(\left\Vert \psi_{\sigma_{2}}'\right\Vert ^{2}I_{K}-\widehat{\Psi}_{\rho}^{*}L^{*}L\widehat{\Psi}_{\rho}\right)\right|\\
 & =\max_{i\in[K]}\left(\sum_{j\ne i}\left|\left\langle \psi_{\sigma_{2}}'(t\ominus\rho(i)),\psi_{\sigma_{2}}'(t\ominus\rho(j))\right\rangle \right|\right)\\
 & \lesssim e^{-CN^{\frac{1}{3}}},
\end{align*}
where the last line above follows from (\ref{eq:fast decay of derivates 00}).
This proves (\ref{eq:props of Psi 5}). From this, (\ref{eq:props of Psi 6})
follows using the same argument used to prove (\ref{eq:props of Psi no 1})
and Proposition \ref{prop:bnd on psi' n psi''}.
\end{comment}
 We next  observe that
\[
(G_{\rho}^{*}LG_{\rho})[i,j]=\begin{cases}
0 & i=j,\\
\left\langle g_{\sigma,N}(t\ominus\rho[i]),g_{\sigma,N}'(t\ominus\rho[j])\right\rangle  & i\ne j,
\end{cases}
\]
where we used the fact that $\left\langle g_{\sigma,N}(t\ominus\rho[i]),g_{\sigma,N}'(t\ominus\rho[i])\right\rangle =0$
because $g_{\sigma,N}(t)$ is symmetric about $t=\frac{1}{2}$ (and
hence $g_{\sigma,N}'(\cdot)$ is anti-symmetric about $t=\frac{1}{2}$).
Using the Gershgorin disc theorem once more, it follows that
\begin{align*}
\left\Vert G_{\rho}^{*}LG_{\rho}\right\Vert  & =\max_{l\in[1:K]}\left|\lambda_{l}\left(G_{\rho}^{*}LG_{\rho}\right)\right|\\
 & \le\max_{i\in[1:K]}\sum_{j\ne i}\left|(G_{\rho}^{*}LG_{\rho})[i,j]\right|\\
 & =\max_{i\in[1:K]}\sum_{j\ne i}\left|\left\langle g_{\sigma,N}(t\ominus\rho[i]),g_{\sigma,N}'(t\ominus\rho[j])\right\rangle \right|
\qquad \mbox{(see \eqref{eq:def of G_tau})}
 \\
 & =O(e^{-Cc}),
 \qquad \mbox{(see \eqref{eq:fast decay of derivates 0})}
\end{align*}
where the last line uses (\ref{eq:fast decay of derivates 0})%
\begin{comment}
 because $\mbox{sep}(\rho)\ge2\sigma$
\end{comment}
. This establishes (\ref{eq:props of Psi no 5}). The proof of (\ref{eq:props of Psi no 5.1}) is similar to that of
(\ref{eq:props of Psi no 2}) and is omitted here. %

Next, by the definition of $M_{\rho_1,\rho_2}$ in \eqref{eq:def of M12}, it holds that
\[
\left(G_{\rho_{1}}^{*}G_{\rho_{2}}-M_{\rho_{1},\rho_{2}}\right)[i,j]=\begin{cases}
0 & i=j,\\
\left\langle g_{\sigma,N}(t\ominus\rho_{1}[i]),g_{\sigma,N}(t\ominus\rho_{2}[j])\right\rangle  & i\ne j.
\end{cases}
\]
We can therefore write that
\begin{align*}
 & \left\Vert G_{\rho_{1}}^{*}G_{\rho_{2}}-M_{\rho_{1},\rho_{2}}\right\Vert \\
 & \le\max\left[\left\Vert G_{\rho_{1}}^{*}G_{\rho_{2}}-M_{\rho_{1},\rho_{2}}\right\Vert _{1,1},\left\Vert G_{\rho_{1}}^{*}G_{\rho_{2}}-M_{\rho_{1},\rho_{2}}\right\Vert _{\infty,\infty}\right]
\quad 
 \left( 
 \|A\| \le \max\left[  \|A\|_{1,1}\,,\, \|A\|_{\infty,\infty} \right]
 \right)\\
 & =\max\Bigg[\max_{i\in[1:K]}\sum_{j\ne i}\left|\left\langle g_{\sigma,N}(t\ominus\rho_{1}[i]),g_{\sigma,N}(t\ominus\rho_{2}[j])\right\rangle \right|\\
& \qquad \qquad ,\max_{j\in[1:K]}\sum_{i\ne j}\left|\left\langle g_{\sigma,N}(t\ominus\rho_{1}[i]),g_{\sigma,N}(t\ominus\rho_{2}[j])\right\rangle \right|\Bigg]\\
& =O(e^{-Cc}),
 \qquad \mbox{(see \eqref{eq:fast mutual decay})}
\end{align*}
%and, consequently, 
%\begin{align*}
% \left\Vert G_{\rho_{1}}^{*}G_{\rho_{2}}-M_{\rho_{1},\rho_{2}}\right\Vert 
%% & \le\max\left[\left\Vert G_{\rho_{1}}^{*}G_{\rho_{2}}-M_{\rho_{1},\rho_{2}}\right\Vert _{1,1},\left\Vert G_{\rho_{1}}^{*}G_{\rho_{2}}-M_{\rho_{1},\rho_{2}}\right\Vert _{\infty,\infty}\right]
%%\quad 
%% \left( 
%% \|A\| \le \max\left[  \|A\|_{1,1}\,,\, \|A\|_{\infty,\infty} \right]
%% \right)
%% \\
%% & =\max\Bigg[\max_{i\in[1:K]}\sum_{j\ne i}\left|\left\langle g_{\sigma,N}(t\ominus\rho_{1}[i]),g_{\sigma,N}(t\ominus\rho_{2}[j])\right\rangle \right|\\
%%& \qquad \qquad ,\max_{j\in[1:K]}\sum_{i\ne j}\left|\left\langle g_{\sigma,N}(t\ominus\rho_{1}[i]),g_{\sigma,N}(t\ominus\rho_{2}[j])\right\rangle \right|\Bigg]\\
% & =O(e^{-Cc}),
% \qquad \mbox{(see \eqref{eq:fast mutual decay})}
%\end{align*}
where $\|A\|_{1,1}$ and $\|A\|_{\infty,\infty}$ are $\ell_{1}\rightarrow\ell_{1}$
and $\ell_{\infty}\rightarrow\ell_{\infty}$ operator norms of matrix
$A$. 
% and return, respectively, the largest $\ell_{1}$-norm of the
%columns and rows of $A$.
 %The fourth line above uses (\ref{eq:fast mutual decay}).
This proves (\ref{eq:props of Psi 4}). Similarly, recalling \eqref{eq:def of M12d}, we note that
\[
\left(G_{\rho_{1}}^{*}LG_{\rho_{2}}-M{}_{\rho_{1},\rho_{2}}^{d}\right)[i,j]=\begin{cases}
0 & i=j,\\
\left\langle g_{\sigma,N}(t\ominus\rho_{1}[i]),g'_{\sigma,N}(t\ominus\rho_{2}[j])\right\rangle  & i\ne j,
\end{cases}
\]
from which it follows that 
\begin{align*}
 & \left\Vert G_{\rho_{1}}^{*}LG_{\rho_{2}}-M{}_{\rho_{1},\rho_{2}}^{d}\right\Vert \\
 & \le\max\left[\left\Vert G_{\rho_{1}}^{*}LG_{\rho_{2}}-M_{\rho_{1},\rho_{2}}^{d}\right\Vert _{1,1},\left\Vert G_{\rho_{1}}^{*}LG_{\rho_{2}}-M_{\rho_{1},\rho_{2}}^{d}\right\Vert _{\infty,\infty}\right]
\,\, \left(\|A\|\le \max\left[ \|A\|_{1,1} \,,\,
\|A\|_{\infty,\infty}
 \right]
 \right) 
 \\
 & =\max\Big[\max_{i\in[1:K]}\sum_{j\ne i}\left|\left\langle g_{\sigma,N}(t\ominus\rho_{1}[i]),g'_{\sigma,N}(t\ominus\rho_{2}[j])\right\rangle \right| \nonumber\\
& \qquad\quad\quad , \max_{j\in[1:K]}\sum_{i\ne j}\left|\left\langle g_{\sigma,N}(t\ominus\rho_{1}[i]),g'_{\sigma,N}(t\ominus\rho_{2}[j])\right\rangle \right|\Big]\\
 & =O(e^{-Cc}).
 \qquad 
 \mbox{(see \eqref{eq:fast decay of derivates})}
\end{align*}
%where the fourth line follows from (\ref{eq:fast decay of derivates}).
This establishes (\ref{eq:props of Psi 6-1}). To prove (\ref{eq:diff between Gs}),
we note that
\begin{align*}
& \left\Vert G_{\rho_{1}}-G_{\rho_{2}}\right\Vert ^{2} 
\\
& =\left\Vert \left(G_{\rho_{1}}-G_{\rho_{2}}\right)^{*}\left(G_{\rho_{1}}-G_{\rho_{2}}\right)\right\Vert \\
 & =\left\Vert G_{\rho_{1}}^{*}G_{\rho_{1}}+G_{\rho_{2}}^{*}G_{\rho_{2}}-G_{\rho_{1}}^{*}G_{\rho_{2}}-G_{\rho_{2}}^{*}G_{\rho_{1}} \right\Vert \\
 & =\Big\|2I_{K}-2M_{\rho_{1},\rho_{2}}-\left(I_{K}-G_{\rho_{1}}^{*}G_{\rho_{1}}\right)-\left(I_{K}-G_{\rho_{2}}^{*}G_{\rho_{2}}\right)\\
 & \qquad \qquad \qquad 
-\left(G_{\rho_{1}}^{*}G_{\rho_{2}}-M_{\rho_{1},\rho_{2}}\right)-\left(G_{\rho_{2}}^{*}G_{\rho_{1}} 
-M_{\rho_{1},\rho_{2}}\right) \Big\|
 \\
% & \qquad\qquad -\left(G_{\rho_{1}}^{*}G_{\rho_{2}}-M_{\rho_{1},\rho_{2}}\right)-\left(G_{\rho_{2}}^{*}G_{\rho_{1}}-M_{\rho_{1},\rho_{2}}\right)\Big\|\\
 & \le 2\left\Vert I_{K}-M_{\rho_{1},\rho_{2}}\right\Vert +\left\Vert I_{K}-G_{\rho_{1}}^{*}G_{\rho_{1}}\right\Vert
+ \left\Vert I_{K}-G_{\rho_{2}}^{*}G_{\rho_{2}}\right\Vert \\
& \qquad \qquad \qquad   
  +2\left\Vert G_{\rho_{1}}^{*}G_{\rho_{2}}-M_{\rho_{1},\rho_{2}}\right\Vert
\quad \mbox{(see \eqref{eq:def of M12})}  
   \\
 & =2\left\Vert I_{K}-M_{\rho_{1},\rho_{2}}\right\Vert +O(e^{-Cc}),\qquad\mbox{(see (\ref{eq:props of Psi no 2}) and (\ref{eq:props of Psi 4}))}.
\end{align*}
Lastly, to prove (\ref{eq:diff between Gdaggers}), we write that
\begin{align*}
 & \left\Vert G_{\rho_{1}}^{\dagger}-G_{\rho_{2}}^{\dagger}\right\Vert \\
 & =\left\Vert \overset{A}{\overbrace{\left(G_{\rho_{1}}^{*} G_{\rho_{1}}\right)^{-1}}}
\cdot  
 \overset{B}{\overbrace{G_{\rho_{1}}}}-\overset{C}{\overbrace{\left(G_{\rho_{2}}^{*}G_{\rho_{2}}\right)^{-1}}}\cdot \overset{D}{\overbrace{G_{\rho_{2}}}}\right\Vert \\
 & \le\left\Vert \left[\left(G_{\rho_{1}}^{*}G_{\rho_{1}}\right)^{-1}-\left(G_{\rho_{2}}^{*}G_{\rho_{2}}\right)^{-1}\right]G_{\rho_{1}}\right\Vert +\left\Vert \left(G_{\rho_{2}}^{*}G_{\rho_{2}}\right)^{-1}\left[G_{\rho_{1}}-G_{\rho_{2}}\right]\right\Vert \\
 & \le\left\Vert \left(G_{\rho_{1}}^{*}G_{\rho_{1}}\right)^{-1}-\left(G_{\rho_{2}}^{*}G_{\rho_{2}}\right)^{-1}\right\Vert \cdot \left\Vert G_{\rho_{1}}\right\Vert +\left\Vert \left(G_{\rho_{2}}^{*}G_{\rho_{2}}\right)^{-1}\right\Vert \cdot  \left\Vert G_{\rho_{1}}-G_{\rho_{2}}\right\Vert \\
 & \le\left\Vert \left(G_{\rho_{1}}^{*}G_{\rho_{1}}\right)^{-1}\right\Vert \cdot\left\Vert G_{\rho_{1}}^{*}G_{\rho_{1}}-G_{\rho_{2}}^{*}G_{\rho_{2}}\right\Vert \cdot\left\Vert \left(G_{\rho_{2}}^{*}G_{\rho_{2}}\right)^{-1}\right\Vert
 \cdot \left\| G_{\rho_1}\right\|  +\left\Vert \left(G_{\rho_{2}}^{*}G_{\rho_{2}}\right)^{-1}\right\Vert \\
& \qquad \qquad\qquad   \cdot\left\Vert G_{\rho_{1}}-G_{\rho_{2}}\right\Vert \\
 & =O(1)\left\Vert \overset{A}{\overbrace{G_{\rho_{1}}^{*}}}\cdot \overset{B}{\overbrace{G_{\rho_{1}}}}-\overset{C}{\overbrace{G_{\rho_{2}}^{*}}}\cdot \overset{D}{\overbrace{G_{\rho_{2}}}}\right\Vert +O(1)\left\Vert G_{\rho_{1}}-G_{\rho_{2}}\right\Vert \qquad\mbox{(see  \eqref{eq:props of Psi no 1} and  (\ref{eq:props of Psi 3}))}\\
 & \le O(1)\left\Vert G_{\rho_{1}}-G_{\rho_{2}}\right\Vert \cdot\max\left[\left\Vert G_{\rho_{1}}\right\Vert ,\left\Vert G_{\rho_{2}}\right\Vert \right]+O(1)\left\Vert G_{\rho_{1}}-G_{\rho_{2}}\right\Vert \\
 & =O(1)\left\Vert G_{\rho_{1}}-G_{\rho_{2}}\right\Vert +O(1)\left\Vert G_{\rho_{1}}-G_{\rho_{2}}\right\Vert \qquad\mbox{(see (\ref{eq:props of Psi no 1}))}\\
 & =O(1)\left(\sqrt{2\left\Vert I_{K}-M_{\rho_{1},\rho_{2}}\right\Vert }+O(e^{-Cc})\right)\qquad\mbox{(see (\ref{eq:diff between Gs}))}\\
 & =O(1)\left\Vert I_{K}-M_{\rho_{1},\rho_{2}}\right\Vert ^{\frac{1}{2}}+O(e^{-Cc}).
\end{align*}
Above, we twice used the identity $AB-CD=(A-C)B+C(B-D)$
for conformal matrices $A,B,C,D$. The fifth line owes itself to the
identity $A^{-1}-B^{-1}=A^{-1}\left(B-A\right)B^{-1}$ for (conformal
and invertible) matrices $A,B$. 
\begin{comment}
The seventh line uses the asymptotic estimate in (\ref{eq:props of Psi 3}).
\end{comment}
{}

This concludes the proof of Lemma \ref{lem:props of Psi}.
\end{proof}
If the entries of $\rho\in\I^K$ are distinct, $G_{\rho}\in\mathbb{C}^{N\times K}$ has nearly orthonormal
columns asymptotically (by Lemma \ref{lem:props of Psi}),   and it holds that $\|G_{\rho}\beta\|_{2}\approx\|\beta\|_{2}$
for any $\beta\in\mathbb{C}^{K}$. This is recorded next.
\begin{lem}
\label{lem: alpha and Psi alpha}\textbf{ }For integer $N$ and $c=c(N)>0$,
let $\sigma=\frac{c}{N}$. Consider a kernel $g_{\sigma,N}(\cdot)$ 
that satisfies Criterion \ref{lem:corr decay away}. Fix $\rho\in\mathbb{I}^{K}$ with distinct entries, and $\beta\in\mathbb{R}^{K}$, and recall (\ref{eq:def of G_tau}). 
\begin{comment}
if $\emph{\ensuremath{\mathcal{\mbox{sep}}}}(\rho)\ge2\sigma$,
\end{comment}
{} It then holds asymptotically that
\[
\left(1-O(e^{-Cc})\right)\|\beta\|_{2}^{2}\le\left\Vert G_{\rho}\beta\right\Vert _{2}^{2}\le\left(1+O(e^{-Cc})\right)\|\beta\|_{2}^{2},
\]
\[
\left(1-O(e^{-Cc})\right)\|\beta\|_{2}^{2}\le
\frac{\left\Vert L G_{\rho}\beta\right\Vert _{2}^{2}}
{\left\| g'_{\sigma,N}(\cdot) \right\|^2_{L_2}}
\le\left(1+O(e^{-Cc})\right)\|\beta\|_{2}^{2},
\]
 when $c,N\rightarrow\infty$ and $c=\Theta(\log N)$ (with a large enough lower bound).
  \end{lem}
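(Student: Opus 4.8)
The plan is to obtain both two‑sided bounds directly from the near‑isometry estimates already proved in Lemma \ref{lem:props of Psi}, using nothing more than the identity $\|Av\|_2^2 = v^* A^* A v$ together with the elementary fact that a Hermitian perturbation of the identity moves a quadratic form by at most its spectral norm, i.e.\ $|v^*(A - I)v| \le \|A - I\|\cdot\|v\|_2^2$.

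For the first inequality, I would note that since $\|g_{\sigma,N}(\cdot)\|_{L_2}=1$ and the entries of $\rho$ are distinct, Lemma \ref{lem:props of Psi} (specifically \eqref{eq:props of Psi no 2}) gives $\|I_K - G_\rho^* G_\rho\| = O(e^{-Cc})$ asymptotically. Writing
\[
\left\|G_\rho\beta\right\|_2^2 = \beta^* G_\rho^* G_\rho \beta = \|\beta\|_2^2 - \beta^*\!\left(I_K - G_\rho^* G_\rho\right)\!\beta,
\]
and bounding the last term by $\|I_K - G_\rho^* G_\rho\|\cdot\|\beta\|_2^2 = O(e^{-Cc})\,\|\beta\|_2^2$, yields the claimed sandwich for $\|G_\rho\beta\|_2^2$.

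For the derivative bound I would argue identically, but starting from \eqref{eq:props of Psi no 5.1}, which asserts $\bigl\|\,\|g'_{\sigma,N}(\cdot)\|_{L_2}^2\, I_K - G_\rho^* L^* L G_\rho\bigr\| = O(e^{-Cc})$. This gives
\[
\left\|L G_\rho\beta\right\|_2^2 = \beta^* G_\rho^* L^* L G_\rho \beta = \|g'_{\sigma,N}(\cdot)\|_{L_2}^2\,\|\beta\|_2^2 + O(e^{-Cc})\,\|\beta\|_2^2,
\]
and dividing through by $\|g'_{\sigma,N}(\cdot)\|_{L_2}^2$ produces
\[
\frac{\|L G_\rho\beta\|_2^2}{\|g'_{\sigma,N}(\cdot)\|_{L_2}^2} = \|\beta\|_2^2 + \frac{O(e^{-Cc})}{\|g'_{\sigma,N}(\cdot)\|_{L_2}^2}\,\|\beta\|_2^2.
\]

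The only point needing a moment's care — and the closest thing to an obstacle here — is this last division: one must ensure that dividing the $O(e^{-Cc})$ error term by $\|g'_{\sigma,N}(\cdot)\|_{L_2}^2$ does not inflate it. This is immediate from Criterion \ref{fact: slow decay in bound-1}, which supplies $\|g'_{\sigma,N}(\cdot)\|_{L_2}^2 = \Omega(N^2) \ge 1$ asymptotically, so the error term remains $O(e^{-Cc})\,\|\beta\|_2^2$ and the argument closes. Everything else is a one‑line consequence of the spectral‑norm estimates of Lemma \ref{lem:props of Psi}, so the proof is short.
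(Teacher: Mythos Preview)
Your proof is correct and follows essentially the same approach as the paper: both reduce immediately to Lemma~\ref{lem:props of Psi}, using \eqref{eq:props of Psi no 2} for the first bound and \eqref{eq:props of Psi no 5.1} for the second, via the identity $\bigl|\|A\beta\|_2^2 - \|\beta\|_2^2\bigr| \le \|A^*A - I_K\|\cdot\|\beta\|_2^2$. You are in fact more careful than the paper on one point: the paper simply says ``the second claim is proved similarly'' and does not address the division by $\|g'_{\sigma,N}\|_{L_2}^2$, whereas you correctly note that this requires a lower bound on that quantity and invoke Criterion~\ref{fact: slow decay in bound-1} to supply $\|g'_{\sigma,N}\|_{L_2}^2 = \Omega(N^2)$. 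Strictly speaking that criterion is not among the lemma's stated hypotheses (only Criterion~\ref{lem:corr decay away} is assumed), so this is a small gap in the lemma's statement rather than in your argument.
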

\begin{proof}
This is a direct consequence of Lemma \ref{lem:props of Psi}. Indeed, 
it holds asymptotically that
\begin{align*}
\left|\left\Vert G_{\rho}\beta\right\Vert _{2}^{2}-\|\beta\|_{2}^{2}\right| & =\left|\beta^{*}\left(G_{\rho}^{*}G_{\rho}-I_{K}\right)\beta\right|\\
 & \le\left\Vert G_{\rho}^{*}G_{\rho}-I_{K}\right\Vert \|\beta\|_{2}^{2}\\
 & =O(e^{-Cc})\|\beta\|_{2}^{2}.
 \qquad \mbox{(see \eqref{eq:props of Psi no 2})}
\end{align*}
%where the last line uses (\ref{eq:props of Psi no 2}) from Lemma
%\ref{lem:props of Psi}.
The second claim is proved similarly using \eqref{eq:props of Psi no 5.1}.  This completes the proof of Lemma \ref{lem: alpha and Psi alpha}.
\end{proof}
We close this section with the following auxiliary result that approximates certain 
projection matrices with simpler quantities.
\begin{lem}
\label{lem:P is simple}\textbf{ }For integer $N$ and $c=c(N)>0$,
let $\sigma=\frac{c}{N}$. Consider a kernel $g_{\sigma,N}(\cdot)$ 
that satisfies Criterion \ref{lem:corr decay away}. Consider a vector
$u\in\mathbb{R}^{K}$ and let $U=\mbox{diag}(u)\in\mathbb{R}^{K\times K}$
be the diagonal matrix formed from $u$. Suppose that $\rho\in\mathbb{I}^{K}$
has distinct entries and set%
\begin{comment}
suppose that $\mbox{sep}(\rho_{1})\ge4\sigma$ and that $d(\rho_{1},\rho_{2})\le\sigma$
\end{comment}
{} $\mathcal{P}_{\rho,U}:=G_{\rho}UG_{\rho}^{\dagger}\in\mathbb{C}^{N\times N}$ (after recalling (\ref{eq:def of G_tau})).\footnote{In particular, when $U=I_{K}$, $\mathcal{P}_{\rho,I_{K}}$ is the
orthogonal projection onto $\mbox{span}(G_{\rho})$.} 
Then, it holds
asymptotically that
\begin{equation}
\left\Vert \mathcal{P}_{\rho,U}-G_{\rho}UG_{\rho}^{*}\right\Vert =O(e^{-Cc})\|u\|_{\infty},\label{eq:P is simple}
\end{equation}
\begin{equation}
\left\Vert \mathcal{P}_{\rho,U}\right\Vert \le2\|u\|_{\infty},\label{eq:P is simple 2}
\end{equation}
when $c,N\rightarrow\infty$ and $c=\Theta(\log N)$ (with a large enough lower bound). 

Furthermore, suppose that $\rho_{1},\rho_{2}\in\I^{K}$ both have
distinct entries and $\rho_{1}[i]\ne\rho_{2}[j]$ when $i\ne j$.
Then, for any $\beta\in\mathbb{R}^{K}$, we asymptotically have that
\begin{equation}
\left\Vert \mathcal{P}_{\rho_{2},U}G_{\rho_{1}}\beta-G_{\rho_{2}}UM_{\rho_{1},\rho_{2}}\beta\right\Vert _{2}=O(e^{-Cc})\|u\|_{\infty}\|G_{\rho_{1}}\beta\|_{2},\label{eq:P is simple 3}
\end{equation}
with $M_{\rho_{1},\rho_{2}}\in\mathbb{R}^{K\times K}$ defined as 
in (\ref{eq:def of M12}).
%Lemma \ref{lem:props of Psi}. 
\end{lem}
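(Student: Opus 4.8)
The plan is to reduce every claim to the near-orthonormality estimates of Lemma \ref{lem:props of Psi} and the cross-Gram estimate \eqref{eq:props of Psi 4}. For \eqref{eq:P is simple}, I would write
\[
\mathcal{P}_{\rho,U}-G_{\rho}UG_{\rho}^{*}
= G_{\rho}U\left(G_{\rho}^{\dagger}-G_{\rho}^{*}\right)
= G_{\rho}U\left[\left(G_{\rho}^{*}G_{\rho}\right)^{-1}-I_{K}\right]G_{\rho}^{*},
\]
using $G_{\rho}^{\dagger}=(G_{\rho}^{*}G_{\rho})^{-1}G_{\rho}^{*}$ (valid since the columns of $G_{\rho}$ are independent asymptotically by Lemma \ref{lem:props of Psi}). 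Then submultiplicativity of the spectral norm gives
\[
\left\Vert \mathcal{P}_{\rho,U}-G_{\rho}UG_{\rho}^{*}\right\Vert
\le \left\Vert G_{\rho}\right\Vert^{2}\cdot\|U\|\cdot\left\Vert \left(G_{\rho}^{*}G_{\rho}\right)^{-1}-I_{K}\right\Vert,
\]
where $\|U\|=\|u\|_{\infty}$, $\|G_{\rho}\|^{2}=1+O(e^{-Cc})$ by \eqref{eq:props of Psi no 1}, and $\|(G_{\rho}^{*}G_{\rho})^{-1}-I_{K}\|\le\|(G_{\rho}^{*}G_{\rho})^{-1}\|\cdot\|I_{K}-G_{\rho}^{*}G_{\rho}\|=O(e^{-Cc})$ by \eqref{eq:props of Psi no 2} and \eqref{eq:props of Psi 3}. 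This yields \eqref{eq:P is simple}.

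For \eqref{eq:P is simple 2}, combine the triangle inequality with \eqref{eq:P is simple}: $\|\mathcal{P}_{\rho,U}\|\le\|G_{\rho}UG_{\rho}^{*}\|+O(e^{-Cc})\|u\|_{\infty}\le\|G_{\rho}\|^{2}\|u\|_{\infty}+O(e^{-Cc})\|u\|_{\infty}\le 2\|u\|_{\infty}$ asymptotically, since $\|G_{\rho}\|^{2}=1+O(e^{-Cc})$. Alternatively one can bound $\|\mathcal{P}_{\rho,U}\|\le\|G_\rho\|\,\|u\|_\infty\,\|G_\rho^\dagger\|$ directly via \eqref{eq:props of Psi no 1} and \eqref{eq:propes of Psi 3.1}, which also gives $(1+O(e^{-Cc}))\|u\|_\infty\le 2\|u\|_\infty$.

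For \eqref{eq:P is simple 3}, I would insert the approximation of the previous parts and of the cross-Gram matrix. Write
\[
\mathcal{P}_{\rho_{2},U}G_{\rho_{1}}\beta-G_{\rho_{2}}UM_{\rho_{1},\rho_{2}}\beta
=\left(\mathcal{P}_{\rho_{2},U}-G_{\rho_{2}}UG_{\rho_{2}}^{*}\right)G_{\rho_{1}}\beta
+G_{\rho_{2}}U\left(G_{\rho_{2}}^{*}G_{\rho_{1}}-M_{\rho_{1},\rho_{2}}\right)\beta,
\]
and bound each term separately: the first is $O(e^{-Cc})\|u\|_{\infty}\|G_{\rho_{1}}\beta\|_{2}$ by \eqref{eq:P is simple} applied with $\rho=\rho_{2}$; the second is at most $\|G_{\rho_{2}}\|\,\|u\|_{\infty}\,\|G_{\rho_{2}}^{*}G_{\rho_{1}}-M_{\rho_{1},\rho_{2}}\|\,\|\beta\|_{2}=O(e^{-Cc})\|u\|_{\infty}\|\beta\|_{2}$ by \eqref{eq:props of Psi no 1} and \eqref{eq:props of Psi 4} (note $M_{\rho_{1},\rho_{2}}$ here has the roles of $\rho_{1},\rho_{2}$ consistent with the hypothesis $\rho_{1}[i]\ne\rho_{2}[j]$ for $i\ne j$). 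Finally, by Lemma \ref{lem: alpha and Psi alpha}, $\|\beta\|_{2}=(1+O(e^{-Cc}))\|G_{\rho_{1}}\beta\|_{2}$, so the second term is also $O(e^{-Cc})\|u\|_{\infty}\|G_{\rho_{1}}\beta\|_{2}$, and adding the two gives \eqref{eq:P is simple 3}. There is essentially no hard step here: the only point requiring a little care is ensuring that $(G_{\rho}^{*}G_{\rho})^{-1}$ and $G_{\rho}^{\dagger}$ are well-defined (guaranteed asymptotically by the distinctness of entries via Lemma \ref{lem:props of Psi}) and that the hypotheses on $\rho_{1},\rho_{2}$ are exactly those under which \eqref{eq:props of Psi 4} was established.
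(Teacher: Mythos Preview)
Your proposal is correct and follows essentially the same approach as the paper's own proof: the same factorization $G_{\rho}U[(G_{\rho}^{*}G_{\rho})^{-1}-I_{K}]G_{\rho}^{*}$ for \eqref{eq:P is simple}, the same triangle-inequality argument for \eqref{eq:P is simple 2}, and the same two-term split for \eqref{eq:P is simple 3}, all reduced to the estimates of Lemma~\ref{lem:props of Psi} and Lemma~\ref{lem: alpha and Psi alpha}.
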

\begin{proof}
We show that $\mathcal{P}_{\rho,U}$ can be well approximated with
$G_{\rho}UG_{\rho}^{*}$, and do so by bounding $\|\mathcal{P}_{\rho,U}-G_{\rho}UG_{\rho}^{*}\|$
next. We use the fact that $G_{\rho}\in\mathbb{C}^{N\times K}$ has
nearly orthonormal columns (thanks to the distinct entries of $\rho$).
Asymptotically, it holds that
\begin{align}
\left\Vert \mathcal{P}_{\rho,U}-G_{\rho}UG_{\rho}^{*}\right\Vert  & =\left\Vert G_{\rho}U\left(G_{\rho}^{*}G_{\rho}\right)^{-1}G_{\rho}^{*}-G_{\rho}UG_{\rho}^{*}\right\Vert \nonumber \\
 & \le\left\Vert G_{\rho}\right\Vert \|U\|\left\Vert \left(G_{\rho}^{*}G_{\rho}\right)^{-1}-I_{K}\right\Vert \left\Vert G_{\rho}\right\Vert \nonumber \\
 & =\left\Vert G_{\rho}\right\Vert ^{2}\|u\|_{\infty}\left\Vert \left(G_{\rho}^{*}G_{\rho}\right)^{-1}-I_{K}\right\Vert
\qquad 
\left(  U = \mbox{diag}(u) \right)  \nonumber \\
 & \le\left\Vert G_{\rho}\right\Vert ^{2}\|u\|_{\infty}\left\Vert \left(G_{\rho}^{*}G_{\rho}\right)^{-1}\right\Vert \left\Vert I_{K}-G_{\rho}^{*}G_{\rho}\right\Vert \nonumber \\
 & =O(e^{-Cc})\|u\|_{\infty}.
 \qquad \mbox{(see Lemma \ref{lem:props of Psi})}
 \label{eq:part 1 of lemma}
\end{align}
%In the last line above, we used Lemma \ref{lem:props of Psi} repeatedly
%to bound various terms. The third line also makes use of the fact
%that $U$ is a diagonal matrix so that $\|U\|=\|u\|_{\infty}$. 
This
proves (\ref{eq:P is simple}). Also, (\ref{eq:P is simple 2}) is proved by noting that 
%by Lemma \ref{lem:props of Psi}, $\|G_{\rho}\|^{2}\le1+O(e^{-Cc})$
%which proves (\ref{eq:P is simple 2}) 
%%when put together with (\ref{eq:part 1 of lemma})
% because
\begin{align*}
\left\Vert \mathcal{P}_{\rho,U}\right\Vert 
& \le\left\Vert G_{\rho}UG_{\rho}^{*}\right\Vert +\left\Vert \mathcal{P}_{\rho,U}-G_{\rho}UG_{\rho}^{*}\right\Vert \\
& \le\left\Vert G_{\rho}\right\Vert ^{2}\|u\|_{\infty}+O(e^{-Cc})\|u\|_{\infty}
\qquad \mbox{(see \eqref{eq:part 1 of lemma})}
\\
& \le2\|u\|_{\infty},
\qquad \mbox{(see \eqref{eq:props of Psi no 1})}
\end{align*}
asymptotically. Lastly, using the just-established (\ref{eq:P is simple})
and (\ref{eq:P is simple 2}), we prove (\ref{eq:P is simple 3})
as follows. (We will use the triangle inequality and basic manipulations, and also the fact that the spectral norm of a diagonal matrix equals its maximum entry.)   Asymptotically, it holds that 
\begin{align*}
 & \left\Vert \mathcal{P}_{\rho_{2},U}G_{\rho_{1}}\beta-G_{\rho_{2}}UM_{\rho_{1},\rho_{2}}\beta\right\Vert _{2}\\
& \le\left\Vert \mathcal{P}_{\rho_{2},U}G_{\rho_{1}}\beta-G_{\rho_{2}}UG_{\rho_{2}}^{*}G_{\rho_{1}}\beta\right\Vert _{2}+\left\Vert G_{\rho_{2}}UG_{\rho_{2}}^{*}G_{\rho_{1}}\beta-G_{\rho_{2}}UM_{\rho_{1},\rho_{2}}\beta\right\Vert _{2}\\
 & \le\left\Vert \mathcal{P}_{\rho_{2},U}-G_{\rho_{2}}UG_{\rho_{2}}^{*}\right\Vert \|G_{\rho_{1}}\beta\|_{2}+\|G_{\rho_{2}}\|\|U\|\left\Vert G_{\rho_{1}}^{*}G_{\rho_{2}}-M_{\rho_{1},\rho_{2}}\right\Vert \|\beta\|_{2}\\
 & =O(e^{-Cc})\|u\|_{\infty}\|G_{\rho_{1}}\beta\|_{2},
\end{align*}
where we also used Lemmas \ref{lem:props of Psi} and \ref{lem: alpha and Psi alpha}.
This proves (\ref{eq:P is simple 3}) and completes the proof of Lemma
\ref{lem:P is simple}.\end{proof}
\begin{lem}
\label{lem:ls dist to true}
For integer $N$ and $c=c(N)>0$, let $\sigma=\frac{c}{N}$. Consider a kernel $g_{\sigma,N}(\cdot)$ that satisfies Criterion \ref{lem:corr decay away}. 
Suppose that $\rho_{1},\rho_{2}\in\mathbb{I}^{K}$
satisfy $\rho_{1}[i]\ne\rho_{2}[j]$ for all $i\ne j$. Recall (\ref{eq:def of G_tau}), and for vectors $\alpha\in\mathbb{R}^{K}$ and $\widehat{n}\in\mathbb{C}^{N}$, set\footnote{Dependence of $\beta_{\rho_2}$ on other parameters (particularly, $\rho_1$) is suppressed for convenience. }
$$
%\beta_{\rho_{1}}=G_{\rho_{1}}^{\dagger}(G_{\rho_{1}}\alpha+\widehat{n}),
%\qquad 
\beta_{\rho_{2}}=G_{\rho_{2}}^{\dagger}(G_{\rho_{1}}\alpha+\widehat{n}).
$$
Then, it holds asymptotically
that
\[
\left\Vert \beta_{\rho_{2}}-\alpha\right\Vert _{\infty}=O(1)\cdot\left(\sqrt{K}\left\Vert I_{K}-M_{\rho_{1},\rho_{2}}\right\Vert ^{\frac{1}{2}}+e^{-Cc}\right)\|\alpha\|_{\infty}+O(1)\cdot\left\Vert \widehat{n}\right\Vert _{2},
\]
\[
\left\Vert \beta_{\rho_{2}}-\beta_{\rho_{1}}\right\Vert _{\infty}=O(1)\cdot\left(\left\Vert I_{K}-M_{\rho_{1},\rho_{2}}\right\Vert ^{\frac{1}{2}}+e^{-Cc}\right)\cdot\left(\sqrt{K}\|\alpha\|_{\infty}+\left\Vert \widehat{n}\right\Vert _{2}\right),
\]
when $c,N\rightarrow\infty$ and $c=\Theta(\log N)$ (with a large enough lower bound). 
\end{lem}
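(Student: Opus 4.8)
The plan is to write each of the two differences as a single matrix applied to a bounded vector, and then control the operator norm of that matrix with the perturbation estimates already collected in Lemma~\ref{lem:props of Psi}. Throughout I will use only the elementary facts $\|v\|_\infty\le\|v\|_2$ and $\|\alpha\|_2\le\sqrt K\|\alpha\|_\infty$, together with the observation that, asymptotically, the columns of both $G_{\rho_1}$ and $G_{\rho_2}$ are nearly orthonormal (Lemma~\ref{lem:props of Psi}); in particular $G_{\rho_i}^\dagger G_{\rho_i}=I_K$ and $\|G_{\rho_i}\|,\|G_{\rho_i}^\dagger\|=1+O(e^{-Cc})$ for $i=1,2$. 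Since $K$ (the number of impulses) is a fixed constant, any spurious $\sqrt K$ picked up in converting between $\ell_\infty$, $\ell_2$, and operator norms may be absorbed into the implicit constants multiplying $\|\alpha\|_\infty$.

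For the first estimate I would start from $\beta_{\rho_2}=G_{\rho_2}^\dagger(G_{\rho_1}\alpha+\widehat n)$ and use $G_{\rho_2}^\dagger G_{\rho_2}=I_K$ to write
\[
\beta_{\rho_2}-\alpha=\big(G_{\rho_2}^\dagger G_{\rho_1}-I_K\big)\alpha+G_{\rho_2}^\dagger\widehat n=G_{\rho_2}^\dagger\big(G_{\rho_1}-G_{\rho_2}\big)\alpha+G_{\rho_2}^\dagger\widehat n .
\]
Passing to $\ell_2$- and then operator norms yields
\[
\|\beta_{\rho_2}-\alpha\|_\infty\le\|G_{\rho_2}^\dagger\|\,\|G_{\rho_1}-G_{\rho_2}\|\,\|\alpha\|_2+\|G_{\rho_2}^\dagger\|\,\|\widehat n\|_2 ,
\]
and it remains to plug in $\|G_{\rho_2}^\dagger\|=1+O(e^{-Cc})$ from \eqref{eq:propes of Psi 3.1}, $\|G_{\rho_1}-G_{\rho_2}\|\le\sqrt2\,\|I_K-M_{\rho_1,\rho_2}\|^{1/2}+O(e^{-Cc})$ from \eqref{eq:diff between Gs}, and $\|\alpha\|_2\le\sqrt K\|\alpha\|_\infty$. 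This produces exactly the first displayed bound.

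For the second estimate the key point is that $\widehat n$ (and indeed $G_{\rho_1}\alpha$) enters $\beta_{\rho_1}$ and $\beta_{\rho_2}$ through the \emph{same} vector, so
\[
\beta_{\rho_2}-\beta_{\rho_1}=\big(G_{\rho_2}^\dagger-G_{\rho_1}^\dagger\big)\big(G_{\rho_1}\alpha+\widehat n\big),
\]
whence $\|\beta_{\rho_2}-\beta_{\rho_1}\|_\infty\le\|G_{\rho_2}^\dagger-G_{\rho_1}^\dagger\|\big(\|G_{\rho_1}\alpha\|_2+\|\widehat n\|_2\big)$. I would then bound $\|G_{\rho_2}^\dagger-G_{\rho_1}^\dagger\|$ by $O(1)\|I_K-M_{\rho_1,\rho_2}\|^{1/2}+O(e^{-Cc})$ via \eqref{eq:diff between Gdaggers}, and $\|G_{\rho_1}\alpha\|_2\le\|G_{\rho_1}\|\,\|\alpha\|_2\le O(1)\sqrt K\|\alpha\|_\infty$ via \eqref{eq:props of Psi no 1}. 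Multiplying the two factors gives the second displayed bound. It matters here that we difference the two pseudo-inverses directly rather than combining the first bound with $\beta_{\rho_1}=\alpha+G_{\rho_1}^\dagger\widehat n$: the latter route would leave an undamped $O(1)\|\widehat n\|_2$ term, whereas the claimed bound has the noise contribution multiplied by the small factor $\|I_K-M_{\rho_1,\rho_2}\|^{1/2}+e^{-Cc}$.

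I do not expect a real obstacle: the statement is essentially a corollary of Lemma~\ref{lem:props of Psi}. The only care required is bookkeeping with the norm conversions and the harmless $\sqrt K$ factors, and choosing the ``difference of pseudo-inverses'' form in the second bound so as not to lose the damping on the noise term. As always, all estimates are asymptotic, as $c,N\to\infty$ with $c=\Theta(\log N)$ (with a sufficiently large lower bound) — precisely the regime in which the invoked parts of Lemma~\ref{lem:props of Psi} hold.
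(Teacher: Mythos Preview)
Your proposal is correct and follows essentially the same route as the paper's proof. The only cosmetic difference is in the first estimate: the paper writes $\beta_{\rho_2}-\alpha=(G_{\rho_2}^\dagger-G_{\rho_1}^\dagger)G_{\rho_1}\alpha+G_{\rho_2}^\dagger\widehat n$ and invokes \eqref{eq:diff between Gdaggers}, whereas you write $\beta_{\rho_2}-\alpha=G_{\rho_2}^\dagger(G_{\rho_1}-G_{\rho_2})\alpha+G_{\rho_2}^\dagger\widehat n$ and invoke \eqref{eq:diff between Gs}; both factorizations rely on $G_\rho^\dagger G_\rho=I_K$ and yield the same bound after the same norm conversions, with the extra $\sqrt K\,e^{-Cc}$ absorbed into $e^{-Cc}$ exactly as the paper does (this is where the ``large enough lower bound'' in $c=\Theta(\log N)$ is used). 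For the second estimate your argument is identical to the paper's.
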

\begin{proof}
Note that
\begin{align*}
 & \left\Vert \beta_{\rho_{2}}-\alpha\right\Vert _{\infty}\\
 & \le\left\Vert G_{\rho_{2}}^{\dagger}G_{\rho_{1}}\alpha-\alpha\right\Vert _{\infty}+\left\Vert G_{\rho_{2}}^{\dagger}\widehat{n}\right\Vert _{\infty}
\qquad \left( \beta_{\rho_2}= G_{\rho_2}^{\dagger}\left(  
G_{\rho_1}\alpha+\widehat{n}
\right)  \right) 
 \\
 & =\left\Vert \left(G_{\rho_{2}}^{\dagger}-G_{\rho_{1}}^{\dagger}\right)G_{\rho_{1}}\alpha\right\Vert _{\infty}+\left\Vert G_{\rho_{2}}^{\dagger}\widehat{n}\right\Vert _{\infty}\\
 & \le\left\Vert \left(G_{\rho_{2}}^{\dagger}-G_{\rho_{1}}^{\dagger}\right)G_{\rho_{1}}\right\Vert _{\infty\rightarrow\infty}\|\alpha\|_{\infty}+\left\Vert G_{\rho_{2}}^{\dagger}\widehat{n}\right\Vert _{2}\\
 & \le\sqrt{K}\left\Vert \left(G_{\rho_{2}}^{\dagger}-G_{\rho_{1}}^{\dagger}\right)G_{\rho_{1}}\right\Vert \cdot\|\alpha\|_{\infty}+\left\Vert G_{\rho_{2}}^{\dagger}\widehat{n}\right\Vert _{2}\qquad\left(\|A\|_{\infty\rightarrow\infty}\le\sqrt{K}\cdot\|A\|,\quad A\in\mathbb{C}^{N\times K}\right)\\
 & \le\sqrt{K}\left\Vert G_{\rho_{2}}^{\dagger}-G_{\rho_{1}}^{\dagger}\right\Vert \left\Vert G_{\rho_{1}}\right\Vert \cdot\|\alpha\|_{\infty}+\left\Vert G_{\rho_{1}}^{\dagger}\right\Vert \left\Vert \widehat{n}\right\Vert _{2}\\
  & =O(1)\left(\sqrt{K}\left\Vert I_{K}-M_{\rho_{1},\rho_{2}}\right\Vert ^{\frac{1}{2}}+e^{-Cc}\right)\cdot\|\alpha\|_{\infty}\\
& \qquad \qquad  +O(1)\left\Vert \widehat{n}\right\Vert _{2}.\qquad \mbox{(Lemma \ref{lem:props of Psi} and }c=\Theta(\log N)\mbox{)}.
\end{align*}
%and, consequently, 
%\begin{align*}
% \left\Vert \beta_{\rho_{2}}-\alpha\right\Vert _{\infty}
%% & \le\left\Vert G_{\rho_{2}}^{\dagger}G_{\rho_{1}}\alpha-\alpha\right\Vert _{\infty}+\left\Vert G_{\rho_{2}}^{\dagger}\widehat{n}\right\Vert _{\infty}
%%\qquad \left( \beta_{\rho_2}= G_{\rho_2}^{\dagger}\left(  
%%G_{\rho_1}\alpha+\widehat{n}
%%\right)  \right) 
%% \\
%% & =\left\Vert \left(G_{\rho_{2}}^{\dagger}-G_{\rho_{1}}^{\dagger}\right)G_{\rho_{1}}\alpha\right\Vert _{\infty}+\left\Vert G_{\rho_{2}}^{\dagger}\widehat{n}\right\Vert _{\infty}\\
%% & \le\left\Vert \left(G_{\rho_{2}}^{\dagger}-G_{\rho_{1}}^{\dagger}\right)G_{\rho_{1}}\right\Vert _{\infty\rightarrow\infty}\|\alpha\|_{\infty}+\left\Vert G_{\rho_{2}}^{\dagger}\widehat{n}\right\Vert _{2}\\
%% & \le\sqrt{K}\left\Vert \left(G_{\rho_{2}}^{\dagger}-G_{\rho_{1}}^{\dagger}\right)G_{\rho_{1}}\right\Vert \cdot\|\alpha\|_{\infty}+\left\Vert G_{\rho_{2}}^{\dagger}\widehat{n}\right\Vert _{2}\qquad\left(\|A\|_{\infty\rightarrow\infty}\le\sqrt{K}\cdot\|A\|,\quad A\in\mathbb{C}^{N\times K}\right)\\
%% & \le\sqrt{K}\left\Vert G_{\rho_{2}}^{\dagger}-G_{\rho_{1}}^{\dagger}\right\Vert \left\Vert G_{\rho_{1}}\right\Vert \cdot\|\alpha\|_{\infty}+\left\Vert G_{\rho_{1}}^{\dagger}\right\Vert \left\Vert \widehat{n}\right\Vert _{2}\\
% & =O(1)\left(\sqrt{K}\left\Vert I_{K}-M_{\rho_{1},\rho_{2}}\right\Vert ^{\frac{1}{2}}+e^{-Cc}\right)\cdot\|\alpha\|_{\infty}\\
%& \qquad \qquad  +O(1)\left\Vert \widehat{n}\right\Vert _{2}.\qquad \mbox{(Lemma \ref{lem:props of Psi} and }c=\Theta(\log N)\mbox{)}
%\end{align*}
The last line above requires the lower bound in $c=\Theta(\log N)$ to be sufficiently large. 
\begin{comment}
\textbf{Note: we might be able to improve this bound to $\|\widehat{n}\|_{\infty}$
which is not very meaningful and should be bounded by $\|\widehat{n}\|_{2}$
anyway.}
\end{comment}
Similarly,
\begin{align*}
& \left\Vert \beta_{\rho_{2}}-\beta_{\rho_{1}}\right\Vert _{\infty}\\
 & \le\left\Vert \left(G_{\rho_{2}}^{\dagger}-G_{\rho_{1}}^{\dagger}\right)G_{\rho_{1}}\alpha\right\Vert _{\infty}+\left\Vert \left(G_{\rho_{2}}^{\dagger}-G_{\rho_{1}}^{\dagger}\right)\widehat{n}\right\Vert _{\infty}
\qquad \left(
 \beta_{\rho_1}= G_{\rho_1}^{\dagger}\left(  
G_{\rho_1}\alpha+\widehat{n} \right) 
\right) 
 \\
 & \le\left\Vert \left(G_{\rho_{2}}^{\dagger}-G_{\rho_{1}}^{\dagger}\right)G_{\rho_{1}}\right\Vert _{\infty\rightarrow\infty}\|\alpha\|_{\infty}+\left\Vert G_{\rho_{2}}^{\dagger}-G_{\rho_{1}}^{\dagger}\right\Vert \cdot\left\Vert \widehat{n}\right\Vert _{2}\qquad\left(\|a\|_{\infty}\le\|a\|_{2},\qquad\forall a\in\mathbb{C}^{N}\right)\\
 & \le\sqrt{K}\left\Vert \left(G_{\rho_{2}}^{\dagger}-G_{\rho_{1}}^{\dagger}\right)G_{\rho_{1}}\right\Vert \cdot\|\alpha\|_{\infty}+\left\Vert G_{\rho_{2}}^{\dagger}-G_{\rho_{1}}^{\dagger}\right\Vert \cdot\left\Vert \widehat{n}\right\Vert _{2}\\
 & \le\sqrt{K}\left\Vert G_{\rho_{2}}^{\dagger}-G_{\rho_{1}}^{\dagger}\right\Vert \left\Vert G_{\rho_{1}}\right\Vert \cdot\|\alpha\|_{\infty}+\left\Vert G_{\rho_{2}}^{\dagger}-G_{\rho_{1}}^{\dagger}\right\Vert \cdot\left\Vert \widehat{n}\right\Vert _{2}\\
 & =O(1)\cdot\left\Vert G_{\rho_{2}}^{\dagger}-G_{\rho_{1}}^{\dagger}\right\Vert \cdot\left(\sqrt{K}\|\alpha\|_{\infty}+\left\Vert \widehat{n}\right\Vert _{2}\right)\qquad\mbox{(Lemma \ref{lem:props of Psi})}\\
 & =O(1)\cdot\left(\left\Vert I_{K}-M_{\rho_{1},\rho_{2}}\right\Vert ^{\frac{1}{2}}+e^{-Cc}\right)\cdot\left(\sqrt{K}\|\alpha\|_{\infty}+\left\Vert \widehat{n}\right\Vert _{2}\right).\qquad\mbox{(Lemma \ref{lem:props of Psi})}.
\end{align*}
This completes the proof of Lemma \ref{lem:ls dist to true}.
\end{proof}

\section{Hessian of $F(\cdot)$ is Positive Definite on $\mathbb{B}(\tau^{0},\sigma_{1})$ \label{sec:Hessian is PD section}}

Throughout, assume that Proposition \ref{thm:algorithm 1} is in force, so that $\tau^0\in\I^{\Kt}$ satisfies both $\Kt=K$ and $d(\tau^0,\tau)\le \sigma_1$. In this section, with $F(\cdot)$ defined as in \eqref{eq:main program}, we will establish that $\frac{\partial^{2}F}{\partial\rho^{2}}(\cdot)$
is asymptotically positive definite in the small neighborhood of $\tau$, namely $\mathbb{B}(\tau^0,\sigma_1)$ (see \eqref{eq:ball}). 
(This will prove necessary for the projected Newton algorithm to converge
to a local minimizer of $F(\cdot)$.)  To do so, we first show that $\frac{\partial^{2}F}{\partial\rho^{2}}(\tau)\succ0$
asymptotically, and next control the variation of the Hessian under
small changes of its argument.

We assume that the entries of $\tau\in\I^{K}$ are distinct. Then, for 
$\rho\in\mathbb{B}\left(\tau^{0},\sigma_{1}\right)$, the entries
of $\rho$ too are distinct asymptotically (i.e., for large enough
$N$). Moreover, $\rho[i]\ne\tau[j]$ for $i\ne j$ (asymptotically). Therefore, we
are in position  to apply the technical lemmas in the Toolbox (Appendix \ref{sec:Toolbox}).

\subsection{Establishing $\frac{\partial^{2}F}{\partial\rho^{2}}(\tau)\succ0$ }

From (\ref{eq:def of z hat}), recall that $\widehat{z}_{\sigma_{2}}=G_{\tau}\alpha+\widehat{n}_{\sigma_{2}}\in\mathbb{C}^{N}$ contains the Fourier coefficients of
the (possibly noisy) measurement signal. Recall also
 the orthogonal projection onto $\mbox{span}(G_{\tau})$,
namely $\mathcal{P}_{\tau}\in\mathbb{C}^{N\times N}$. Then, clearly,
\[
\left(I_{N}-\mathcal{P}_{\tau}\right)\widehat{z}_{\sigma_{2}}=\left(I_{N}-\mathcal{P}_{\tau}\right)G_{\tau}\alpha+\left(I_{N}-\mathcal{P}_{\tau}\right)\widehat{n}_{\sigma_{2}}=\left(I_{N}-\mathcal{P}_{\tau}\right)\widehat{n}_{\sigma_{2}}.
\]
With this in mind and using (\ref{eq:Hess exp}), we rewrite the expression
for Hessian at $\tau$ as
\begin{align*}
\frac{\partial^{2}F}{\partial\rho^{2}}(\tau) & =2\cdot\mbox{diag}\left(\beta_{\tau}\right)\cdot G_{\tau}^{*}LL^{*}G_{\tau}\cdot\mbox{diag}\left(\beta_{\tau}\right)\qquad\quad\left(L^{2}=-LL^{*},\quad\beta_{\tau}=\widetilde{\beta}(\tau)=G_{\tau}^{\dagger}\cdot\widehat{z}_{\sigma_{2}}\right)\\
 & \qquad-2\cdot\mbox{diag}\left(\beta_{\tau}\right)\cdot\mbox{diag}\left(G_{\tau}^{*}L^{2}\left(I_{N}-\mathcal{P}_{\tau}\right)\widehat{n}_{\sigma_{2}}\right)\\
 & \qquad-2\left[\mbox{diag}\left(\beta_{\tau}\right)G_{\tau}^{*}LG_{\tau}-\mbox{diag}\left(G_{\tau}^{*}L\left(I_{N}-\mathcal{P}_{\tau}\right)\widehat{n}_{\sigma_{2}}\right)\right]\\
 & \qquad\cdot\left(G_{\tau}^{*}G_{\tau}\right)^{-1}\cdot\left[G_{\tau}^{*}L^{*}G_{\tau}\cdot\mbox{diag}(\beta_{\tau})-\mbox{diag}\left(G_{\tau}^{*}L\left(I_{N}-\mathcal{P}_{\tau}\right)\widehat{n}_{\sigma_{2}}\right)\right].
\end{align*}
After rearranging the expression above, we find   that
\begin{align}
\frac{\partial^{2}F}{\partial\rho^{2}}(\tau)=\mbox{signal}_{\tau}+\mbox{noise}_{\tau},\label{eq:Hess psd argue}
\end{align}
\begin{equation}
\mbox{signal}_{\tau}:=2\cdot\mbox{diag}\left(\beta_{\tau}\right)\cdot G_{\tau}^{*}L\left(I_{N}-\mathcal{P}_{\tau}\right)L^{*}G_{\tau}\cdot\mbox{diag}\left(\beta_{\tau}\right),\label{eq:signal part}
\end{equation}
\begin{align}
\mbox{noise}_{\tau} & :=-2\cdot\mbox{diag}\left(\beta_{\tau}\right)\cdot\mbox{diag}\left(G_{\tau}^{*}L^{2}\left(I_{N}-\mathcal{P}_{\tau}\right)\widehat{n}_{\sigma_{2}}\right)\nonumber \\
 & \qquad+2\cdot\mbox{diag}\left(G_{\tau}^{*}L\left(I_{N}-\mathcal{P}_{\tau}\right)\widehat{n}_{\sigma_{2}}\right)\cdot G_{\tau}^{\dagger}L^{*}G_{\tau}\cdot\mbox{diag}\left(\beta_{\tau}\right)\nonumber \\
 & \qquad+2\cdot\mbox{diag}\left(\beta_{\tau}\right)\cdot\left(G_{\tau}^{\dagger}L^{*}G_{\tau}\right)^{*}\cdot\mbox{diag}\left(G_{\tau}^{*}L\left(I_{N}-\mathcal{P}_{\tau}\right)\widehat{n}_{\sigma_{2}}\right)\nonumber \\
 & \qquad-2\cdot\mbox{diag}\left(G_{\tau}^{*}L\left(I_{N}-\mathcal{P}_{\tau}\right)\widehat{n}_{\sigma_{2}}\right)\cdot\left(G_{\tau}^{*}G_{\tau}\right)^{-1}\cdot\mbox{diag}\left(G_{\tau}^{*}L\left(I_{N}-\mathcal{P}_{\tau}\right)\widehat{n}_{\sigma_{2}}\right).\label{eq:all noise terms}
\end{align}
As detailed presently, the signal term above is  ``strongly'' positive
definite because $L^{*}G_{\tau}$ %
\begin{comment}
LG is the derivatives. L{*}G is negated derivatives.
\end{comment}
(associated with the translated copies of $g'_{\sigma_{2},N}(\cdot)$,
the derivative of our kernel) is nearly orthogonal to $\mbox{span}(G_{\tau})$.
Therefore, as long as the noise term is negligible, we have $\frac{\partial^{2}F}{\partial\rho^{2}}(\tau) \succ 0$. Let us consider the details now.

We first control all four terms in $\mbox{noise}_{\tau}$. For the
first term in (\ref{eq:all noise terms}), it holds asymptotically
that
\begin{align*}
 & \left\Vert \mbox{diag}\left(\beta_{\tau}\right)\cdot\mbox{diag}\left(G_{\tau}^{*}L^{2}\left(I_{N}-\mathcal{P}_{\tau}\right)\widehat{n}_{\sigma_{2}}\right)\right\Vert \\
 & \le\left\Vert \beta_{\tau}\right\Vert _{\infty}\cdot\left\Vert G_{\tau}^{*}L^{2}\left(I_{N}-\mathcal{P}_{\tau}\right)\widehat{n}_{\sigma_{2}}\right\Vert _{2}
\qquad \left( \|a\|_\infty \le \|a\|_2,\, \forall a \right) 
 \\
 & \le\left\Vert \beta_{\tau}\right\Vert _{\infty}\cdot\left\Vert G_{\tau}\right\Vert \|L\|^{2}\left\Vert I_{N}-\mathcal{P}_{\tau}\right\Vert \cdot\left\Vert \widehat{n}_{\sigma_{2}}\right\Vert _{2}\\
 & =\left\Vert \beta_{\tau}\right\Vert _{\infty}\cdot O(N^{2})\cdot\left\Vert \widehat{n}_{\sigma_{2}}\right\Vert _{2}, & \mbox{(Lemma \ref{lem:props of Psi}, }\|L\|\le2\pi N\mbox{)}
\end{align*}
as $c,N\rightarrow\infty$, $c=\Theta(\log N)$ (with a large enough lower bound). Similarly, for the second term in \eqref{eq:all noise terms},  
it is true asymptotically that
\begin{align*}
 & \left\Vert \mbox{diag}\left(G_{\tau}^{*}L\left(I_{N}-\mathcal{P}_{\tau}\right)\widehat{n}_{\sigma_{2}}\right)\cdot G_{\tau}^{\dagger}L^{*}G_{\tau}\cdot\mbox{diag}\left(\beta_{\tau}\right)\right\Vert \\
 & \le\left\Vert G_{\tau}^{*}L\left(I_{N}-\mathcal{P}_{\tau}\right)\widehat{n}_{\sigma_{2}}\right\Vert _{2}\cdot\left\Vert G_{\tau}^{\dagger}\right\Vert \|L\|\left\Vert G_{\tau}\right\Vert \cdot\left\Vert \beta_{\tau}\right\Vert _{\infty}
\qquad \left( \|a\|_\infty\le \|a\|_2,\,\forall a \right) 
 \\
 & \le\left\Vert G_{\tau}^{*}L\left(I_{N}-\mathcal{P}_{\tau}\right)\widehat{n}_{\sigma_{2}}\right\Vert _{2}\cdot O(N)\cdot\left\Vert \beta_{\tau}\right\Vert _{\infty}\qquad\mbox{(Lemma \ref{lem:props of Psi}, }\|L\|\le2\pi N\mbox{)}\\
 & \le\|G_{\tau}\|\|L\|\left\Vert I_{N}-\mathcal{P}_{\tau}\right\Vert \cdot\left\Vert \widehat{n}_{\sigma_{2}}\right\Vert _{2}\cdot O(N)\cdot\left\Vert \beta_{\tau}\right\Vert _{\infty}\\
 & \le O(N)\cdot\left\Vert \widehat{n}_{\sigma_{2}}\right\Vert _{2}\cdot O(N)\cdot\left\Vert \beta_{\tau}\right\Vert _{\infty}\qquad\mbox{(Lemma \ref{lem:props of Psi}, }\|L\|\le2\pi N\mbox{)}\\
 & =O(N^{2})\cdot\left\Vert \beta_{\tau}\right\Vert _{\infty}\left\Vert \widehat{n}_{\sigma_{2}}\right\Vert _{2}.
\end{align*}
An identical bound holds the third noise term. As for the last term
in (\ref{eq:all noise terms}), we asymptotically have that
\[
\left\Vert \mbox{diag}\left(G_{\tau}^{*}L\left(I_{N}-\mathcal{P}_{\tau}\right)\widehat{n}_{\sigma_{2}}\right)\left(G_{\tau}^{*}G_{\tau}\right)^{-1}\mbox{diag}\left(G_{\tau}^{*}L\left(I_{N}-\mathcal{P}_{\tau}\right)\widehat{n}_{\sigma_{2}}\right)\right\Vert =O(N^{2})\cdot\left\Vert \widehat{n}_{\sigma_{2}}\right\Vert _{2}^{2},
\]
where we invoked Lemma \ref{lem:props of Psi} again. Overall, using
the triangle inequality, we obtain that
\begin{equation}
\left\Vert \mbox{noise}_{\tau}\right\Vert =O(N^{2})\cdot\left(\left\Vert \beta_{\tau}\right\Vert _{\infty}\left\Vert \widehat{n}_{\sigma_{2}}\right\Vert _{2}+\left\Vert \widehat{n}_{\sigma_{2}}\right\Vert _{2}^{2}\right).
\qquad \mbox{(see \eqref{eq:all noise terms})}
\label{eq:noise term bnd pre}
\end{equation}
To eliminate $\beta_{\tau}$ from the expression above, we apply Lemma
\ref{lem:ls dist to true} (with $\rho_{1}=\rho_{2}=\tau$) to obtain
that
\begin{align}
\left\Vert \beta_{\tau}\right\Vert _{\infty} & \le\|\alpha\|_{\infty}+\left\Vert \beta_{\tau}-\alpha\right\Vert _{\infty}\nonumber \\
 & \le2 \|\alpha\|_{\infty}+O(1)\left\Vert \widehat{n}_{\sigma_{2}}\right\Vert _{2}.\qquad\left(M_{\tau,\tau}=I_{K}\right)\label{eq:some bounds on beta}
\end{align}
Therefore,
\begin{align*}
\left\Vert \mbox{noise}_{\tau}\right\Vert  & =O(N^{2})\cdot\left(\left\Vert \beta_{\tau}\right\Vert _{\infty}\left\Vert \widehat{n}_{\sigma_{2}}\right\Vert _{2}+\left\Vert \widehat{n}_{\sigma_{2}}\right\Vert _{2}^{2}\right),
\qquad \mbox{(see \eqref{eq:noise term bnd pre})}\\
 & =O(N^{2})\left(\left(\|\alpha\|_{\infty}+\left\Vert \widehat{n}_{\sigma_{2}}\right\Vert _{2}\right)\left\Vert \widehat{n}_{\sigma_{2}}\right\Vert _{2}+\left\Vert \widehat{n}_{\sigma_{2}}\right\Vert _{2}^{2}\right)
\qquad \mbox{(see \eqref{eq:some bounds on beta})} 
 \\
 & =O(N^{2})\left(\|\alpha\|_{\infty}\left\Vert \widehat{n}_{\sigma_{2}}\right\Vert _{2}+\left\Vert \widehat{n}_{\sigma_{2}}\right\Vert _{2}^{2}\right),
\end{align*}
%and, consequently,
%\begin{align*}
%\left\Vert \mbox{noise}_{\tau}\right\Vert   
%%& =O(N^{2})\cdot\left(\left\Vert \beta_{\tau}\right\Vert _{\infty}\left\Vert \widehat{n}_{\sigma_{2}}\right\Vert _{2}+\left\Vert \widehat{n}_{\sigma_{2}}\right\Vert _{2}^{2}\right)
%%\qquad \mbox{(see \eqref{eq:noise term bnd pre})}
% & =O(N^{2})\left(\left(\|\alpha\|_{\infty}+\left\Vert \widehat{n}_{\sigma_{2}}\right\Vert _{2}\right)\left\Vert \widehat{n}_{\sigma_{2}}\right\Vert _{2}+\left\Vert \widehat{n}_{\sigma_{2}}\right\Vert _{2}^{2}\right)
%\qquad \mbox{(see \eqref{eq:some bounds on beta})} 
% \\
% & =O(N^{2})\left(\|\alpha\|_{\infty}\left\Vert \widehat{n}_{\sigma_{2}}\right\Vert _{2}+\left\Vert \widehat{n}_{\sigma_{2}}\right\Vert _{2}^{2}\right),
%\end{align*}
and, therefore,
\begin{equation}
\mbox{noise}_{\tau}\preccurlyeq O(N^{2})\left(\left\Vert \alpha\right\Vert _{\infty}\left\Vert \widehat{n}_{\sigma_{2}}\right\Vert _{2}+\left\Vert \widehat{n}_{\sigma_{2}}\right\Vert _{2}^{2}\right)\cdot I_{K},\label{eq:noise bound}
\end{equation}
both valid asymptotically. Next, we establish that the signal term
in (\ref{eq:Hess psd argue}) is a positive definite matrix. For arbitrary
$v\in\mathbb{R}^{K}$, it holds asymptotically that
\begin{align*}
& v^{*}\left(G_{\tau}^{*}L\left(I_{N}-\mathcal{P}_{\tau}\right)L^{*}G_{\tau}\right)v \\
& =v^{*}\left(G_{\tau}^{*}LL^{*}G_{\tau}\right)v-v^{*}\left(G_{\tau}^{*}L\mathcal{P}_{\tau}L^{*}G_{\tau}\right)v\\
 & =\left\Vert L^{*}G_{\tau}v\right\Vert _{2}^{2}-\left\Vert \mathcal{P}_{\tau}L^{*}G_{\tau}v\right\Vert _{2}^{2}\qquad\qquad(\mathcal{P}_{\tau}^{2}=\mathcal{P}_{\tau})\\
 & =\left\Vert LG_{\tau}v\right\Vert _{2}^{2}-\left\Vert \left(G_{\tau}^{\dagger}\right)^{*}G_{\tau}^{*}L^{*}G_{\tau}v\right\Vert _{2}^{2}\qquad\left(L^{*}=-L,\quad\mathcal{P}_{\tau}=\left(G_{\tau}^{\dagger}\right)^{*}G_{\tau}^{*}\right)\\
 & \ge\left\Vert LG_{\tau}v\right\Vert _{2}^{2}-\left\Vert G_{\tau}^{\dagger}\right\Vert ^{2}\cdot\left\Vert G_{\tau}^{*}L^{*}G_{\tau}v\right\Vert _{2}^{2}\\
 & =\left\Vert LG_{\tau}v\right\Vert _{2}^{2}-O(1)\cdot\left\Vert G_{\tau}^{*}L^{*}G_{\tau}\right\Vert^{2}\cdot \|v\|_2^2\qquad\mbox{(see \eqref{eq:propes of Psi 3.1})}\\
 & \ge\left(\left\Vert g'_{\sigma,N}(\cdot)\right\Vert _{2}^{2}-O(e^{-Cc_2})\right)\cdot\|v\|_{2}^{2}-O(1)\cdot O(e^{-Cc_2})\cdot\|v\|_{2}^{2}\qquad\mbox{(Lemmas \ref{lem:props of Psi} and \ref{lem: alpha and Psi alpha})}\\
 & =\Omega(N^{2})\cdot\|v\|_{2}^{2},\qquad\mbox{(Criterion \ref{fact: slow decay in bound-1})}
\end{align*}
as $c_2,N\rightarrow\infty$  and $c_2=\Theta(\log N$) (with large enough lower bound). Since
the choice of $v$ was arbitrary, we conclude that
\[
G_{\tau}^{*}L\left(I_{N}-\mathcal{P}_{\tau}\right)L^{*}G_{\tau}\succcurlyeq\Omega(N^{2})\cdot I_{K},
\]
asymptotically. From (\ref{eq:signal part}), it follows that
\begin{equation}
\mbox{signal}_{\tau}=2\cdot\mbox{diag}\left(\beta_{\tau}\right)\cdot G_{\tau}^{*}L\left(I_{N}-\mathcal{P}_{\tau}\right)L^{*}G_{\tau}\cdot\mbox{diag}\left(\beta_{\tau}\right)\succcurlyeq\Omega(N^{2})\cdot\mbox{diag}\left(\beta_{\tau}\right)^{2}.\label{eq:mid step 110}
\end{equation}
We remove $\beta_{\tau}$ from the right hand side above by invoking
Lemma \ref{lem:ls dist to true}: Note that 
\begin{align*}
\mbox{diag}\left(\beta_{\tau}\right) & =\mbox{diag}\left(\alpha\right)-\mbox{diag}\left(\alpha-\beta_{\tau}\right)\\
 & \succcurlyeq\mbox{diag}\left(\alpha\right)-\left\Vert \alpha-\beta_{\tau}\right\Vert _{\infty}\cdot I_{K},\\
 & \succcurlyeq
 \frac{1}{2}\mbox{diag}\left(\alpha\right)-O(1)\cdot\left\Vert \widehat{n}_{\sigma_{2}}\right\Vert _{2}\cdot I_{K},
\end{align*}
%and, consequently, 
%\begin{align*}
%\mbox{diag}\left(\beta_{\tau}\right) 
%%& =\mbox{diag}\left(\alpha\right)-\mbox{diag}\left(\alpha-\beta_{\tau}\right)\\
% & \succcurlyeq\mbox{diag}\left(\alpha\right)-\left\Vert \alpha-\beta_{\tau}\right\Vert _{\infty}\cdot I_{K},\\
% & \succcurlyeq
% \frac{1}{2}\mbox{diag}\left(\alpha\right)-O(1)\cdot\left\Vert \widehat{n}_{\sigma_{2}}\right\Vert _{2}\cdot I_{K},
% %\qquad\mbox{(see (\ref{eq:some bounds on beta}))}
%\end{align*}
\begin{align*}
\mbox{diag}\left(\beta_{\tau}\right)^{2} & \succcurlyeq\frac{1}{8}\mbox{diag}\left(\alpha\right)^{2}-O(1)\cdot\left\Vert \widehat{n}_{\sigma_{2}}\right\Vert _{2}^{2}\cdot I_{K},\qquad\left(\left(a-b\right)^{2}\ge\frac{a^{2}}{2}-b^{2},\quad\forall a,b\in\mathbb{R}\right),
\end{align*}
asymptotically. Therefore, revisiting (\ref{eq:mid step 110}), we can
write that
\begin{align}
\mbox{signal}_{\tau} & \succcurlyeq\Omega(N^{2})\cdot\mbox{diag}\left(\beta_{\tau}\right)^{2}\nonumber \\
 & \succcurlyeq\Omega(N^{2})\cdot\left(\mbox{diag}\left(\alpha\right)^{2}-O(1)\cdot\left\Vert \widehat{n}_{\sigma_{2}}\right\Vert _{2}^{2}\cdot I_{K}\right).\label{eq:mid step 111}
\end{align}
Suppose that
\begin{equation}
\left\Vert \widehat{n}_{\sigma_{2}}\right\Vert _{2}\le
O(1)\cdot \|\alpha\|_{\infty},\label{eq:NSR const}
\end{equation}
with a small enough constant. 
%\note{Do we need $O(1)$ on the RHS so that this reference to the ``constant'' makes sense?} 
Then, combining (\ref{eq:mid step 111})
with (\ref{eq:noise bound}) yields
\begin{align}
 & \frac{\partial^{2}F}{\partial\rho^{2}}(\tau)\nonumber \\
 & =\mbox{signal}_{\tau}+\mbox{noise}_{\tau}\qquad\mbox{(see (\ref{eq:Hess psd argue}))}\nonumber \\
 & \succcurlyeq\Omega(N^{2})\cdot\left(\mbox{diag}\left(\alpha\right)^{2}-O(1)\cdot\left\Vert \widehat{n}_{\sigma_{2}}\right\Vert _{2}^{2}\cdot I_{K}\right)-O(N^{2})\cdot\left(\left\Vert \alpha\right\Vert _{\infty}\left\Vert \widehat{n}_{\sigma_{2}}\right\Vert _{2}+\left\Vert \widehat{n}_{\sigma_{2}}\right\Vert _{2}^{2}\right)\cdot I_{K}\nonumber \\
 & \succcurlyeq\Omega(N^{2})\cdot\mbox{diag}\left(\alpha\right)^{2}-O(N^{2})\cdot\left\Vert \alpha\right\Vert _{\infty}\left\Vert \widehat{n}_{\sigma_{2}}\right\Vert _{2}\cdot I_{K}\qquad\mbox{(see (\ref{eq:NSR const}))}\nonumber \\
 & =\Omega(N^{2})\cdot\|\alpha\|_{\infty}^{2}\cdot\frac{\mbox{diag}\left(\alpha\right)^{2}}{\|\alpha\|_{\infty}^{2}}-O(N^{2})\cdot\left\Vert \alpha\right\Vert _{\infty}\left\Vert \widehat{n}_{\sigma_{2}}\right\Vert _{2}\cdot I_{K}\nonumber \\
 & \succcurlyeq\Omega(N^{2})\cdot\|\alpha\|_{\infty}^{2}\cdot\frac{\min_{i}\left|\alpha[i]\right|^{2}}{\max_{i}\left|\alpha[i]\right|^{2}}\cdot I_{K}-O(N^{2})\cdot\left\Vert \alpha\right\Vert _{\infty}\left\Vert \widehat{n}_{\sigma_{2}}\right\Vert _{2}\cdot I_{K}\nonumber \\
 & =\Omega(N^{2})\cdot\|\alpha\|_{\infty}^{2}\cdot\left(\mbox{dyn}\left(x_{\tau,\alpha}\right)^{-2}-O(1)\cdot\frac{\left\Vert \widehat{n}_{\sigma_{2}}\right\Vert _{2}}{\|\alpha\|_{\infty}}\right)\cdot I_{K},
\qquad \mbox{(see \eqref{eq:dyn range def})}
 \label{eq:low bnd on Hessian at tau}
\end{align}
asymptotically. 
%\note{In the line above and the line below, should we replace ``$\mbox{dyn}\left(\alpha\right)$'' with ``$\mbox{dyn}(x_{\tau,\alpha})$''?} 
 Therefore, as long as
\[
\frac{\left\Vert \widehat{n}_{\sigma_{2}}\right\Vert _{2}}{\|\alpha\|_{\infty}}=O(1)\cdot\mbox{dyn}\left(x_{\tau,\alpha}\right)^{-2}\le1,
\]
with a small enough constant, $\frac{\partial^{2}F}{\partial\rho^{2}}(\tau)\succ0$ asymptotically (as we hoped  to establish).

\subsection{Establishing $\frac{\partial^{2}F}{\partial\rho^{2}}(\rho)\succ0$ When $\rho$
is Close to $\tau$}

It should be clear that, by continuity,
\[
\frac{\partial^{2}F}{\partial\rho^{2}}(\tau)\succ0\Longrightarrow\frac{\partial^{2}F}{\partial\rho^{2}}(\rho)\succ0,
\]
when $\rho\in\mathbb{I}^{K}$ is sufficiently close to $\tau$. In
this section, we precisely calculate the neighborhood of $\tau$ in $\mathbb{I}^{K}$
over which the Hessian of $F(\cdot)$ is positive definite.
To that end, for $\rho\in\mathbb{B}(\tau,2\sigma_{1})$, we write
that
\[
\frac{\partial^{2}F}{\partial\rho^{2}}(\rho)=\frac{\partial^{2}F}{\partial\rho^{2}}(\tau)+\left(\frac{\partial^{2}F}{\partial\rho^{2}}(\rho)-\frac{\partial^{2}F}{\partial\rho^{2}}(\tau)\right),
\]
%\note{do you need to exchange the order of the 2 rightmost terms above?} 
and, to control the variation, note that 
\begin{align}
\left\Vert \frac{\partial^{2}F}{\partial\rho^{2}}(\rho)-\frac{\partial^{2}F}{\partial\rho^{2}}(\tau)\right\Vert  & = \left\Vert \left(\mbox{signal}_{\rho}+\mbox{noise}_{\rho}\right)-\left(\mbox{signal}_{\tau}+\mbox{noise}_{\tau}\right)\right\Vert \qquad\mbox{(see (\ref{eq:Hess psd argue}))}\nonumber \\
 & \le\left\Vert \mbox{signal}_{\rho}-\mbox{signal}_{\tau}\right\Vert +\left\Vert \mbox{noise}_{\rho}\right\Vert +\left\Vert \mbox{noise}_{\tau}\right\Vert .\label{eq:varitation of Hessian}
\end{align}
We begin by comparing the signal terms of the Hessian at $\rho$ and
$\tau$ in the asymptotic regime $c,N\rightarrow\infty$ and  $c=\Theta(\log N)$ (with a large enough lower bound). Below, we repeatedly use the identity $AB-CD=(A-C)D+A(B-D)$
for conformal matrices $A,B,C,D$. After recalling (\ref{eq:signal part}),
we write that 
\begin{align*}
 & \left\Vert \mbox{signal}_{\rho}-\mbox{signal}_{\tau}\right\Vert \nonumber \\
 & =2\Bigg\Vert \overset{A}{\overbrace{\mbox{diag}\left(\beta_{\rho}\right)\cdot G_{\rho}^{*}L}}\cdot\overset{B}{\overbrace{\left(I_{N}-\mathcal{P}_{\rho}\right)L^{*}G_{\rho}\cdot\mbox{diag}\left(\beta_{\rho}\right)}}\nonumber\\
& \qquad\qquad  
 -\overset{C}{\overbrace{\mbox{diag}\left(\beta_{\tau}\right)\cdot G_{\tau}^{*}L}}\cdot\overset{D}{\overbrace{\left(I_{N}-\mathcal{P}_{\tau}\right)L^{*}G_{\tau}\cdot\mbox{diag}\left(\beta_{\tau}\right)}}\Bigg\Vert \nonumber \\
 & \le4\left\Vert L^{*}G_{\rho}\cdot\mbox{diag}\left(\beta_{\rho}\right)-L^{*}G_{\tau}\cdot\mbox{diag}\left(\beta_{\tau}\right)\right\Vert \cdot\max\left[\left\Vert L^{*}G_{\rho}\cdot\mbox{diag}\left(\beta_{\rho}\right)\right\Vert ,\left\Vert L^{*}G_{\tau}\cdot\mbox{diag}\left(\beta_{\tau}\right)\right\Vert \right]\nonumber \\
 & \le16\pi^2N^2 \left\Vert \overset{A}{\overbrace{G_{\rho}}}\cdot\overset{B}{\overbrace{\mbox{diag}\left(\beta_{\rho}\right)}}-\overset{C}{\overbrace{G_{\tau}}}\cdot\overset{D}{\overbrace{\mbox{diag}\left(\beta_{\tau}\right)}}\right\Vert \nonumber\\
& \qquad\qquad   \cdot \max\left[\left\Vert G_{\rho}\right\Vert \left\Vert \beta_{\rho}\right\Vert _{\infty},\left\Vert G_{\tau}\right\Vert \left\Vert \beta_{\tau}\right\Vert _{\infty}\right],
\quad\left(\|L\|\le2\pi N\right)
\end{align*}
and, consequently, 
\begin{align}
 & \left\Vert \mbox{signal}_{\rho}-\mbox{signal}_{\tau}\right\Vert \nonumber \\
% & =2\Bigg\Vert \overset{A}{\overbrace{\mbox{diag}\left(\beta_{\rho}\right)\cdot G_{\rho}^{*}L}}\cdot\overset{B}{\overbrace{\left(I_{N}-\mathcal{P}_{\rho}\right)L^{*}G_{\rho}\cdot\mbox{diag}\left(\beta_{\rho}\right)}}\nonumber\\
%& \qquad\qquad  
% -\overset{C}{\overbrace{\mbox{diag}\left(\beta_{\tau}\right)\cdot G_{\tau}^{*}L}}\cdot\overset{D}{\overbrace{\left(I_{N}-\mathcal{P}_{\tau}\right)L^{*}G_{\tau}\cdot\mbox{diag}\left(\beta_{\tau}\right)}}\Bigg\Vert \nonumber \\
% & \le4\left\Vert L^{*}G_{\rho}\cdot\mbox{diag}\left(\beta_{\rho}\right)-L^{*}G_{\tau}\cdot\mbox{diag}\left(\beta_{\tau}\right)\right\Vert \cdot\max\left[\left\Vert L^{*}G_{\rho}\cdot\mbox{diag}\left(\beta_{\rho}\right)\right\Vert ,\left\Vert L^{*}G_{\tau}\cdot\mbox{diag}\left(\beta_{\tau}\right)\right\Vert \right]\nonumber \\
% & \le16\pi^2N^2 \left\Vert \overset{A}{\overbrace{G_{\rho}}}\cdot\overset{B}{\overbrace{\mbox{diag}\left(\beta_{\rho}\right)}}-\overset{C}{\overbrace{G_{\tau}}}\cdot\overset{D}{\overbrace{\mbox{diag}\left(\beta_{\tau}\right)}}\right\Vert \nonumber\\
%& \qquad\qquad   \cdot \max\left[\left\Vert G_{\rho}\right\Vert \left\Vert \beta_{\rho}\right\Vert _{\infty},\left\Vert G_{\tau}\right\Vert \left\Vert \beta_{\tau}\right\Vert _{\infty}\right]\quad\left(\|L\|\le2\pi N\right)\nonumber \\
 & \le16\pi^2N^2\left(\left\Vert G_{\rho}-G_{\tau}\right\Vert \left\Vert \beta_{\tau}\right\Vert _{\infty}+\left\Vert G_{\rho}\right\Vert \left\Vert \beta_{\rho}-\beta_{\tau}\right\Vert _{\infty}\right)\cdot \max\left[\left\Vert G_{\rho}\right\Vert \left\Vert \beta_{\rho}\right\Vert _{\infty},\left\Vert G_{\tau}\right\Vert \left\Vert \beta_{\tau}\right\Vert _{\infty}\right]\nonumber \\
 & = O(N^{2})\left(\left(\left\Vert I_{K}-M_{\rho,\tau}\right\Vert ^{\frac{1}{2}}+e^{-Cc}\right)\left\Vert \beta_{\tau}\right\Vert _{\infty}+\left\Vert \beta_{\rho}-\beta_{\tau}\right\Vert _{\infty}\right)\nonumber\\
& \qquad\qquad  \cdot\max\left[\left\Vert \beta_{\rho}\right\Vert _{\infty},\left\Vert \beta_{\tau}\right\Vert _{\infty}\right].\qquad\mbox{(Lemma \ref{lem:props of Psi})}\label{eq:mid step 100}
\end{align}
We further simplify the last line above as follows. 
Since $\rho\in\mathbb{B}(\tau,2\sigma_{1})$ and $2\sigma_{1}\le h(\sigma_{2},N)\le\sigma_{2}$
(all by hypothesis), Criterion \ref{fact: slow decay in bound-1}
is in force and, asymptotically, we may write that 
\begin{align}
& \left\Vert I_{K}-M_{\rho,\tau}\right\Vert  \nonumber\\
& =\max_{i\in [1:K]}\left|1-\left\langle g_{\sigma_{2},N}(t\ominus\rho[i]),g_{\sigma_{2},N}(t\ominus\tau[i])\right\rangle \right|
\qquad \mbox{(see \eqref{eq:def of M12})}
\nonumber \\
 & =O(1)\cdot\max_{i\in [1:K]}\, d\left(\rho[i],\tau[i]\right)\qquad\mbox{(Criterion \ref{fact: slow decay in bound-1})}\nonumber \\
 & = O(1)\cdot d(\rho,\tau).\qquad\mbox{(see \eqref{eq:Hausdoff dist})}\label{eq:mid step 112}
\end{align}
Moreover, to remove the terms involving $\beta_{\tau}$ and $\beta_{\rho}$
in (\ref{eq:mid step 100}), we invoke Lemma \ref{lem:ls dist to true}
to write the following asymptotic estimates:
\begin{align}
\left\Vert \beta_{\rho}-\beta_{\tau}\right\Vert _{\infty} & =O(1)\cdot\left(\left\Vert I_{K}-M_{\rho,\tau}\right\Vert ^{\frac{1}{2}}+e^{-Cc_2}\right)\cdot \left(\sqrt{K}\|\alpha\|_{\infty}+\left\Vert \widehat{n}_{\sigma_{2}}\right\Vert _{2}\right)\nonumber\\
 & =O(1)\cdot\left(d(\rho,\tau)^{\frac{1}{2}}+e^{-Cc_2}\right)
 \cdot \left(\sqrt{K}\|\alpha\|_{\infty}+\left\Vert \widehat{n}_{\sigma_{2}}\right\Vert _{2}\right),
 \qquad \mbox{(see \eqref{eq:mid step 112})}
 \label{eq:some bnds on beta real pre}
\end{align}
%\begin{align*}
%\left\Vert \beta_{\tau}\right\Vert _{\infty} & \le\|\alpha\|_{\infty}+\left\Vert \beta_{\tau}-\alpha\right\Vert _{\infty}\\
% & \le\|\alpha\|_{\infty}+O(1)\cdot\left\Vert \widehat{n}_{\sigma_{2}}\right\Vert _{2},
%\end{align*}
%\note{do we need $O(1)$ above, and also in second line below?}
\begin{align}
\left\Vert \beta_{\rho}\right\Vert _{\infty} & \le\|\alpha\|_{\infty}+\left\Vert \beta_{\rho}-\alpha\right\Vert _{\infty}\nonumber \\
 & =\|\alpha\|_{\infty}+O(1)\cdot\left(\sqrt{K}\left\Vert I_{K}-M_{\rho,\tau}\right\Vert ^{\frac{1}{2}}+e^{-Cc_2}\right)\cdot\|\alpha\|_{\infty}+O(1)\left\Vert \widehat{n}_{\sigma_{2}}\right\Vert _{2}\nonumber \\
 & =O(1)\cdot\left(\left(1+\sqrt{K}\cdot d(\rho,\tau)^{\frac{1}{2}}\right)\cdot\|\alpha\|_{\infty}+\left\Vert \widehat{n}_{\sigma_{2}}\right\Vert _{2}\right).\label{eq:some bnds on beta real}
\end{align}
Using the estimates above, we revisit (\ref{eq:mid step 100}):  
%\begin{align*}
%& \left\Vert \mbox{signal}_{\rho}-\mbox{signal}_{\tau}\right\Vert \nonumber \\
% & = O(N^{2})\left(\left(\left\Vert I_{K}-M_{\rho,\tau}\right\Vert ^{\frac{1}{2}}+e^{-Cc_2}\right)\left\Vert \beta_{\tau}\right\Vert _{\infty}+\left\Vert \beta_{\rho}-\beta_{\tau}\right\Vert _{\infty}\right)\cdot\max\left[\left\Vert \beta_{\rho}\right\Vert _{\infty},\left\Vert \beta_{\tau}\right\Vert _{\infty}\right]\nonumber \\
% & =O(N^{2})\left(\left(d(\rho,\tau)^{\frac{1}{2}}+e^{-Cc_2}\right)+\left(d(\rho,\tau)^{\frac{1}{2}}+e^{-Cc_2}\right)\right)\left(\sqrt{K}\|\alpha\|_{\infty}+\left\Vert \widehat{n}_{\sigma_{2}}\right\Vert _{2}\right)\nonumber \\
% & \qquad\cdot\left(\left(1+\sqrt{K}\cdot d(\rho,\tau)^{\frac{1}{2}}\right)\|\alpha\|_{\infty}+\left\Vert \widehat{n}_{\sigma_{2}}\right\Vert _{2}\right),
%\qquad \mbox{(see \eqref{eq:some bounds on beta}
%, (\ref{eq:mid step 112}-\ref{eq:some bnds on beta real})
%)} 
%\end{align*}
%and, consequently, 
\begin{align}
 & \left\Vert \mbox{signal}_{\rho}-\mbox{signal}_{\tau}\right\Vert \nonumber \\
 & = O(N^{2})\left(\left(\left\Vert I_{K}-M_{\rho,\tau}\right\Vert ^{\frac{1}{2}}+e^{-Cc_2}\right)\left\Vert \beta_{\tau}\right\Vert _{\infty}+\left\Vert \beta_{\rho}-\beta_{\tau}\right\Vert _{\infty}\right)\cdot\max\left[\left\Vert \beta_{\rho}\right\Vert _{\infty},\left\Vert \beta_{\tau}\right\Vert _{\infty}\right]\nonumber \\
 & =O(N^{2})\left(\left(d(\rho,\tau)^{\frac{1}{2}}+e^{-Cc_2}\right)+\left(d(\rho,\tau)^{\frac{1}{2}}+e^{-Cc_2}\right)\right)\left(\sqrt{K}\|\alpha\|_{\infty}+\left\Vert \widehat{n}_{\sigma_{2}}\right\Vert _{2}\right)\nonumber \\
 & \qquad\cdot\left(\left(1+\sqrt{K}\cdot d(\rho,\tau)^{\frac{1}{2}}\right)\|\alpha\|_{\infty}+\left\Vert \widehat{n}_{\sigma_{2}}\right\Vert _{2}\right)
\qquad \mbox{(see \eqref{eq:some bounds on beta}
, (\ref{eq:mid step 112}-\ref{eq:some bnds on beta real})
)} 
 \nonumber \\
 & =O(N^{2})\left(d(\rho,\tau)^{\frac{1}{2}}+e^{-Cc_2}\right)\left(\sqrt{K}\|\alpha\|_{\infty}+\left\Vert \widehat{n}_{\sigma_{2}}\right\Vert _{2}\right)^{2}
\,\left(\rho \in \mathbb{B}\left(\tau,\sigma_1\right) \Longrightarrow K\cdot d(\rho,\tau)=o(1) \right) 
 \nonumber \\
 & =O(N^{2})\cdot\left(d(\rho,\tau)^{\frac{1}{2}}+e^{-Cc_2}\right)\cdot\left(K\|\alpha\|_{\infty}^{2}+\left\Vert \widehat{n}_{\sigma_{2}}\right\Vert _{2}^{2}\right)\qquad\left((a+b)^{2}\le2a^{2}+2b^{2},\quad\forall a,b\in\mathbb{R}\right)\nonumber \\
 & =O(N^{2})\cdot\left(d(\rho,\tau)^{\frac{1}{2}}+e^{-Cc_2}\right)\cdot\left(K\|\alpha\|_{\infty}^{2}+\|\alpha\|_{\infty}\left\Vert \widehat{n}_{\sigma_{2}}\right\Vert _{2}\right).\qquad\left(\mbox{if }\left\Vert \widehat{n}_{\sigma_{2}}\right\Vert _{2}= O(1)\cdot \|\alpha\|_{\infty}\right)\label{eq:signal rho - signal tau}
\end{align}

It remains to control $\mbox{noise}_\tau$ and $\mbox{noise}_\rho$ in \eqref{eq:varitation of Hessian}. In the analysis that started in  \eqref{eq:all noise terms} and led to \eqref{eq:noise bound}, we earlier bounded $\|\mbox{noise}_\tau\|$. So we turn our attention to $\mbox{noise}_\rho$.  Note that 
\begin{align*}
\widehat{z}_{\sigma_2} & =  G_\tau\alpha+\widehat{n}_{\sigma_2} \qquad \mbox{(see \eqref{eq:def of z hat})}\\
& = G_\rho \alpha + \widehat{n}_{\sigma_2} +\left(  G_\tau-G_\rho \right)\alpha =: G_\rho\alpha+\widehat{n}'_{\sigma_2},
\end{align*}
\begin{align*}
\beta_\rho & = G_\rho^\dagger \widehat{z}_{\sigma_2}\\
& =  G_\rho^\dagger \left( G_\rho \alpha + \widehat{n}'_{\sigma_2} \right), 
\end{align*}
\begin{equation*}
\left(I_{N}-\mathcal{P}_{\rho}\right)\widehat{z}_{\sigma_{2}} 
  =\left(I_{N}-\mathcal{P}_{\rho}\right)\widehat{n}'_{\sigma_2},
\end{equation*}
so that we next apply  \eqref{eq:noise bound} but with $\rho$ and $\widehat{n}'_{\sigma_2}=\widehat{n}_{\sigma_2}+(G_\tau-G_\rho)\alpha$ (instead of $\tau$ and $\widehat{n}_{\sigma_2}$) to obtain that 
\begin{align*}
& \left\Vert \mbox{noise}_{\rho}\right\Vert \nonumber \\
 & =O(N^{2})\cdot\left[\left\Vert \beta_{\rho}\right\Vert _{\infty}\left\Vert \widehat{n}'_{\sigma_2}\right\Vert _{\infty}+\left\Vert \widehat{n}'_{\sigma_2}\right\Vert _{\infty}^{2}\right]\nonumber \\
 & =O(N^{2})\cdot\left\Vert \beta_{\rho}\right\Vert _{\infty}\left\Vert \widehat{n}'_{\sigma_2}\right\Vert _{\infty}\qquad\mbox{(if }\left\Vert \widehat{n}'_{\sigma_2}\right\Vert _{\infty}\le\left\Vert \beta_{\rho}\right\Vert _{\infty}\mbox{)}\nonumber \\
 & = O(N^2)\cdot \left\|\beta_\rho\right\|_\infty 
 \left\| \widehat{n}_{\sigma_{2}}+\left(G_{\tau}-G_{\rho}\right)\alpha \right\|_\infty\nonumber\\
 & \le O(N^{2})\cdot\left(\left(1+\sqrt{K}\cdot d(\rho,\tau)^{\frac{1}{2}}\right)\cdot\left\Vert \alpha\right\Vert _{\infty}+\left\Vert \widehat{n}_{\sigma_{2}}\right\Vert _{2}\right)\nonumber\\
& \qquad \qquad \cdot  \left(\left\Vert \widehat{n}_{\sigma_{2}}\right\Vert _{2}+\left\Vert G_{\tau}-G_{\rho}\right\Vert _{\infty\rightarrow\infty}\left\Vert \alpha\right\Vert _{\infty}\right),\quad \mbox{(see }(\ref{eq:some bnds on beta real}))
\end{align*}
and, consequently, 
\begin{align}
 & \left\Vert \mbox{noise}_{\rho}\right\Vert \nonumber \\
% & =O(N^{2})\cdot\left[\left\Vert \beta_{\rho}\right\Vert _{\infty}\left\Vert \widehat{n}'_{\sigma_2}\right\Vert _{\infty}+\left\Vert \widehat{n}'_{\sigma_2}\right\Vert _{\infty}^{2}\right]\nonumber \\
% & =O(N^{2})\cdot\left\Vert \beta_{\rho}\right\Vert _{\infty}\left\Vert \widehat{n}'_{\sigma_2}\right\Vert _{\infty}\qquad\mbox{(if }\left\Vert \widehat{n}'_{\sigma_2}\right\Vert _{\infty}\le\left\Vert \beta_{\rho}\right\Vert _{\infty}\mbox{)}\nonumber \\
% & = O(N^2)\cdot \left\|\beta_\rho\right\|_\infty 
% \left\| \widehat{n}_{\sigma_{2}}+\left(G_{\tau}-G_{\rho}\right)\alpha \right\|_\infty\nonumber\\
% & \le O(N^{2})\cdot\left(\left(1+\sqrt{K}\cdot d(\rho,\tau)^{\frac{1}{2}}\right)\cdot\left\Vert \alpha\right\Vert _{\infty}+\left\Vert \widehat{n}_{\sigma_{2}}\right\Vert _{2}\right)\nonumber\\
%& \qquad \qquad \cdot  \left(\left\Vert \widehat{n}_{\sigma_{2}}\right\Vert _{2}+\left\Vert G_{\tau}-G_{\rho}\right\Vert _{\infty\rightarrow\infty}\left\Vert \alpha\right\Vert _{\infty}\right)\quad \mbox{(see }(\ref{eq:some bnds on beta real}))\nonumber \\
& =O(N^{2})\cdot\left(\left(1+\sqrt{K}\cdot d(\rho,\tau)^{\frac{1}{2}}\right)\cdot\left\Vert \alpha\right\Vert _{\infty}+\left\Vert \widehat{n}_{\sigma_{2}}\right\Vert _{2}\right)\nonumber \\
& \qquad\cdot\left(\left\Vert \widehat{n}_{\sigma_{2}}\right\Vert _{2}+\left(d(\rho,\tau)^{\frac{1}{2}}+e^{-Cc_2}\right)\left\Vert \alpha\right\Vert _{\infty}\right)\hfill\mbox{(see \eqref{eq:diff between Gs} and (\ref{eq:mid step 112}))}\nonumber \\
 & =O(N^{2})\cdot\left(\left(\sqrt{K}\cdot d(\rho,\tau)^{\frac{1}{2}} + e^{-Cc_2} \right)\left\Vert \alpha\right\Vert _{\infty}+\left\Vert \widehat{n}_{\sigma_{2}}\right\Vert _{2}\right)^{2}
\quad \left(  d(\rho,\tau)\le \frac{1}{2},\,\, 
e^{-Cc_2} = o(1) \right) 
 \nonumber \\
 & =O(N^{2})\cdot\left(  \left(K\cdot d(\rho,\tau)+e^{-Cc_2}\right) \left\Vert \alpha\right\Vert _{\infty}^{2}+\left\Vert \widehat{n}_{\sigma_{2}}\right\Vert _{2}^{2}\right)\qquad\left((a+b)^{2}\le2a^{2}+2b^{2},\quad\forall a,b\in\mathbb{R}\right)\nonumber \\
 & =O(N^{2})\cdot\left(
\left( K\cdot d(\rho,\tau)+e^{-Cc_2}\right) \left\Vert \alpha\right\Vert _{\infty}^{2}+\|\alpha\|_{\infty}\left\Vert \widehat{n}_{\sigma_{2}}\right\Vert _{2}\right).\qquad\left(\mbox{if }\left\Vert \widehat{n}_{\sigma_{2}}\right\Vert _{2}\le\|\alpha\|_{\infty}\right)\label{eq:bound on noise rho}
\end{align}
%The lower bound in $c=\Theta(\log N)$ must be sufficiently large for the last line above to hold. 
Using Lemmas \ref{lem:props of Psi} and \ref{lem:ls dist to true}, 
it is not difficult to verify that both conditions imposed while deriving  \eqref{eq:bound on noise rho} hold if
\[
\left\Vert \widehat{n}_{\sigma_{2}}\right\Vert _{2}=O(1)\cdot\frac{\|\alpha\|_{\infty}}{1-\sqrt{K}\cdot d(\rho,\tau)^{\frac{1}{2}}},
\] 
with a small enough constant. Since $K\cdot d(\rho,\tau) = o(1)$ for any $\rho\in \mathbb{B}(\tau,2\sigma_1)$, the condition above is met asymptotically when $\|\widehat{n}_{\sigma_2}\|_2=O(1)\|\alpha\|_\infty$ (with a small enough constant). In light of (\ref{eq:varitation of Hessian}), we can
combine the estimates above to obtain that
\begin{align}
 & \left\Vert \frac{\partial^{2}F}{\partial\rho^{2}}(\rho)-\frac{\partial^{2}F}{\partial\rho^{2}}(\tau)\right\Vert \nonumber \\
 & \le\left\Vert \mbox{signal}_{\rho}-\mbox{signal}_{\tau}\right\Vert +\left\Vert \mbox{noise}_{\rho}\right\Vert +\left\Vert \mbox{noise}_{\tau}\right\Vert \nonumber \\
 & =O(N^{2})\cdot\|\alpha\|_{\infty}^{2}\cdot\left(d(\rho,\tau)^{\frac{1}{2}}+e^{-Cc_2}\right)\cdot\left(K+\frac{\left\Vert \widehat{n}_{\sigma_{2}}\right\Vert _{2}}{\|\alpha\|_{\infty}}\right)\nonumber\\
 & \,\,  
 +O(N^{2})\cdot\left\Vert \alpha\right\Vert _{\infty}^{2}\cdot\left( K\cdot d(\rho,\tau)+e^{-Cc_2}+\frac{\left\Vert \widehat{n}_{\sigma_{2}}\right\Vert _{2}}{\|\alpha\|_{\infty}}\right)
 \qquad \mbox{(see  (\ref{eq:noise bound}), (\ref{eq:signal rho - signal tau}),
and (\ref{eq:bound on noise rho}))}
 \nonumber \\
 & =O(N^{2})\cdot\|\alpha\|_{\infty}^{2}\nonumber\\
 & 
 \cdot\left[K\cdot d(\rho,\tau)^{\frac{1}{2}}+e^{-Cc_2}+\frac{\left\Vert \widehat{n}_{\sigma_{2}}\right\Vert _{2}}{\|\alpha\|_{\infty}}\right].
\quad \left(c=\Theta(\log N)\Rightarrow
Ke^{-Cc_2}=O\left(e^{-Cc_2} \right)
\right) 
 \label{eq:mid step 112-1}
\end{align}
The lower bound in $c=\Theta(\log N)$ must be sufficiently large for the last line above to hold. It immediately follows that 
\begin{align*}
  \frac{\partial^{2}F}{\partial\rho^{2}}(\rho)
 & \succcurlyeq\frac{\partial^{2}F}{\partial\rho^{2}}(\tau)-\left\Vert \frac{\partial^{2}F}{\partial\rho^{2}}(\rho)-\frac{\partial^{2}F}{\partial\rho^{2}}(\tau)\right\Vert \cdot I_{K},
\end{align*}
and, consequently, 
\begin{align*}
& \frac{\partial^{2}F}{\partial\rho^{2}}(\rho)\nonumber\\
% & \succcurlyeq\frac{\partial^{2}F}{\partial\rho^{2}}(\tau)-\left\Vert \frac{\partial^{2}F}{\partial\rho^{2}}(\rho)-\frac{\partial^{2}F}{\partial\rho^{2}}(\tau)\right\Vert \cdot I_{K}\\
 & \succcurlyeq\Omega(N^{2})\cdot\|\alpha\|_{\infty}^{2}\cdot\left[\mbox{dyn}\left(x_{\tau,\alpha}\right)^{-2}-O(1)\cdot\frac{\left\Vert \widehat{n}_{\sigma_{2}}\right\Vert _{2}}{\|\alpha\|_{\infty}}\right]\cdot I_{K}\\
 & \qquad-O(N^{2})\cdot\|\alpha\|_{\infty}^{2}\cdot\left[K\cdot d(\rho,\tau)^{\frac{1}{2}} +e^{-Cc_2}+\frac{\left\Vert \widehat{n}_{\sigma_{2}}\right\Vert _{2}}{\|\alpha\|_{\infty}}\right]\cdot I_{K}\qquad\mbox{(see (\ref{eq:low bnd on Hessian at tau}) and (\ref{eq:mid step 112-1}))}\\
 & =\Omega(N^{2})\cdot\|\alpha\|_{\infty}^{2}\cdot\left[\mbox{dyn}\left(x_{\tau,\alpha}\right)^{-2}-O(K)\cdot d(\rho,\tau)^{\frac{1}{2}}-O\left( e^{-Cc_2}\right)-O(1) \cdot\frac{\left\Vert \widehat{n}_{\sigma_{2}}\right\Vert _{2}}{\|\alpha\|_{\infty}}\right]\cdot I_{K},
\end{align*}
asymptotically. The component $O(K)\cdot d(\rho,\tau)^{\frac{1}{2}}+O(e^{-Cc_2})$ in the last line above is asymptotically negligible.  
%
%For the above bound to be meaningful, we must have
%\begin{equation}
%d(\rho,\tau)=O(1)\cdot K^{-2}\mbox{dyn}\left(\alpha\right)^{-2},\label{eq:mid step 113}
%\end{equation}
%with a small enough constant. Asymptotically (i.e., for large enough
%$N$), (\ref{eq:mid step 113}) is met for any $\rho\in\mathbb{B}\left(\tau^{0},\sigma_{1}\right)$.
Indeed, $d(\rho,\tau)=o(1)$ when $\rho\in\mathbb{B}(\tau^0,\sigma_1)$ (with $\sigma_1=\frac{c_1}{N}$),  and also $e^{-Cc_2}=o(1)$.  Therefore, as long as 
\begin{equation}
\frac{\left\Vert \widehat{n}_{\sigma_{2}}\right\Vert _{2}}{\|\alpha\|_{\infty}}
\le \frac{\left\Vert n(\cdot)\right\Vert _{L_{2}}}{\|\alpha\|_{\infty}}
=\frac{O(1)}{\mbox{dyn}\left(x_{\tau,\alpha}\right)^{2}},\label{eq:cnd on SNR for PD}
\end{equation}
and with a small enough constant, $\frac{\partial^{2}F}{\partial\rho^{2}}(\rho)\succ0$
holds asymptotically for every $\rho\in\mathbb{B}\left(\tau,2\sigma_{1}\right)$. In particular, because 
\[
\rho\in\mathbb{B}\left(\tau^{0},\sigma_{1}\right)\Longrightarrow d\left(\rho,\tau\right)\le d\left(\rho,\tau^{0}\right)+d\left(\tau^{0},\tau\right)\le2\sigma_1,
\]
we asymptotically have that $\frac{\partial^{2}F}{\partial\rho^{2}}(\rho)\succ0$
 for every $\rho\in\mathbb{B}\left(\tau^0,\sigma_{1}\right)$. 
The inequality in \eqref{eq:cnd on SNR for PD} is established next:

\begin{align}
\left\Vert \widehat{n}_{\sigma_{2}}\right\Vert _{2} & =\left\Vert \left(g_{\sigma_{2},N}\circledast n\right)(\cdot)\right\Vert _{L_{2}}\qquad\mbox{(Parseval's identity and (\ref{eq:filt meas}))}\nonumber \\
 & \le\left\Vert g_{\sigma_{2},N}(\cdot)\right\Vert _{L_{2}}\|n(\cdot)\|_{L_{2}}\nonumber \\
 & =\|n(\cdot)\|_{L_{2}}.\qquad
\mbox{(see Criterion \ref{lem:corr decay away})}
 \label{eq:filtered noise to noise}
\end{align}

\section{Proof of Lemma \protect\ref{lem:grad is away from zero}\label{sec:Proof-of-Lemma grad is away-1}}

Recall that the entries of $\tau\in\I^{K}$ are distinct. Because
$\rho\in\mathbb{B}\left(\tau^{0},\sigma_{1}\right)$ (by Proposition
\ref{thm:algorithm 1}), the entries of $\rho$ too are distinct asymptotically
(i.e., for large enough $N$). Moreover, $\rho[i]\ne\tau[j]$ for
$i\ne j$. Therefore, we are in position to apply the technical lemmas
in the Toolbox Section.

With $u\in\I^{K}$ to be set later, let $U=\mbox{diag}(u)\in\mathbb{R}^{K\times K}$
be the diagonal matrix formed by the vector $u$. Using the expression
for the gradient of $F(\cdot)$ from the accompanying document \cite{support}, 
we can write that
\begin{align*}
& \left\langle \frac{\partial F}{\partial\rho}(\rho),-u\right\rangle \nonumber \\
 & =\left\langle -\frac{\partial F}{\partial\rho}(\rho),u\right\rangle \nonumber \\
 & =\left\langle \mbox{diag}(\beta_{\rho})\cdot G_{\rho}^{*}L\left(\widehat{z}_{\sigma_{2}}-G_{\rho}\beta_{\rho}\right),u\right\rangle, 
 \qquad 
\left(\beta_{\rho}=\widetilde{\beta}(\rho)=G_{\rho}^{\dagger}\widehat{z}_{\sigma_{2}}\right) 
\end{align*}
and, consequently,
\begin{align}
 & \left\langle \frac{\partial F}{\partial\rho}(\rho),-u\right\rangle \nonumber \\
% & =\left\langle -\frac{\partial F}{\partial\rho}(\rho),u\right\rangle \nonumber \\
% & =\left\langle \mbox{diag}(\beta_{\rho})\cdot G_{\rho}^{*}L\left(\widehat{z}_{\sigma_{2}}-G_{\rho}\beta_{\rho}\right),u\right\rangle 
% \qquad 
%\left(\beta_{\rho}=\widetilde{\beta}(\rho)=G_{\rho}^{\dagger}\widehat{z}_{\sigma_{2}}\right) 
% \nonumber \\
 & =\left\langle \mbox{diag}(\beta_{\rho})\cdot G_{\rho}^{*}L\left(G_{\tau}\alpha+\widehat{n}_{\sigma_{2}}-G_{\rho}\beta_{\rho}\right),u\right\rangle
\qquad \mbox{(see \eqref{eq:def of z hat})}  \nonumber \\
 & =\left\langle LG_{\tau}\alpha+L\widehat{n}_{\sigma_{2}}-LG_{\rho}\beta_{\rho},G_{\rho}U\beta_{\rho}\right\rangle \nonumber \\
 & =\left\langle LG_{\tau}\alpha+L\widehat{n}_{\sigma_{2}}-LG_{\rho}G_{\rho}^{\dagger}\left(G_{\tau}\alpha+\widehat{n}_{\sigma_{2}}\right),G_{\rho}UG_{\rho}^{\dagger}\left(G_{\tau}\alpha+\widehat{n}_{\sigma_{2}}\right)\right\rangle \nonumber \\
 & =\left\langle L\left(I_{N}-\mathcal{P}_{\rho}\right)G_{\tau}\alpha+L\left(I_{N}-\mathcal{P}_{\rho}\right)\widehat{n}_{\sigma_{2}},\mathcal{P}_{\rho,U}G_{\tau}\alpha+\mathcal{P}_{\rho,U}\widehat{n}_{\sigma_{2}}\right\rangle \,\,\left(\mathcal{P}_{\rho,U}:=G_{\rho}UG_{\rho}^{\dagger},\quad\mathcal{P}_{\rho}=\mathcal{P}_{\rho,I}\right)\nonumber \\
 & =\underset{\mbox{signal terms}}{\underbrace{\left\langle LG_{\tau}\alpha,\mathcal{P}_{\rho,U}G_{\tau}\alpha\right\rangle -\left\langle L\mathcal{P}_{\rho}G_{\tau}\alpha,\mathcal{P}_{\rho,U}G_{\tau}\alpha\right\rangle }}\nonumber \\
 & +\underset{\mbox{noise terms}}{\underbrace{\left\langle L\left(I_{N}-\mathcal{P}_{\rho}\right)\widehat{n}_{\sigma_{2}},\mathcal{P}_{\rho,U}G_{\tau}\alpha\right\rangle +\left\langle L\left(I_{N}-\mathcal{P}_{\rho}\right)G_{\tau}\alpha,\mathcal{P}_{\rho,U}\widehat{n}_{\sigma_{2}}\right\rangle +\left\langle L\left(I_{N}-\mathcal{P}_{\rho}\right)\widehat{n}_{2},\mathcal{P}_{\rho,U}\widehat{n}_{\sigma_{2}}\right\rangle }}.\label{eq:1st step in bounding grad of F}
\end{align}
\begin{comment}
 where the inner product in the very first line is always real-valued.
\end{comment}
%In particular, $\mathcal{P}_{\rho}=\mathcal{P}_{\rho,I}=G_{\rho}G_{\rho}^{\dagger}\in\mathbb{R}^{N\times N}$
%is the orthogonal projection onto the column span of $G_{\rho}$.
%Lastly, we used the definition $\beta_{\rho}=\widetilde{\beta}(\rho)=G_{\rho}^{\dagger}\widehat{z}_{\sigma_{2}}=G_{\rho}^{\dagger}\left(G_{\tau}\alpha+\widehat{n}_{\sigma_{2}}\right)$.
In order to find a lower bound for the inner product $\langle-\frac{\partial F}{\partial\rho}(\rho),u\rangle$,
we will study each of the five terms in the last identity in (\ref{eq:1st step in bounding grad of F}).
The first term there can be approximated with a simpler quantity as
follows. Asymptotically, we have that
\begin{align}
 & \left|\left\langle LG_{\tau}\alpha,\mathcal{P}_{\rho,U}G_{\tau}\alpha\right\rangle -\left\langle M{}_{\rho,\tau}^{d}\alpha,M_{\rho,\tau}U\alpha\right\rangle \right|
\qquad \mbox{(see \eqref{eq:def of M12} and \eqref{eq:def of M12d})} 
 \nonumber \\
 & \le\left|\left\langle LG_{\tau}\alpha,\mathcal{P}_{\rho,U}G_{\tau}\alpha\right\rangle -\left\langle LG_{\tau}\alpha,G_{\rho}UM_{\rho,\tau}\alpha\right\rangle \right|+\left|\left\langle LG_{\tau}\alpha,G_{\rho}UM_{\rho,\tau}\alpha\right\rangle -\left\langle M{}_{\rho,\tau}^{d}\alpha,M_{\rho,\tau}U\alpha\right\rangle \right|\nonumber \\
 & \le\left\Vert LG_{\tau}\alpha\right\Vert _{2}\cdot\left\Vert \mathcal{P}_{\rho,U}G_{\tau}\alpha-G_{\rho}UM_{\rho,\tau}\alpha\right\Vert _{2}+\left\Vert G_{\rho}^{*}LG_{\tau}-M{}_{\rho,\tau}^{d}\right\Vert \|u\|_{\infty}\|M_{\rho,\tau}\|_{\infty}\|\alpha\|_{2}^{2}\nonumber \\
 & =\left\Vert LG_{\tau}\alpha\right\Vert _{2}\cdot\left\Vert \mathcal{P}_{\rho,U}G_{\tau}\alpha-G_{\rho}UM_{\rho,\tau}\alpha\right\Vert _{2} \nonumber\\
& \qquad +O(e^{-Cc_{2}})\|u\|_{\infty}\|\alpha\|_{2}^{2}\qquad\mbox{(see (\ref{eq:props of Psi 6-1}) and text below)}\nonumber \\
 & =\left\Vert LG_{\tau}\alpha\right\Vert _{2}O(e^{-Cc_{2}})\|u\|_{\infty}\left\Vert G_{\tau}\alpha\right\Vert _{2}+O(e^{-Cc_{2}})\|u\|_{\infty}\|\alpha\|_{2}^{2}\qquad\mbox{(Lemma \ref{lem:P is simple})}\nonumber \\
 & =\|L\|\cdot O(e^{-Cc_{2}})\|u\|_{\infty}\left\Vert \alpha\right\Vert _{2}^{2}+O(e^{-Cc_{2}})\|u\|_{\infty}\|\alpha\|_{2}^{2}\qquad\mbox{(see (\ref{eq:props of Psi no 1}))}\nonumber \\
 & =O(e^{-Cc_{2}})\|u\|_{\infty}\left\Vert \alpha\right\Vert _{2}^{2}.\qquad\left(\|L\|\le2\pi N,\,\, c_2=\Theta(\log N)\right)\label{eq:simlification of grad of F}
\end{align}
 In the fourth line above, $\|M_{\rho,\tau}\|_{\infty}$ is absorbed
as a constant on account of the asymptotic bound
\[
\left\Vert M_{\rho,\tau}\right\Vert _{\infty}=\left\Vert M_{\rho,\tau}\right\Vert \le\left\Vert G_{\rho}^{*}G_{\tau}\right\Vert +O(e^{-Cc_2})\le\left\Vert G_{\rho}\right\Vert \left\Vert G_{\tau}\right\Vert +O(e^{-Cc_2})\le2,
\]
which holds because $M_{\rho,\tau}$ is diagonal and by Lemma \ref{lem:props of Psi}
(see (\ref{eq:props of Psi no 1}) and (\ref{eq:props of Psi 4})).
In the last line of \eqref{eq:simlification of grad of F}, the lower bound in $c_2=\Theta(\log N)$ must be sufficiently large. 
\begin{comment}
A similar fate awaited $\|G_{\tau}\alpha\|_{2}$ in the sixth line
after we invoked Lemma \ref{lem: alpha and Psi alpha}, $\|G_{\tau}\alpha\|_{2}\le2\|\alpha\|_{2}$.
\textbf{Note: We probably don't need to say explicitly everytime a
term is absorbed into O notation because it's bounded by a constant.}
\end{comment}
{} %
\begin{comment}
The last line in (\ref{eq:simlification of grad of F}) also uses
the fact that, by definition, $\|L\|\le\pi N$.
\end{comment}
Next, we can asymptotically upper-bound the second term in  the last identity
 in (\ref{eq:1st step in bounding grad of F}) as follows:
\begin{align}
\left|\left\langle L\mathcal{P}_{\rho}G_{\tau}\alpha,\mathcal{P}_{\rho,U}G_{\tau}\alpha\right\rangle \right| & \le\|\mathcal{P}_{\rho,U}L\mathcal{P}_{\rho}\|\|G_{\tau}\alpha\|_{2}^{2}\nonumber \\
 & =\left\Vert \left[\left(G_{\rho}^{\dagger}\right)^{*}UG_{\rho}^{*}\right]L\left[G_{\rho}G_{\rho}^{\dagger}\right]\right\Vert \|G_{\tau}\alpha\|_{2}^{2}\qquad
\left( \mathcal{P}_{\rho,U} =  \mathcal{P}_{\rho,U}^* = 
G_\rho U G_\rho^\dagger
\right)
\nonumber \\
 & \le\left\Vert G_{\rho}^{\dagger}\right\Vert \cdot\|u\|_{\infty}\cdot\left\Vert G_{\rho}^{*}LG_{\rho}\right\Vert \cdot\left\Vert G_{\rho}^{\dagger}\right\Vert \cdot\|G_{\tau}\alpha\|_{2}^{2}\nonumber \\
 & =O(e^{-Cc_{2}})\|u\|_{\infty}\|\alpha\|_{2}^{2}.\qquad\mbox{(Lemmas \ref{lem:props of Psi} and \ref{lem: alpha and Psi alpha})}\label{eq:simplification of grad of F 1}
\end{align}
\begin{comment}
Above, we repeatedly used the results in Lemma \ref{lem:props of Psi}.
\end{comment}
\begin{comment}
We also used Lemma \ref{lem: alpha and Psi alpha}.
\end{comment}
These lemmas are applicable because the entries of $\tau$ and $\rho$
are each distinct. Similarly, we asymptotically upper-bound the third
term on the last identity in (\ref{eq:1st step in bounding grad of F})
as follows:
\begin{align}
& \left|\left\langle L\left(I_{N}-\mathcal{P}_{\rho}\right)\widehat{n}_{\sigma_{2}},\mathcal{P}_{\rho,U}G_{\tau}\alpha\right\rangle \right| \nonumber\\
& \le\|L\|\left\Vert I_{N}-\mathcal{P}_{\rho}\right\Vert \left\Vert \widehat{n}_{\sigma_{2}}\right\Vert _{2}\left\Vert \mathcal{P}_{\rho,U}\right\Vert \left\Vert G_{\tau}\alpha\right\Vert _{2}\nonumber \\
 & =O(N)\left\Vert \widehat{n}_{\sigma_{2}}\right\Vert _{2}\left\Vert u\right\Vert _{\infty}\left\Vert \alpha\right\Vert _{2}.
\qquad \left( \|L\|\le 2\pi N,\,\, \left\|I_N-\mathcal{P}_\rho\right\|\le 1, \mbox{ Lemmas \ref{lem: alpha and Psi alpha} and \ref{lem:P is simple}} \right) 
 \label{eq:simplification of grad of F 4}
\end{align}
%Above, we used Lemmas \ref{lem: alpha and Psi alpha} and \ref{lem:P is simple},
%and that $\mathcal{P}_{\rho}$ is an orthogonal projection. 
Next, 
consider the fourth term in the last identity in (\ref{eq:1st step in bounding grad of F}).
Asymptotically, it holds that
\begin{align}
\left|\left\langle L\left(I_{N}-\mathcal{P}_{\rho}\right)G_{\tau}\alpha,\mathcal{P}_{\rho,U}\widehat{n}_{\sigma_{2}}\right\rangle \right| & \le\|L\|\left\Vert I_{N}-\mathcal{P}_{\rho}\right\Vert \left\Vert G_{\tau}\alpha\right\Vert _{2}\left\Vert \mathcal{P}_{\rho,U}\right\Vert \left\Vert \widehat{n}_{\sigma_{2}}\right\Vert _{2}\nonumber \\
 & =O(N)\left\Vert \alpha\right\Vert _{2}\left\Vert u\right\Vert _{\infty}\left\Vert \widehat{n}_{\sigma_{2}}\right\Vert _{2},\label{eq:simplification of grad of F 5}
\end{align}
with a similar argument. 
Finally, consider the fifth term on the last line of (\ref{eq:1st step in bounding grad of F}):
\begin{align}
\left|\left\langle L\left(I_{N}-\mathcal{P}_{\rho}\right)\widehat{n}_{\sigma_{2}},\mathcal{P}_{\rho,U}\widehat{n}_{\sigma_{2}}\right\rangle \right| & \le\|L\|\left\Vert I_{N}-\mathcal{P}_{\rho}\right\Vert \left\Vert \mathcal{P}_{\rho,U}\right\Vert _{2}\left\Vert \widehat{n}_{\sigma_{2}}\right\Vert _{2}^{2}\nonumber \\
 & =O(N)\left\Vert u\right\Vert _{\infty}\left\Vert \widehat{n}_{\sigma_{2}}\right\Vert _{2}^{2}.\label{eq:simplification of grad of F 6}
\end{align}
We now use (\ref{eq:simlification of grad of F}-\ref{eq:simplification of grad of F 6})
to find a lower bound for the inner product in (\ref{eq:1st step in bounding grad of F}):
\begin{align}
\left\langle -\frac{\partial F}{\partial\rho}(\rho),u\right\rangle  & \ge\left\langle M{}_{\rho,\tau}^{d}\alpha,M_{\rho,\tau}U\alpha\right\rangle -O(e^{-Cc_{2}})\|u\|_{\infty}\|\alpha\|_{2}^{2}\nonumber \\
 & \qquad-O(N)\|u\|_{\infty}\left\Vert \widehat{n}_{\sigma_{2}}\right\Vert _{2}\left\Vert \alpha\right\Vert _{2}-O(N)\left\Vert u\right\Vert _{\infty}\left\Vert \widehat{n}_{\sigma_{2}}\right\Vert _{2}^{2}.\label{eq:2nd step in bounding grad of F}
\end{align}
Let us simplify the lower bound above. To that
end, observe that
\begin{equation}
\left\langle M{}_{\rho,\tau}^{d}\alpha,M_{\rho,\tau}U\alpha\right\rangle =\sum_{i=1}^{K}\left|\alpha[i]\right|^{2}\cdot u[i]\cdot M_{\rho,\tau}[i,i]\cdot M{}_{\rho,\tau}^{d}[i,i],\label{eq:3nd step in bounding grad of F}
\end{equation}
which owes itself to the fact that $U$, $M_{\rho,\tau}$, and $M_{\rho,\tau}^{d}$
are all diagonal matrices. %
\begin{comment}
To further simplify the right-hand side above, we proceed as follows.
\end{comment}
First, by design,
\[
\rho,\tau\in\mathbb{B}(\tau^{0},\sigma_{1})\Rightarrow d(\rho,\tau)\le2\sigma_{1}\le h(\sigma_{2},N).
\]
Then, on the account of Criterion \ref{fact: slow decay in bound-1},
we asymptotically have that 
\begin{equation}
M_{\rho,\tau}[i,i]=\left\langle g_{\sigma_{2,N}}(t\ominus\rho[i]),g_{\sigma_{2},N}(t\ominus\tau[i])\right\rangle =\Omega(1),\label{eq:middle step pre}
\end{equation}
\begin{align}
\left|M{}_{\rho,\tau}^{d}[i,i]\right| & =\left|\left\langle g_{\sigma_{2},N}(t\ominus\rho[i]),g'_{\sigma_{2},N}(t\ominus\tau[i])\right\rangle \right|\nonumber \\
 & =\mbox{sign}\left(\rho[i]\ominus\tau[i]-\frac{1}{2}\right)\cdot\left\langle g_{\sigma_{2},N}(t\ominus\rho[i]),g'_{\sigma_{2},N}(t\ominus\tau[i])\right\rangle \nonumber \\
 & =\Omega(N^{2})\cdot d(\rho[i],\tau[i]).\label{eq:middle step}
\end{align}
Second, we choose 
$$
u= \mbox{sign}\left((\rho\ominus \tau)-\frac{1}{2}\right).
$$
\begin{comment}
 Here, the subtraction is over $\mathbb{R}$ as opposed to the unit
circle (which we denote with $\ominus$).
\end{comment}
With this choice of $u$, 
it asymptotically holds that
\begin{align*}
& \left\langle M{}_{\rho,\tau}^{d}\alpha,M_{\rho,\tau}U\alpha\right\rangle \\
 & =\sum_{i=1}^{K}\left|\alpha[i]\right|^{2}\cdot u[i]\cdot M_{\rho,\tau}[i,i]\cdot M{}_{\rho,\tau}^{d}[i,i]
\qquad \mbox{(see (\ref{eq:3nd step in bounding grad of F}))} 
 \\
% & =\sum_{i=1}^{K}\left|\alpha[i]\right|^{2}\cdot\Big(\left(\tau[i]\ominus\rho[i]-\frac{1}{2}\right)\cdot M{}_{\rho,\tau}^{d}[i,i]\Big)\cdot M_{\rho,\tau}[i,i]\qquad\left(u[i]=\tau[i]\ominus\rho[i],\quad\forall i\right)\\
 & =\sum_{i=1}^{K}\left|\alpha[i]\right|^{2}\cdot \left|M{}_{\rho,\tau}^{d}[i,i]\right|
 \cdot M_{\rho,\tau}[i,i]\quad\left(\mbox{sign}\left(\rho[i]\ominus\tau[i]-\frac{1}{2}\right)=\mbox{sign}\left(M{}_{\rho,\tau}^{d}[i,i]\right)\right)\\
 & =\Omega(N^{2})\sum_{i=1}^{K}\left|\alpha[i]\right|^{2}\cdot d(\rho[i],\tau[i])\qquad\mbox{(see (\ref{eq:middle step pre}) and (\ref{eq:middle step}))}\\
 & \ge \Omega(N^{2})\cdot\min_{i}|\alpha[i]|^{2}\cdot\max_{i}d(\rho[i],\tau[i])\\
 & =\Omega(N^{2})\cdot\min_{i}|\alpha[i]|^{2}\cdot d(\rho,\tau)\qquad\mbox{(definition of Hausdorff distance in \eqref{eq:Hausdoff dist})}\\
 & =\Omega(N^{2})\cdot\frac{\min_{i}|\alpha[i]|^{2}}{\|\alpha\|_{2}^{2}}\cdot\|\alpha\|_{2}^{2}\cdot d(\rho,\tau)\\
 & =\Omega(K^{-1}N^{2})\cdot\frac{\min_{i}|\alpha[i]|^{2}}{\max_{i}|\alpha[i]|^{2}}\cdot\|\alpha\|_{2}^{2}\cdot d(\rho,\tau),
\end{align*}
and, consequently, 
\begin{align*}
 & \left\langle M{}_{\rho,\tau}^{d}\alpha,M_{\rho,\tau}U\alpha\right\rangle \\
% & =\sum_{i=1}^{K}\left|\alpha[i]\right|^{2}\cdot u[i]\cdot M_{\rho,\tau}[i,i]\cdot M{}_{\rho,\tau}^{d}[i,i]
%\qquad \mbox{(see (\ref{eq:3nd step in bounding grad of F}))} 
% \\
%% & =\sum_{i=1}^{K}\left|\alpha[i]\right|^{2}\cdot\Big(\left(\tau[i]\ominus\rho[i]-\frac{1}{2}\right)\cdot M{}_{\rho,\tau}^{d}[i,i]\Big)\cdot M_{\rho,\tau}[i,i]\qquad\left(u[i]=\tau[i]\ominus\rho[i],\quad\forall i\right)\\
% & =\sum_{i=1}^{K}\left|\alpha[i]\right|^{2}\cdot \left|M{}_{\rho,\tau}^{d}[i,i]\right|
% \cdot M_{\rho,\tau}[i,i]\quad\left(\mbox{sign}\left(\rho[i]\ominus\tau[i]-\frac{1}{2}\right)=\mbox{sign}\left(M{}_{\rho,\tau}^{d}[i,i]\right)\right)\\
% & =\Omega(N^{2})\sum_{i=1}^{K}\left|\alpha[i]\right|^{2}\cdot d(\rho[i],\tau[i])\qquad\mbox{(see (\ref{eq:middle step pre}) and (\ref{eq:middle step}))}\\
% & \ge \Omega(N^{2})\cdot\min_{i}|\alpha[i]|^{2}\cdot\max_{i}d(\rho[i],\tau[i])\\
% & =\Omega(N^{2})\cdot\min_{i}|\alpha[i]|^{2}\cdot d(\rho,\tau)\qquad\mbox{(definition of Hausdorff distance in \eqref{eq:Hausdoff dist})}\\
% & =\Omega(N^{2})\cdot\frac{\min_{i}|\alpha[i]|^{2}}{\|\alpha\|_{2}^{2}}\cdot\|\alpha\|_{2}^{2}\cdot d(\rho,\tau)\\
% & =\Omega(K^{-1}N^{2})\cdot\frac{\min_{i}|\alpha[i]|^{2}}{\max_{i}|\alpha[i]|^{2}}\cdot\|\alpha\|_{2}^{2}\cdot d(\rho,\tau)\\
 & \ge \frac{\Omega(K^{-1}N^{2})}{\mbox{dyn}(x_{\tau,\alpha})^{2}}\cdot\|\alpha\|_{2}^{2}\cdot d(\rho,\tau)\\
 & =\frac{\Omega(N)}{\mbox{dyn}(x_{\tau,\alpha})^{2}}\cdot\|\alpha\|_{2}^{2}\cdot d(\rho,\tau).\qquad\left(K\le f_{C}+1=\frac{N+1}{2}\right)
\end{align*}
\begin{comment}
Above, the second line holds because $u[i]=\tau[i]\ominus\rho[i]$
and $M{}_{\rho,\tau}^{d}[i,i]$ have the same sign asymptotically
(i.e., when $N$ is large enough). (See (\ref{eq:middle step}).)
The fourth line uses (\ref{eq:middle step pre}) and (\ref{eq:middle step}).
The fifth line follows because always $d(\tau[i],\rho[i])\le|\tau[i]\ominus\rho[i]|$.
\end{comment}
With our choice of $u$ earlier, we can substitute the bound above into
(\ref{eq:2nd step in bounding grad of F}) to finally obtain that
\begin{align}
\left\langle \frac{\partial F}{\partial\rho}(\rho),
\mbox{sign}\left( (\rho\ominus \tau)-\frac{1}{2}\right)
\right\rangle  & =-\frac{\Omega(N)}{\mbox{dyn}(x_{\tau,\alpha})^{2}}\cdot\|\alpha\|_{2}^{2}\cdot d(\rho,\tau)+O(e^{-Cc_{2}})\cdot\|\alpha\|_{2}^{2}\nonumber \\
 & \qquad\quad+O(N)\cdot\left\Vert \widehat{n}_{\sigma_{2}}\right\Vert _{2}\left\Vert \alpha\right\Vert _{2}+O(N)\cdot\left\Vert \widehat{n}_{\sigma_{2}}\right\Vert _{2}^{2}.\label{eq:5th step bounding grad of F}
\end{align}
The proof of Lemma \ref{lem:grad is away from zero} is complete because
$\|\widehat{n}_{\sigma_{2}}\|_{2}\le\|n(\cdot)\|_{L_{2}}$ (by  (\ref{eq:filtered noise to noise})).

\part*{Supplementary Material}

\section{Computing the Gradient of $F(\cdot)$\label{sec:Computing gradient} }

Here, for fixed $\rho_{0}\in\mathbb{I}^{\Kt}$, we wish to calculate
$\frac{\partial F}{\partial\rho}(\rho_{0})\in\R^{\Kt}$ and verify the  explicit expression in (28).
%\eqref{eq:main program}. 
 Set 
\begin{equation}\label{eq:def of betat}
\beta_{\rho_{0}}=\widetilde{\beta}\left(\rho_{0}\right):=G_{\rho_{0}}^{\dagger}\cdot \widehat{z}_{\sigma_{2}}\in\mathbb{R}^{\Kt},
\end{equation}
where, from (14),
%\eqref{eq:def of G_tau}
 recall that the entries of $G_{\rho_{0}}\in\mathbb{C}^{N\times \Kt}$
are specified as 
$$
G_{\rho_{0}}[l,i]=\widehat{g}_{\sigma_{2},N}[l]\cdot e^{-\imag2\pi l\rho_{0}[i]},\qquad l\in\mathbb{F},\,\, i\in[1:\Kt].
$$
\begin{comment}
As usual, we use $\overline{a}$ to denote the complex conjugate of
$a\in\mathbb{C}$.
\end{comment}
%By definition, note that
%\[
%F(\rho_{0})=\min_{\beta}f\left(\rho_{0},\beta\right)=f\left(\rho_{0},\widetilde{\beta}(\rho_{0})\right),
%\]
We use the following identity (which we later establish in Section \ref{sec:analysis}): 
\begin{equation}
\frac{\partial F}{\partial\rho}(\rho_{0})=\frac{\partial f}{\partial\rho}\left(\rho_{0},\widetilde{\beta}(\rho_{0})\right).\label{eq:analysis 1}
\end{equation}
\begin{comment}
\footnote{This is a fairly common trick in the machine learning community, although
the last place we saw this was in \cite{Keshavan2009}. We
do not include the proof.%
}
\end{comment}
{} It suffices then to compute the right hand side of the above
identity:
\begin{align}
\frac{\partial f}{\partial\rho}\left(\rho_{0},\widetilde{\beta}(\rho_{0})\right) & =\left[\frac{\partial}{\partial\rho}\left\Vert G_{\rho}\beta-\widehat{z}_{\sigma_{2}}\right\Vert _{2}^{2}\right]\left(\rho_{0},\widetilde{\beta}(\rho_{0})\right)\nonumber \\
 & =\left[\frac{\partial}{\partial\rho}\left\langle G_{\rho}\beta-\widehat{z}_{\sigma_{2}},G_{\rho}\beta-\widehat{z}_{\sigma_{2}}\right\rangle \right]\left(\rho_{0},\widetilde{\beta}(\rho_{0})\right)\nonumber \\
 & =2\left(\frac{\partial G_{\rho}\beta}{\partial\rho}\left(\rho_{0},\widetilde{\beta}(\rho_{0})\right)\right)^{*}\left(G_{\rho_{0}}\cdot\widetilde{\beta}(\rho_{0})-\widehat{z}_{\sigma_{2}}\right).\label{eq:pre nec calc}
\end{align}
It only remains to calculate the derivative of $G_{\rho}\beta$
with respect to $\rho$. To that end, we next do some elementary 
calculations.

For $i\in[1:\Kt]$, we can compute the derivative of $G_{\rho}[:,i]\in\mathbb{C}^{N}$
(the $i$th column of $G_{\rho}\in\mathbb{C}^{N\times \Kt}$) with respect
to $\rho[i]$ as
\begin{equation}
\frac{\partial \left(G_{\rho}[:,i]\right) }{\partial\rho[i]}\left(\rho_{0}[i]\right)=\left[\begin{array}{c}
\vdots\\
\frac{\partial e^{-\imag2\pi l\rho[i]}}{\partial\rho[i]}(\rho_{0}[i])\\
\vdots
\end{array}\right]=L^{*}\cdot G_{\rho_{0}}[:,i]\in\mathbb{C}^{N},\label{eq:grad mid s1}
\end{equation}
where the diagonal matrix $L\in\mathbb{C}^{N\times N}$ is specified
by $L[l,l]=\imag2\pi l$ for $ l\in\mathbb{F}$. Above, for clarity,
only the $l$th entry of the long vector is shown. 
In addition, for a vector $v\in\mathbb{R}^{\Kt}$, we observe that
\begin{align}
\frac{\partial\left(G_{\rho}v\right)}{\partial\rho}(\rho_{0}) & =\sum_{i=1}^{\Kt}v[i]\cdot\left[\frac{\partial \left( G_{\rho}[:,i] \right)  }{\partial\rho}\right](\rho_{0})\nonumber \\
 & =\left[\begin{array}{ccc}
\cdots & v[i]\cdot\frac{\partial G_{\rho}[:,i]}{\partial\rho[i]}\left(\rho_{0}[i]\right) & \cdots\end{array}\right]\in \mathbb{C}^{N\times \widetilde{K}}\nonumber \\
 & =\left[\begin{array}{ccc}
\cdots & v[i]\cdot L^{*}\cdot G_{\rho_{0}}[:,i] & \cdots\end{array}\right] \qquad \mbox{(see \eqref{eq:grad mid s1})}
\nonumber \\
 & =L^{*}G_{\rho_{0}}\cdot\mbox{diag}\left(v\right),\label{eq:nec calc  1}
\end{align}
where the second line follows because $G_{\rho}[:,i]$ depends only
on $\rho[i]$. Above, $\mbox{diag}(v)\in\mathbb{R}^{\Kt\times \Kt}$ is
the diagonal vector formed from the entries of $v$. With (\ref{eq:nec calc  1})
at hand, we can plug in for the derivitave of $G_{\rho}\beta$ in
(\ref{eq:pre nec calc}) to obtain that
\begin{align}
& \frac{\partial F}{\partial\rho}(\rho_{0}) \nonumber\\
& =\frac{\partial f}{\partial\rho}\left(\rho_{0},\widetilde{\beta}(\rho_{0})\right) \nonumber\\
& =2\left(\frac{\partial G_{\rho}\beta}{\partial\rho}\left(\rho_{0},\widetilde{\beta}(\rho_{0})\right)\right)^{*}\left(G_{\rho_{0}}\cdot\widetilde{\beta}(\rho_{0})-\widehat{z}_{\sigma_{2}}\right)\nonumber \\
 & =2\left(L^{*}G_{\rho_{0}}\cdot\mbox{diag}(\widetilde{\beta}(\rho_{0}))\right)^{*}\left(G_{\rho_{0}}\cdot\widetilde{\beta}(\rho_{0})-\widehat{z}_{\sigma_{2}}\right)\nonumber \\
 & =2\cdot\mbox{diag}\left(\widetilde{\beta}(\rho_{0})\right)G_{\rho_{0}}^{*}L\left(G_{\rho_{0}}\cdot\widetilde{\beta}(\rho_{0})-\widehat{z}_{\sigma_{2}}\right),
\qquad \left( \widetilde{\beta}(\rho_0) \in \mathbb{R}^{\Kt} \right) \nonumber\\
& = -2\cdot\mbox{diag}\left(\widetilde{\beta}(\rho_{0})\right)G_{\rho_{0}}^{*}L\left(I_{N}-\mathcal{P}_{\rho_{0}}\right)\widehat{z}_{\sigma_{2}},\qquad \mbox{(see \eqref{eq:def of betat})}
 \label{eq:der 2}
\end{align}
%which after substituting in for $\widetilde{\beta}(\rho_{0})=G_{\rho_{0}}^{\dagger}\widehat{z}_{\sigma_{2}}$
%finally reads
%\begin{align}
%\frac{\partial F}{\partial\rho}(\rho_{0})=\frac{\partial f}{\partial\rho}\left(\rho_{0},\widetilde{\beta}(\rho_{0})\right) & =2\cdot\mbox{diag}\left(\widetilde{\beta}(\rho_{0})\right)G_{\rho_{0}}^{*}L\left(G_{\rho_{0}}\widetilde{\beta}(\rho_{0})-\widehat{z}_{\sigma_{2}}\right)\nonumber \\
% & =-2\cdot\mbox{diag}\left(\widetilde{\beta}(\rho_{0})\right)G_{\rho_{0}}^{*}L\left(I_{N}-\mathcal{P}_{\rho_{0}}\right)\widehat{z}_{\sigma_{2}},\label{eq:der 2}
%\end{align}
where $\mathcal{P}_{\rho_{0}}=G_{\rho_{0}}G_{\rho_{0}}^{\dagger}$
is the orthogonal projection onto the column span of $G_{\rho_{0}}$.
We therefore found an explicit expression for $\frac{\partial F}{\partial\rho}(\rho_{0})$.

\section{Computing the Hessian of $F(\cdot)$ \label{sec:Computing-the-Hessian}}

Here, for fixed $\rho_{0}\in\mathbb{I}^{\Kt}$, we wish to calculate
$\frac{\partial^{2}F}{\partial\rho^{2}}(\rho_{0})\in\R^{\Kt\times \Kt}$ and verify the explicit expression in 
(32). % \eqref{eq:Hess exp}. 
%We set $\beta_{\rho_{0}}=\widetilde{\beta}(\rho_{0})=G_{\rho_{0}}^{\dagger}\widehat{z}_{\sigma_{2}}\in\mathbb{R}^{K}$
%as before. Because
%\begin{align*}
%F(\rho_{0}) & =\min_{\beta\in\mathbb{R}^{N}}f(\rho_{0},\beta)=f\left(\rho_{0},\widetilde{\beta}(\rho_{0})\right),
%\end{align*}
%(and the minimizer $\beta_{\rho_{0}}$ is unique) we can use the following
 With $\widetilde{\beta}(\rho_0)$ as in \eqref{eq:def of betat}, we will use the following identity (to be established in Section \ref{sec:analysis}):
\begin{align}
\frac{\partial^{2}F}{\partial\rho^{2}}(\rho_{0}) 
& =\frac{\partial^{2}f}{\partial\rho^{2}}\left(\rho_{0},\widetilde{\beta}(\rho_{0})\right)+2\cdot \frac{\partial^{2}f}{\partial\rho\partial\beta}\left(\rho_{0},\widetilde{\beta}(\rho_{0})\right)\cdot\frac{\partial\widetilde{\beta}}{\partial\rho}(\rho_{0})\nonumber \\
 & \qquad+\left(\frac{\partial\widetilde{\beta}}{\partial\rho}(\rho_{0})\right)^{*}\cdot\frac{\partial^{2}f}{\partial\beta^{2}}\left(\rho_{0},\widetilde{\beta}(\rho_{0})\right)\cdot\frac{\partial\widetilde{\beta}}{\partial\rho}(\rho_{0}).\label{eq:analysis 2}
\end{align}
We are now burdened with the laborious task of computing the following
derivatives:
\begin{equation}
\frac{\partial^{2}f}{\partial\rho^{2}}\left(\rho,\beta\right),\qquad\frac{\partial^{2}f}{\partial\rho\partial\beta}\left(\rho,\beta\right),\qquad\frac{\partial^{2}f}{\partial\beta^{2}}\left(\rho,\beta\right),\qquad\frac{\partial\widetilde{\beta}}{\partial\rho}(\rho).\label{eq:all you can differentiate}
\end{equation}
Recall \eqref{eq:grad mid s1} and \eqref{eq:nec calc  1} to facilitate the ensuing arguments. 
%We begin with some necessary calculations. 
%
%For $i\in[1:\Kt]$, recall
%from the previous section that
%\begin{equation}
%\frac{\partial G_{\rho}[:,i]}{\partial\rho[i]}\left(\rho[i]\right)=L^{*}\cdot G_{\rho}[:,i]\in\mathbb{C}^{N},\label{eq:nec calc 14}
%\end{equation}
%\[
%\frac{\partial\left(G_{\rho}v\right)}{\partial\rho}(\rho)=L^{*}G_{\rho_{0}}\cdot\mbox{diag}\left(v\right),\qquad\forall\in\mathbb{C}^{K}.
%\]
 Three fresh estimates are needed before calculating the derivatives
in (\ref{eq:all you can differentiate}). These estimates will be presented immediately next and then followed by the body of calculations throughout the rest of this section. As for the first auxiliary result, for a vector $u\in\mathbb{R}^{N}$,
we note that 
\begin{align}
\frac{\partial\left(G_{\rho}^{*}u\right)}{\partial\rho}(\rho) & =\left[\begin{array}{c}
\vdots\\
\frac{\partial\left(\left(G_{\rho}[:,i]\right)^{*}u\right)}{\partial\rho}(\rho)\\
\vdots
\end{array}\right]=\left[\begin{array}{cccc}
\ddots\\
 & \frac{\partial\left(\left(G_{\rho}[:,i]\right)^{*}u\right)}{\partial\rho[i]}(\rho[i])\\
 &  & \ddots
\end{array}\right]\nonumber \\
 & =\left[\begin{array}{cccc}
\ddots\\
 & \left(\frac{\partial\left(G_{\rho}[:,i]\right)}{\partial\rho[i]}(\rho[i])\right)^{*}u\\
 &  & \ddots
\end{array}\right]\nonumber \\
 & =\left[\begin{array}{cccc}
\ddots\\
 & \left(G_{\rho}[:,i]\right)^{*}Lu\\
 &  & \ddots
\end{array}\right] \qquad \mbox{(see \eqref{eq:grad mid s1})} \nonumber \\
 & =\mbox{diag}\left(G_{\rho}^{*}Lu\right)\in\mathbb{R}^{\Kt\times \Kt},\label{eq:nec calc 12}
\end{align}
where the second identity holds because $G_{\rho}[:,i]$ depends only
on $\rho[i]$. Also, note that
\begin{align*}
\mathbb{R}^{\Kt\times \Kt}&
\ni  \frac{\partial\left(G_{\rho}^{*}G_{\rho}\beta\right)}{\partial\rho}(\rho) 
=\left[\begin{array}{c}
\vdots\\
\frac{\partial\left(\left(G_{\rho}[:,i]\right)^{*}G_{\rho}\beta\right)}{\partial\rho}(\rho)\\
\vdots
\end{array}\right]\\
& =\left[\begin{array}{c}
\vdots\\
\left(\frac{\partial \left(G_{\rho}[:,i]\right)}{\partial\rho}(\rho)\right)^{*}{G_{\rho}\beta}\\
\vdots
\end{array}\right]
+
\left[\begin{array}{c}
\vdots\\
\left(G_{\rho}[:,i]\right)^{*}\cdot\frac{\partial\left(G_{\rho}\beta\right)}{\partial\rho}(\rho)\\
\vdots
\end{array}\right]\nonumber \\
 & =\left[\begin{array}{cccc}
\ddots\\
 & \left( L^*\cdot{G_{\rho}[:,i]}\right)^{*} G_{\rho}\beta\\
 &  & \ddots
\end{array}\right]\nonumber\\
& \qquad \qquad +\left[\begin{array}{c}
\vdots\\
\left(G_{\rho}[:,i]\right)^{*}L^{*}G_{\rho}\cdot\mbox{diag}\left(\beta\right)\\
\vdots
\end{array}\right]\quad \mbox{(see \eqref{eq:grad mid s1} and \eqref{eq:nec calc  1})}\\
& =\left[\begin{array}{cccc}
\ddots\\
 & \left(G_{\rho}[:,i]\right)^{*}LG_{\rho}\beta\\
 &  & \ddots
\end{array}\right]+\left[\begin{array}{c}
\vdots\\
\left(G_{\rho}[:,i]\right)^{*}L^{*}G_{\rho}\cdot\mbox{diag}\left(\beta\right)\\
\vdots
\end{array}\right],
\end{align*}
and, consequently, 
\begin{align}
\mathbb{R}^{\Kt\times \Kt}&
\ni  \frac{\partial\left(G_{\rho}^{*}G_{\rho}\beta\right)}{\partial\rho}(\rho) \nonumber\\
%& =\left[\begin{array}{c}
%\vdots\\
%\frac{\partial\left(\left(G_{\rho}[:,i]\right)^{*}G_{\rho}\beta\right)}{\partial\rho}(\rho)\\
%\vdots
%\end{array}\right]\nonumber \\
% & =\left[\begin{array}{c}
%\vdots\\
%\left(G_{\rho}\beta\right)^{T}\overline{\frac{\partial G_{\rho}[:,i]}{\partial\rho}(\rho)}\\
%\vdots
%\end{array}\right]+\left[\begin{array}{c}
%\vdots\\
%\left(G_{\rho}[:,i]\right)^{*}\cdot\frac{\partial\left(G_{\rho}\beta\right)}{\partial\rho}(\rho)\\
%\vdots
%\end{array}\right]\nonumber \\
%& =\left[\begin{array}{c}
%\vdots\\
%\left(\frac{\partial \left(G_{\rho}[:,i]\right)}{\partial\rho}(\rho)\right)^{*}{G_{\rho}\beta}\\
%\vdots
%\end{array}\right]
%+
%\left[\begin{array}{c}
%\vdots\\
%\left(G_{\rho}[:,i]\right)^{*}\cdot\frac{\partial\left(G_{\rho}\beta\right)}{\partial\rho}(\rho)\\
%\vdots
%\end{array}\right]\nonumber \\
% & =\left[\begin{array}{cccc}
%\ddots\\
% & \left( L^*\cdot{G_{\rho}[:,i]}\right)^{*} G_{\rho}\beta\\
% &  & \ddots
%\end{array}\right]\nonumber\\
%& \qquad \qquad +\left[\begin{array}{c}
%\vdots\\
%\left(G_{\rho}[:,i]\right)^{*}L^{*}G_{\rho}\cdot\mbox{diag}\left(\beta\right)\\
%\vdots
%\end{array}\right]\quad \mbox{(see \eqref{eq:grad mid s1} and \eqref{eq:nec calc  1})}\nonumber \\
% & =\left[\begin{array}{cccc}
%\ddots\\
% & \left(G_{\rho}[:,i]\right)^{*}LG_{\rho}\beta\\
% &  & \ddots
%\end{array}\right]+\left[\begin{array}{c}
%\vdots\\
%\left(G_{\rho}[:,i]\right)^{*}L^{*}G_{\rho}\cdot\mbox{diag}\left(\beta\right)\\
%\vdots
%%\end{array}\right]\nonumber \\
 & =\mbox{diag}\left(G_{\rho}^{*}LG_{\rho}\beta\right)+G_{\rho}^{*}L^{*}G_{\rho}\cdot\mbox{diag}(\beta).\label{eq:nec calc 11}
\end{align}
\textbf{}%
\begin{comment}
\textbf{Note: as a general rule, }$\frac{\partial(a^{*}b)}{\partial\rho}=\left(\frac{\partial a}{\partial\rho}\right)^{*}b+\left(\frac{\partial b}{\partial\rho}\right)^{T}\overline{a}$
\textbf{for vectors $a$ and $b$.}
\end{comment}
Similarly,
\begin{align}
\mathbb{R}^{\Kt\times \Kt}
\ni
\frac{\partial\left(G_{\rho}^{*}LG_{\rho}\beta\right)}{\partial\rho}(\rho,\beta) 
%=\left[\begin{array}{c}
%\vdots\\
%\frac{\partial\left(\left(G_{\rho}[:,i]\right)^{*}LG_{\rho}\beta\right)}{\partial\rho}(\rho,\beta)\\
%\vdots
%\end{array}\right]\nonumber \\
% & =\left[\begin{array}{c}
%\vdots\\
%\left(LG_{\rho}\beta\right)^{T}\overline{;\frac{\partial G_{\rho}[:,i]}{\partial\rho}(\rho)}\\
%\vdots
%\end{array}\right]+\left[\begin{array}{c}
%\vdots\\
%\left(G_{\rho}[:,i]\right)^{*}\cdot\frac{\partial\left(LG_{\rho}\beta\right)}{\partial\rho}(\rho)\\
%\vdots
%\end{array}\right]\nonumber \\
% & =\left[\begin{array}{cccc}
%\ddots\\
% & \left(G_{\rho}[:,i]\right)^{*}L^{2}G_{\rho}\beta\\
% &  & \ddots
%\end{array}\right]+\left[\begin{array}{c}
%\vdots\\
%\left(G_{\rho}[:,i]\right)^{*}LL^{*}G_{\rho}\cdot\mbox{diag}\left(v\right)\\
%\vdots
%\end{array}\right]\nonumber \\
 & =\mbox{diag}\left(G_{\rho}^{*}L^{2}G_{\rho}\beta\right)+G_{\rho}^{*}LL^{*}G_{\rho}\cdot\mbox{diag}(\beta)\nonumber \\
 & =\mbox{diag}\left(G_{\rho}^{*}L^{2}G_{\rho}\beta\right)-G_{\rho}^{*}L^{2}G_{\rho}\cdot\mbox{diag}(\beta).\qquad 
 \left( L^* = -L\right)
 \label{eq:nec calc 10}
\end{align}
Armed with the necessary estimates, we embark on calculating the derivatives
in (\ref{eq:all you can differentiate}). Beginning with $\frac{\partial^{2}f}{\partial\rho^{2}}(\cdot,\cdot)$,
note that
\begin{align*}
\mathbb{R}^{\Kt\times \Kt} & \ni \frac{\partial^{2}f}{\partial\rho^{2}}(\rho,\beta)\\
 & =\frac{\partial}{\partial\rho}\left(\frac{\partial f}{\partial\rho}(\rho,\beta)\right)\\
 & =\left[\frac{\partial}{\partial\rho}\left(2\cdot\mbox{diag}\left(\beta\right)G_{\rho}^{*}L\left(G_{\rho}\beta-\widehat{z}_{\sigma_{2}}\right)\right)\right](\rho,\beta) 
 \qquad \mbox{(see \eqref{eq:der 2})},
\end{align*}
and, consequently,
\begin{align*}
\mathbb{R}^{\Kt\times \Kt} & \ni \frac{\partial^{2}f}{\partial\rho^{2}}(\rho,\beta)\\
% & =\frac{\partial}{\partial\rho}\left(\frac{\partial f}{\partial\rho}(\rho,\beta)\right)\\
% & =\left[\frac{\partial}{\partial\rho}\left(2\cdot\mbox{diag}\left(\beta\right)G_{\rho}^{*}L\left(G_{\rho}\beta-\widehat{z}_{\sigma_{2}}\right)\right)\right](\rho,\beta) 
% \qquad \mbox{(see \eqref{eq:der 2})}
%  \\
 & =2\cdot\mbox{diag}\left(\beta\right)\cdot\frac{\partial\left(G_{\rho}^{*}L\left(G_{\rho}\beta-\widehat{z}_{\sigma_{2}}\right)\right)}{\partial\rho}(\rho,\beta)\\
 & =2\cdot\mbox{diag}\left(\beta\right)\cdot\frac{\partial\left(G_{\rho}^{*}LG_{\rho}\beta\right)}{\partial\rho}(\rho,\beta)-2\cdot\mbox{diag}(\beta)\cdot\frac{\partial\left(G_{\rho}^{*}L\widehat{z}_{\sigma_{2}}\right)}{\partial\rho}(\rho,\beta)\\
 & =-2\cdot\mbox{diag}\left(\beta\right)\cdot G_{\rho}^{*}L^2 G_{\rho}\cdot\mbox{diag}(\beta)+2\cdot\mbox{diag}\left(\beta\right)\cdot\mbox{diag}\left(G_{\rho}^{*}L^{2}G_{\rho}\beta\right)
 \\
 & \qquad-2\cdot\mbox{diag}(\beta)\cdot\mbox{diag}\left(G_{\rho}^{*}L^{2}\widehat{z}_{\sigma_{2}}\right)
\qquad \qquad \mbox{(see (\ref{eq:nec calc 10}) and (\ref{eq:nec calc 12}))} 
 \\
 & =-2\cdot\mbox{diag}\left(\beta\right)\cdot G_{\rho}^{*}L^{2}G_{\rho}\cdot\mbox{diag}(\beta)\nonumber\\
&\qquad\qquad  +2\cdot\mbox{diag}\left(\beta\right)\cdot\mbox{diag}\left(G_{\rho}^{*}L^{2}\left(G_{\rho}\beta-\widehat{z}_{\sigma_{2}}\right)\right).
\end{align*}
%Above, we used the expression for the gradient from the previous section
%and the estimates in .
In particular, using \eqref{eq:def of betat}, we find that 
\begin{align}
 \frac{\partial^{2}f}{\partial\rho^{2}}\left(\rho_{0},\widetilde{\beta}(\rho_{0})\right) 
& =-2\cdot\mbox{diag}\left(\widetilde{\beta}(\rho_{0})\right)\cdot G_{\rho_{0}}^{*}L^{2}G_{\rho_{0}}\cdot\mbox{diag}\left(\widetilde{\beta}(\rho_{0})\right)\nonumber \\
 & \qquad-2\cdot\mbox{diag}\left(\widetilde{\beta}(\rho_{0})\right)\cdot\mbox{diag}\left(G_{\rho_{0}}^{*}L^{2}\left(I_{N}-\mathcal{P}_{\rho_{0}}\right)\widehat{z}_{\sigma_{2}}\right).
 \label{eq:f r r}
\end{align}
%where we invoked the definition $\widetilde{\beta}(\rho_{0})=G_{\rho_{0}}^{\dagger}\widehat{z}_{\sigma_{2}}$.
As usual, $\mathcal{P}_{\rho_{0}}=G_{\rho_{0}}G_{\rho_{0}}^{\dagger}$.
In a similar fashion, we compute $\frac{\partial^{2}f}{\partial\beta\partial\rho}(\cdot,\cdot)$
by writing that
\begin{align*}
\R^{\Kt\times \Kt}
 & 
\ni \frac{\partial^{2}f}{\partial\beta\partial\rho}(\rho,\beta) 
\nonumber\\
& =\frac{\partial}{\partial\rho}\left(\frac{\partial f}{\partial\beta}(\rho,\beta)\right)\\
 & =\frac{\partial}{\partial\rho}\left(\frac{\partial\left\Vert G_{\rho}\beta-\widehat{z}_{\sigma_{2}}\right\Vert _{2}^{2}}{\partial\beta}(\rho,\beta)\right) \qquad \mbox{(see (21))}
%\eqref{eq:def of f func})
 \\
 & =\left[\frac{\partial}{\partial\rho}\left(2\cdot G_{\rho}^{*}\left(G_{\rho}\beta-\widehat{z}_{\sigma_{2}}\right)\right)\right](\rho,\beta)\\
 & =2\cdot\frac{\partial\left(G_{\rho}^{*}G_{\rho}\beta\right)}{\partial\rho}(\rho,\beta)-\frac{\partial\left(G_{\rho}^{*}\widehat{z}_{\sigma_{2}}\right)}{\partial\rho}(\rho,\beta)\\
 & =
2\cdot G_{\rho}^{*}L^{*}G_{\rho}\cdot\mbox{diag}\left(\beta\right)+ 
 2\cdot\mbox{diag}\left(G_{\rho}^{*}LG_{\rho}\beta\right)\nonumber\\
&\qquad\qquad  -2\cdot\mbox{diag}\left(G_{\rho}^{*}L\widehat{z}_{\sigma_{2}}\right)
 \qquad 
\mbox{(see  (\ref{eq:nec calc 12}) and(\ref{eq:nec calc 11}))} 
 \\
 & =2\cdot G_{\rho}^{*}L^{*}G_{\rho}\cdot\mbox{diag}\left(\beta\right)+2\cdot\mbox{diag}\left(G_{\rho}^{*}L\left(G_{\rho}\beta-\widehat{z}_{\sigma_{2}}\right)\right).
\end{align*}
Therefore,
\begin{align}
& \frac{\partial^{2}f}{\partial\rho\partial\beta}\left(\rho_{0},\widetilde{\beta}(\rho_{0})\right)  \nonumber\\
&
= \left( \frac{\partial^{2}f}{\partial\beta\partial\rho}\left(\rho_{0},\widetilde{\beta}(\rho_{0})\right) 
\right)^*
\nonumber\\
& =2\cdot \mbox{diag}\left(\widetilde{\beta}(\rho_{0})\right)
\cdot 
 G_{\rho_{0}}^{*}L^{*}G_{\rho_{0}}
+2\cdot\mbox{diag}\left(G_{\rho_{0}}^{*}L\left(G_{\rho_{0}}\widetilde{\beta}(\rho_{0})-\widehat{z}_{\sigma_{2}}\right)\right)\nonumber \\
 & =2\cdot 
\mbox{diag}\left(\widetilde{\beta}(\rho_{0})\right)
\cdot  
 G_{\rho_{0}}^{*}L^{*}G_{\rho_{0}}
 -2\cdot\mbox{diag}\left(G_{\rho_{0}}^{*}L\left(I_{N}-\mathcal{P}_{\rho_{0}}\right)\widehat{z}_{\sigma_{2}}\right).\label{eq:f b r}
\end{align}
Also,
\begin{align*}
\frac{\partial^{2}f}{\partial\beta^{2}}(\rho,\beta) 
& =\frac{\partial}{\partial\beta}\left(\frac{\partial\left\Vert G_{\rho}\beta-\widehat{z}_{\sigma_{2}}\right\Vert _{2}^{2}}{\partial\beta}(\rho,\beta)\right)
\qquad \mbox{(see (21))}
%\eqref{eq:def of f func}
\\
& =2\cdot\frac{\partial\left(G_{\rho}^{*}\left(G_{\rho}\beta-\widehat{z}_{\sigma_{2}}\right)\right)}{\partial\beta}(\rho,\beta)\\
 & =2G_{\rho}^{*}G_{\rho},
\end{align*}
%and, consequently, 
%\begin{align*}
%\frac{\partial^{2}f}{\partial\beta^{2}}(\rho,\beta) 
%%& =\frac{\partial}{\partial\beta}\left(\frac{\partial\left\Vert G_{\rho}\beta-\widehat{z}_{\sigma_{2}}\right\Vert _{2}^{2}}{\partial\beta}(\rho,\beta)\right)
%%\qquad \mbox{(see \eqref{eq:def of f func})}
%%\\
%& =2\cdot\frac{\partial\left(G_{\rho}^{*}\left(G_{\rho}\beta-\widehat{z}_{\sigma_{2}}\right)\right)}{\partial\beta}(\rho,\beta)\\
% & =2G_{\rho}^{*}G_{\rho},
%\end{align*}
%where the last line holds because all the inner products in the Gramian
%matrix $G_{\rho}^{*}G_{\rho}$ are real-valued. 
and, clearly,
\begin{equation}
\frac{\partial^{2}f}{\partial\beta^{2}}\left(\rho_{0},\widetilde{\beta}(\rho_{0})\right)=2G_{\rho_{0}}^{*}G_{\rho_{0}}.\label{eq:f b b}
\end{equation}
Lastly, in order to compute $\frac{\partial\widetilde{\beta}}{\partial\rho}(\rho)$,
recall from \eqref{eq:def of betat} that
\begin{align*}
\widetilde{\beta}(\rho) & =G_{\rho}^{\dagger}\widehat{z}_{\sigma_{2}}=\left(G_{\rho}^{*}G_{\rho}\right)^{-1}G_{\rho}^{*}\widehat{z}_{\sigma_{2}},
\end{align*}
or, equivalently, 
\[
G_{\rho}^{*}G_{\rho}\cdot\widetilde{\beta}(\rho)=G_{\rho}^{*}\widehat{z}_{\sigma_{2}}.
\]
The $i$th row of the above identity reads
\[
\left(G_{\rho}^{*}\cdot G_{\rho}[:,i]\right)^{*}\widetilde{\beta}(\rho)=\left(G_{\rho}[:,i]\right)^{*}\widehat{z}_{\sigma_{2}}.
\]
Taking derivatives of both sides (with respect to $\rho$) yields
\[
\left(\frac{\partial\left(G_{\rho}^{*}\cdot G_{\rho}[:,i]\right)}{\partial\rho}(\rho)\right)^{*}\widetilde{\beta}(\rho)+ 
\left(\frac{\partial\widetilde{\beta}}{\partial\rho}(\rho) \right)^*
G_{\rho}^{T}\cdot\overline {G_{\rho}[:,i]} 
=\left(\frac{\partial G_{\rho}[:,i]}{\partial\rho}(\rho)\right)^{*}\widehat{z}_{\sigma_{2}},
\]
where $a^T$ is the transpose of vector $a$, and $\overline{b}$ denotes the complex conjugate of scalar $b$. After rearranging to isolate the target term $\frac{\partial\widetilde{\beta}}{\partial\rho}(\rho)$,
we continue to simplify the above identity:
\begin{align*}
 & \left(\frac{\partial\widetilde{\beta}}{\partial\rho}(\rho) \right)^*
G_{\rho}^{T}\cdot\overline {G_{\rho}[:,i]}  \nonumber \\
 & =\left(\frac{\partial G_{\rho}[:,i]}{\partial\rho}(\rho)\right)^{*}\widehat{z}_{\sigma_{2}}-\left(\frac{\partial\left(G_{\rho}^{*}\cdot G_{\rho}[:,i]\right)}{\partial\rho}(\rho)\right)^{*}\widetilde{\beta}(\rho) \nonumber\\
 & =\left(\frac{\partial G_{\rho}[:,i]}{\partial\rho}(\rho)\right)^{*}\widehat{z}_{\sigma_{2}}\nonumber\\
& \qquad \qquad  
 -\left(\frac{\partial\left(G_{\rho}^{*}G_{\rho}\cdot e_{i}\right)}{\partial\rho}(\rho)\right)^{*}\widetilde{\beta}(\rho), \qquad \left(e_i: \,i\mbox{th canonical vector in }\R^{\Kt} \right)
\end{align*}
and, consequently, 
\begin{align}
 & \left(\frac{\partial\widetilde{\beta}}{\partial\rho}(\rho) \right)^*
G_{\rho}^{T}\cdot\overline {G_{\rho}[:,i]}  \nonumber \\
% & =\left(\frac{\partial G_{\rho}[:,i]}{\partial\rho}(\rho)\right)^{*}\widehat{z}_{\sigma_{2}}-\left(\frac{\partial\left(G_{\rho}^{*}\cdot G_{\rho}[:,i]\right)}{\partial\rho}(\rho)\right)^{*}\widetilde{\beta}(\rho) \nonumber\\
% & =\left(\frac{\partial G_{\rho}[:,i]}{\partial\rho}(\rho)\right)^{*}\widehat{z}_{\sigma_{2}}\nonumber\\
%& \qquad \qquad  
% -\left(\frac{\partial\left(G_{\rho}^{*}G_{\rho}\cdot e_{i}\right)}{\partial\rho}(\rho)\right)^{*}\widetilde{\beta}(\rho) \qquad \left(e_i: \,i\mbox{th canonical vector in }\R^{\Kt} \right)
% \nonumber\\
 & =\left(\left(L^{*}\cdot G_{\rho}[:,i]\right)^{*}\widehat{z}_{\sigma_{2}}\right)\cdot e_{i}\nonumber\\
&\qquad \qquad
 - \left(G_{\rho}^{*}L^{*}G_{\rho}\cdot\mbox{diag}(e_{i})  +\mbox{diag}\left(G_{\rho}^{*}LG_{\rho}e_{i}\right)\right)^{*}\widetilde{\beta}(\rho)
\quad \mbox{(see \eqref{eq:grad mid s1} and \eqref{eq:nec calc 11})} 
 \nonumber\\  
 & =\left(\left(L^{*}\cdot G_{\rho}[:,i]\right)^{*}\widehat{z}_{\sigma_{2}}\right)\cdot e_{i}-\mbox{diag}(e_{i})\cdot G_{\rho}^{*}LG_{\rho}\cdot\widetilde{\beta}(\rho)-\mbox{diag}\left(e_{i}^{*}G_{\rho}^{*}L^{*}G_{\rho}\right)\cdot\widetilde{\beta}(\rho)
 \nonumber\\
 & =\left(\left(L^{*}\cdot G_{\rho}[:,i]\right)^{*}\widehat{z}_{\sigma_{2}}\right)\cdot e_{i}-\left(\left(G_{\rho}[:,i]\right)^{*}LG_{\rho}\cdot\widetilde{\beta}(\rho)\right)\cdot e_{i}-\mbox{diag}\left(\left(G_{\rho}[:,i]\right)^{*}L^{*}G_{\rho}\right)\cdot\widetilde{\beta}(\rho)
 \nonumber\\
 & =\left(\left(G_{\rho}[:,i]\right)^{*}L\widehat{z}_{\sigma_{2}}\right)\cdot e_{i}-\left(\left(G_{\rho}[:,i]\right)^{*}LG_{\rho}\cdot\widetilde{\beta}(\rho)\right)\cdot e_{i}-\mbox{diag}(\widetilde{\beta}(\rho))\cdot G_{\rho}^{*}L{G_{\rho}[:,i]}
 \nonumber\\
 & =\left(\left(G_{\rho}[:,i]\right)^{*}L\left(\widehat{z}_{\sigma_{2}}-G_{\rho}\cdot\widetilde{\beta}(\rho)\right)\right)\cdot e_{i}-\mbox{diag}(\widetilde{\beta}(\rho))\cdot G_{\rho}^{*}L {G_{\rho}[:,i]}.
 \label{eq:pre stack}
\end{align}
The second to last line above uses the identity $\mbox{diag}(a)\cdot b = \mbox{diag}(b) \cdot a$ for vectors $a$ and $b$ of the same length. By stacking the columns for all values of $i$, we obtain that
\begin{align*}
& \left(\frac{\partial\widetilde{\beta}}{\partial\rho}(\rho) \right)^*
G_{\rho}^{T}\overline {G_{\rho}} \\
& =\left[\begin{array}{ccc}
\ddots\\
 & \left(G_{\rho}[:,i]\right)^{*}L\left(\widehat{z}_{\sigma_{2}}-G_{\rho}\cdot\widetilde{\beta}(\rho)\right)\\
 &  & \ddots
\end{array}\right]-
\mbox{diag}(\widetilde{\beta}(\rho))\cdot G_{\rho}^{*}L {G_{\rho}}
\qquad 
 \mbox{(see \eqref{eq:pre stack})}
\\
 & =\mbox{diag}\left(G_{\rho}^{*}L\left(\widehat{z}_{\sigma_{2}}-G_{\rho}\cdot\widetilde{\beta}(\rho)\right)\right)-\mbox{diag}(\widetilde{\beta}(\rho))\cdot G_{\rho}^{*}L {G_{\rho}},
\end{align*}
or
\[
G_{\rho}^{*}G_{\rho}\cdot\frac{\partial\widetilde{\beta}}{\partial\rho}(\rho)=\mbox{diag}\left(G_{\rho}^{*}L\left(\widehat{z}_{\sigma_{2}}-G_{\rho}\cdot\widetilde{\beta}(\rho)\right)\right)-G_{\rho}^{*}L^{*}G_{\rho}\cdot\mbox{diag}(\widetilde{\beta}(\rho)).
\]
\begin{comment}
After noticing that $G_{\rho}^{T}L\overline{G_{\rho}}=\overline{G_{\rho}^{*}L^{*}G_{\rho}}=G_{\rho}^{*}L^{*}G_{\rho}$
(because the entries of $G_{\rho}^{*}L^{*}G_{\rho}$ correspond to
the inner products of rea-valued signals on $\mathbb{I}$),
\end{comment}
We conclude that
\begin{align*}
\frac{\partial\widetilde{\beta}}{\partial\rho}(\rho) & =\left(G_{\rho}^{*}G_{\rho}\right)^{-1}\mbox{diag}\left(G_{\rho}^{*}L\left(\widehat{z}_{\sigma_{2}}-G_{\rho}\cdot\widetilde{\beta}(\rho)\right)\right)-\left(G_{\rho}^{*}G_{\rho}\right)^{-1}G_{\rho}^{*}L^{*}G_{\rho}\cdot\mbox{diag}(\widetilde{\beta}(\rho))\\
 & =\left(G_{\rho}^{*}G_{\rho}\right)^{-1}\mbox{diag}\left(G_{\rho}^{*}L\left(\widehat{z}_{\sigma_{2}}-G_{\rho}\cdot\widetilde{\beta}(\rho)\right)\right)-G_{\rho}^{\dagger}L^{*}G_{\rho}\cdot\mbox{diag}(\widetilde{\beta}(\rho)),
\end{align*}
and, in particular,
\begin{align}
\frac{\partial\widetilde{\beta}}{\partial\rho}(\rho_{0}) & =\left(G_{\rho_{0}}^{*}G_{\rho_{0}}\right)^{-1}\mbox{diag}\left(G_{\rho_{0}}^{*}L\left(I_{N}-\mathcal{P}_{\rho_{0}}\right)\widehat{z}_{\sigma_{2}}\right)-G_{\rho_{0}}^{\dagger}L^{*}G_{\rho_{0}}\cdot\mbox{diag}(\widetilde{\beta}(\rho_{0})).\label{eq:beta tilde}
\end{align}
To summarize, we finished computing all the quantities involved in \eqref{eq:analysis 2} (see (\ref{eq:f r r}-\ref{eq:f b b}), and \eqref{eq:beta tilde}). 
We can simplify the above expression for the Hessian of $F(\cdot)$
by noting that the second and third summands in (\ref{eq:analysis 2})
differ only by a constant factor. More specifically, from \eqref{eq:f b r} and \eqref{eq:beta tilde}, it follows that 
\begin{align*}
&    \frac{\partial^{2}f}{\partial\rho\partial\beta}\left(\rho_{0},\widetilde{\beta}(\rho_{0})\right)\cdot\frac{\partial\widetilde{\beta}}{\partial\rho}(\rho_{0})\\
 & =-2\left[\mbox{diag}\left(\widetilde{\beta}(\rho_{0})\right)G_{\rho_{0}}^{*}LG_{\rho_{0}}-\mbox{diag}\left(G_{\rho_{0}}^{*}L\left(I_{N}-\mathcal{P}_{\rho_{0}}\right)\widehat{z}_{\sigma_{2}}\right)\right]\\
 & \qquad\cdot\left(G_{\rho_{0}}^{*}G_{\rho_{0}}\right)^{-1}\cdot\left[G_{\rho_{0}}^{*}L^{*}G_{\rho_{0}}\cdot\mbox{diag}(\widetilde{\beta}(\rho_{0}))-\mbox{diag}\left(G_{\rho_{0}}^{*}L\left(I_{N}-\mathcal{P}_{\rho_{0}}\right)\widehat{z}_{\sigma_{2}}\right)\right],\\
 & =-\left(\frac{\partial\widetilde{\beta}}{\partial\rho}(\rho_{0})\right)^{*}\cdot\frac{\partial^{2}f}{\partial\beta^{2}}\left(\rho_{0},\widetilde{\beta}(\rho_{0})\right)\cdot\frac{\partial\widetilde{\beta}}{\partial\rho}(\rho_{0}),
\end{align*}
so that
\begin{align}
\frac{\partial^{2}F}{\partial\rho^{2}}(\rho_{0}) & =\frac{\partial^{2}f}{\partial\rho^{2}}\left(\rho_{0},\widetilde{\beta}(\rho_{0})\right)+\frac{\partial^{2}f}{\partial\rho\partial\beta}\left(\rho_{0},\widetilde{\beta}(\rho_{0})\right)\cdot\frac{\partial\widetilde{\beta}}{\partial\rho}(\rho_{0})\nonumber \\
 & =-2\cdot\mbox{diag}\left(\widetilde{\beta}(\rho_{0})\right)\cdot G_{\rho_{0}}^{*}L^{2}G_{\rho_{0}}\cdot\mbox{diag}\left(\widetilde{\beta}(\rho_{0})\right)\nonumber \\
 & \qquad-2\cdot\mbox{diag}\left(\widetilde{\beta}(\rho_{0})\right)\cdot\mbox{diag}\left(G_{\rho_{0}}^{*}L^{2}\left(I_{N}-\mathcal{P}_{\rho_{0}}\right)\widehat{z}_{\sigma_{2}}\right)\nonumber \\
 & \qquad-2\left[\mbox{diag}\left(\widetilde{\beta}(\rho_{0})\right)G_{\rho_{0}}^{*}LG_{\rho_{0}}-\mbox{diag}\left(G_{\rho_{0}}^{*}L\left(I_{N}-\mathcal{P}_{\rho_{0}}\right)\widehat{z}_{\sigma_{2}}\right)\right]\nonumber \\
 & \qquad\cdot\left(G_{\rho_{0}}^{*}G_{\rho_{0}}\right)^{-1}\cdot\left[G_{\rho_{0}}^{*}L^{*}G_{\rho_{0}}\cdot\mbox{diag}(\widetilde{\beta}(\rho_{0}))-\mbox{diag}\left(G_{\rho_{0}}^{*}L\left(I_{N}-\mathcal{P}_{\rho_{0}}\right)\widehat{z}_{\sigma_{2}}\right)\right.,\label{eq:Hess app}
\end{align}
which might be simplified slightly further.

\section{Ingredients for Computing $\frac{\partial F}{\partial\rho}(\cdot)$
and $\frac{\partial^{2}F}{\partial\rho^{2}}(\cdot)$\label{sec:analysis}}

Here, we establish (\ref{eq:analysis 1}) and (\ref{eq:analysis 2}).
Fix $\rho_{0}$ and suppose that $f(\cdot,\cdot)$ is analytic, i.e.,
has convergent power series everywhere . Moreover, assume that
\[
\widetilde{\beta}(\rho):=\arg\min_{\beta}f(\rho,\beta)
\]
is always well-defined, i.e., $\widetilde{\beta}(\rho)$ is the unique
minimizer of $f(\rho,\cdot)$ for every $\rho$. In particular, by implicit function theorem, $\widetilde{\beta}(\rho)$
is smooth (i.e., infinitely differentiable with respect to $\rho$).
 We wish to calculate the first and second derivatives
of $F(\cdot)$, the map that takes $\rho$ to $F(\rho)=\min_{\beta}f(\rho,\beta)=f(\rho,\widetilde{\beta}(\rho))$.
(The existence of these derivatives is established along the way.)

To that end, we note that the following expansion holds for small
enough $|\rho-\rho_{0}|$ and $|\widetilde{\beta}(\rho)-\widetilde{\beta}(\rho_{0})|$:
\begin{align*}
F(\rho) & =f\left(\rho,\widetilde{\beta}(\rho)\right)\\
 & =f\left(\rho_{0},\widetilde{\beta}(\rho_{0})\right)+\left(\rho-\rho_{0}\right)^{T}\cdot\frac{\partial f}{\partial\rho}\left(\rho_{0},\widetilde{\beta}(\rho_{0})\right)+\left(\widetilde{\beta}(\rho)-\widetilde{\beta}(\rho_{0})\right)^{T}\cdot\frac{\partial f}{\partial\beta}\left(\rho_{0},\widetilde{\beta}(\rho_{0})\right)\\
 & \qquad+\frac{1}{2}\left(\rho-\rho_{0}\right)^{T}\cdot\frac{\partial^{2}f}{\partial\rho^{2}}\left(\rho_{0},\widetilde{\beta}(\rho_{0})\right)\cdot\left(\rho-\rho_{0}\right)\\
 & \qquad+\left(\rho-\rho_{0}\right)^{T}\cdot\frac{\partial^{2}f}{\partial\rho\partial\beta}\left(\rho_{0},\widetilde{\beta}(\rho_{0})\right)\cdot\left(\widetilde{\beta}(\rho)-\widetilde{\beta}(\rho_{0})\right)\\
 & \qquad+\frac{1}{2}\left(\widetilde{\beta}(\rho)-\widetilde{\beta}(\rho_{0})\right)^{T}\cdot\frac{\partial^{2}f}{\partial^{2}\beta}\left(\rho_{0},\widetilde{\beta}(\rho_{0})\right)\cdot\left(\widetilde{\beta}(\rho)-\widetilde{\beta}(\rho_{0})\right)+o_{2}.
\end{align*}
Above, $o_{2}$ comprises of negligible terms. Note that $f(\rho_{0},\widetilde{\beta}(\rho_{0}))=F(\rho_{0})$
and that 
$$
\frac{\partial f}{\partial\beta}(\rho_{0},\widetilde{\beta}(\rho_{0}))=0, 
$$ 
because $\widetilde{\beta}(\rho_{0})$ minimizes $f(\rho_{0},\cdot)$.
On the other hand, because $\widetilde{\beta}(\rho)$ is a smooth
function of $\rho$, $\widetilde{\beta}(\rho)-\widetilde{\beta}(\rho_{0})=\frac{\partial\widetilde{\beta}}{\partial\rho}(\rho_{0})\cdot\left(\rho-\rho_{0}\right)+o_{1}$
for small enough $|\rho-\rho_{0}|$. Here, $o_{1}$ collects the negligible
terms.Therefore, the above expansion simplifies to%
\begin{comment}
\begin{align*}
F(\rho) & =F(\rho_{0})+\left(\rho-\rho_{0}\right)^{T}\cdot\frac{\partial f}{\partial\rho}\left(\rho_{0},\widetilde{\beta}(\rho_{0})\right)\\
 & \qquad+\frac{1}{2}\left(\rho-\rho_{0}\right)^{T}\cdot\frac{\partial^{2}f}{\partial\rho^{2}}\left(\rho_{0},\widetilde{\beta}(\rho_{0})\right)\cdot\left(\rho-\rho_{0}\right)\\
 & \qquad+\left(\rho-\rho_{0}\right)^{T}\cdot\frac{\partial^{2}f}{\partial\rho\partial\beta}\left(\rho_{0},\widetilde{\beta}(\rho_{0})\right)\cdot\left(\widetilde{\beta}(\rho)-\widetilde{\beta}(\rho_{0})\right)\\
 & \qquad+\frac{1}{2}\left(\widetilde{\beta}(\rho)-\widetilde{\beta}(\rho_{0})\right)^{T}\cdot\frac{\partial^{2}f}{\partial^{2}\beta}\left(\rho_{0},\widetilde{\beta}(\rho_{0})\right)\cdot\left(\widetilde{\beta}(\rho)-\widetilde{\beta}(\rho_{0})\right)+o_{2}.
\end{align*}
 We therefore finally obtain that
\end{comment}
{}
\begin{align*}
F(\rho) & =F(\rho_{0})+\left(\rho-\rho_{0}\right)^{T}\cdot\frac{\partial f}{\partial\rho}\left(\rho_{0},\widetilde{\beta}(\rho_{0})\right)+\frac{1}{2}\left(\rho-\rho_{0}\right)^{T}\cdot\frac{\partial^{2}f}{\partial\rho^{2}}\left(\rho_{0},\widetilde{\beta}(\rho_{0})\right)\cdot\left(\rho-\rho_{0}\right)\\
 & \qquad+\left(\rho-\rho_{0}\right)^{T}\cdot\frac{\partial^{2}f}{\partial\rho\partial\beta}\left(\rho_{0},\widetilde{\beta}(\rho_{0})\right)\cdot\frac{\partial\widetilde{\beta}}{\partial\rho}(\rho_{0})\cdot\left(\rho-\rho_{0}\right)\\
 & \qquad+\frac{1}{2}(\rho-\rho_{0})^{T}\cdot\left(\frac{\partial\widetilde{\beta}}{\partial\rho}(\rho_{0})\right)^{T}\cdot\frac{\partial^{2}f}{\partial\beta^{2}}\left(\rho_{0},\widetilde{\beta}(\rho_{0})\right)\cdot\frac{\partial\widetilde{\beta}}{\partial\rho}(\rho_{0})\cdot\left(\rho-\rho_{0}\right)+o_{2}.
\end{align*}
We conclude that
\[
\frac{\partial F}{\partial \rho}(\rho_{0})=\frac{\partial f}{\partial\rho}\left(\rho_{0},\widetilde{\beta}(\rho_{0})\right),
\]
\begin{align*}
\frac{\partial^{2}F}{\partial\rho^{2}}(\rho_{0}) & =\frac{\partial^{2}f}{\partial\rho^{2}}\left(\rho_{0},\widetilde{\beta}(\rho_{0})\right)+2\cdot \frac{\partial^{2}f}{\partial\rho\partial\beta}\left(\rho_{0},\widetilde{\beta}(\rho_{0})\right)\cdot\frac{\partial\widetilde{\beta}}{\partial\rho}(\rho_{0})\\
 & \qquad+\left(\frac{\partial\widetilde{\beta}}{\partial\rho}(\rho_{0})\right)^{T}\cdot\frac{\partial^{2}f}{\partial\beta^{2}}\left(\rho_{0},\widetilde{\beta}(\rho_{0})\right)\cdot\frac{\partial\widetilde{\beta}}{\partial\rho}(\rho_{0}).
\end{align*}
Note that, despite the nonsymmetric appearance of the second term
in the Hessian, $\frac{d^{2}F}{d\rho^{2}}(\rho_{0})\in\mathbb{R}^{K\times K}$
is indeed a symmetric matrix.

\end{document}